\documentclass[12pt]{article}
\usepackage[left=.8in,top=1in,right=.8in,bottom=.5in,head=.1in,nofoot]{geometry}
\setlength{\footskip}{24pt}
\usepackage{multirow}
\usepackage{setspace}
\doublespacing
\usepackage{rotating}

\usepackage{booktabs}
\usepackage{hyperref}
\setlength{\heavyrulewidth}{0.08em}   
\setlength{\lightrulewidth}{0.05em}   
\setlength{\aboverulesep}{0pt}        
\setlength{\belowrulesep}{0pt}        
\setlength{\defaultaddspace}{0pt}     
\renewcommand{\arraystretch}{0.85}

\usepackage{amsmath, amssymb, amsfonts, bm}
\usepackage{latexsym}
\usepackage{bbm}
\usepackage{titlesec}        
\usepackage{fancyhdr}        

\setcounter{secnumdepth}{3}  
\setcounter{tocdepth}{2}     




\usepackage{amsthm}

\newtheorem{theorem}{Theorem}

\newtheorem{lemma}{Lemma}
\newtheorem{corollary}{Corollary}

\theoremstyle{definition}

\theoremstyle{remark}

\usepackage{graphicx}
\usepackage{subcaption}
\usepackage{float}
\usepackage{booktabs}
\usepackage{siunitx}

\usepackage{algorithm}
\usepackage{algpseudocode}

\usepackage{xcolor}

\usepackage{natbib}
\bibpunct{(}{)}{;}{a}{}{,}


\usepackage{etoolbox}
\apptocmd{\sloppy}{\hbadness 10000\relax}{}{}



\usepackage{titlesec}

\titleformat*{\section}{\bf\large}

\usepackage{authblk}

\usepackage{enumitem}

\title{Finite-Sample Valid Rank Inference for a Broad Class of Statistical and Machine Learning Models} 
\title{Finite-Sample Valid Rank Confidence Sets for a Broad Class of Statistical and Machine Learning Models}
\author{Onrina Chandra, and Min-ge Xie\footnote{Onrina Chandra (email: oc152@stat.rutgers.edu) is a graduate student and Min-ge Xie (email: mxie@stat.rutgers.edu) are  Professors, Department of Statistics, Rutgers, The State University of New Jersey, Piscataway, NJ 08854.  The research is supported in part by NSF grants DMS2015373, DMS2027855, DMS2311064 and DMS-2319260.}
}

\date{}

\begin{document}

\setlength{\parindent}{0pt}
\setlength{\parskip}{6pt}

\thispagestyle{empty}

\setcounter{page}{1}
 \maketitle
\vspace{-10mm}
\begin{abstract}
\noindent
Ranking populations such as institutions based on certain characteristics is often of interest, and these ranks are typically estimated using samples drawn from the populations. Due to sample randomness, it is important to quantify the uncertainty associated with the estimated ranks. This becomes crucial when latent characteristics are poorly separated and where many rank estimates may be incorrectly ordered. Understanding uncertainty can help quantify and mitigate these issues and provide a fuller picture. However, this task is especially challenging because the rank parameters are discrete and the central limit theorem does not apply to the rank estimates. 
In this article, we propose a Repro Samples Method to address this nontrivial 
inference problem by developing a confidence set for the true, unobserved population ranks. This method provides finite-sample coverage guarantees and is broadly applicable to ranking problems. 
The effectiveness of the method is illustrated and compared with several published large sample ranking approaches using simulation studies and real data examples involving samples both from traditional statistical models and modern data science algorithms.
\end{abstract}
\noindent 
Key words: Inference on discrete parameter space; Finite-sample performance guarantee;
Repro sample method; Latent model; Rank of performance. 

\section{Introduction}

The ranking performance of institutions such as universities, hospitals or sports teams, plays a crucial role in shaping decisions across many areas. Prospective students choose colleges based on league tables, patients rely on hospital ratings when seeking care and sponsors follow conference standings to gauge sport teams performance. These rankings are almost always derived from sampled data, imperfect measurements, or subjective evaluations, which introduce variability and potential biases. A university’s placement in a league table, for example, may shift simply because of small fluctuations in survey responses. Similarly, a hospital’s star rating can move up or down if a few outcome metrics change. Ignoring this uncertainty can mislead decisions; students may choose a school based on noise, or resources may go to a hospital whose top rank is unstable.
Therefore, it is crucial to develop statistical tools that not only estimate rank but also quantify its uncertainty. Confidence intervals help distinguish real differences from random variation, promoting transparent, evidence-based decisions and avoiding the false certainty of exact ranks.

 In this paper, we aim to rank $K$  populations $\mathcal{P}_1,\mathcal{P}_2,...\mathcal{P}_K$ that are defined through a characteristic described by a set of unknown (numeric or non-numeric) feature values $\bm{\eta} = (\eta_1,\eta_2,...,\eta_K),^\top$ where $\eta_k$ is a set of features associated with the $k^{th}$ population $\mathcal{P}_k$, for $k=1,2,..,K$, and $\bm{\eta} \in \Omega$, an arbitrary feature space. More specifically, we assume the populations are ranked based on a characteristic parameter $\bm{\theta} = (\theta_1, \theta_2, \ldots, \theta_K)^\top$ of $K$ elements, where $\bm{\theta} = \zeta(\bm{\eta})
$ is a function of features  $\bm{\eta}$ for a mapping function $\zeta(\cdot)$ from $\Omega \rightarrow \Theta  \subseteq \mathbb{R}^K$. Our target is the ranks of the $K$ populations, denoted by  $\bm{R}$, is determined by the ranking order of the $K$ elements~of~$\bm\theta$: 
\begin{equation}
\label{eq:rank and theta}
\bm{R} = (r_1 , \ldots, r_K )=\mathcal{S}(\bm{\theta}) =\mathcal{S}(\theta_1, \ldots, \theta_{K}) \in [K]^{{K}}
\end{equation}
where the rank of the $k^{th}$ population is defined as 
$r_k  = \sum_{1 \leq i \leq K, i \not = k} {\bf 1} (\theta_i \leq \theta_k)$ and  $\mathcal{S}(\cdot)$ is the corresponding mapping function from $\theta  \to [K]^{K}$, where the notation $[K]$ denotes the set of the first $K$ positive integers, $\{1,2,\dots,K\}$ throughout the paper. 

In practice, the $\theta_k$ values are unknown, 
but we have sample data, say ${\cal D}$, collected from the $K$ populations. The population ranks $\bm R$ are often estimated by replacing $\theta_k$'s in (\ref{eq:rank and theta})  with their estimate 
$\hat{\theta}_k=\hat{\theta}_k({\cal D})$'s  using the sample data and consequently $\widehat{\bm{R}} 
= (\hat r_1 , \ldots, \hat r_K )= \mathcal{S}(\widehat{\bm{\theta}}) $,  where 
$\hat{r}_k  = \sum_{1 \leq i \leq K, i \not = k} {\bf 1} (\hat{\theta}_i \leq \hat{\theta}_k)$ and $\widehat{\bm{\theta}} = (\hat \theta_1, \ldots, \hat \theta_{K})$. 
However, the uncertainty and error of estimators can lead to discrepancies between the ordering derived from  $\hat{\theta}_k$’s and the actual ordering dictated by the $\theta_k$ values, particularly when some of the underlying $\theta_k$ values are closely clustered. This potential for misordering highlights the importance of developing methods to provide confidence sets instead of just point estimates of ranks. For clarity and mathematics rigor, throughout the paper, we
use ${\cal D}^{\mathrm{obs}}$ to denote the observed (nonrandom) sample of ${\cal D}$. 
We also use $\bm{\eta}^{(0)}$ and $\bm{\theta}^{(0)}$ to denote the true values of $\bm{\eta}$ and $\bm{\theta}$. We assume throughout the paper the true scores $\theta^{(0)}_1,\dots,\theta^{(0)}_K$ are distinct although they can be very close to each other. Thus, we seek to infer about the parameter of interest which is the true population rank $\bm{R}^{(0)}$ using the observed sample data ${\cal D}^{\mathrm{obs}}.$ 
\begin{equation}
\label{eq:rank and theta true}
\bm{R}^{(0)} = (r^{(0)}_1 , \ldots, r^{(0)}_K )=\mathcal{S}(\bm{\theta}^{(0)}) =\mathcal{S}(\theta^{(0)}_1, \ldots, \theta^{(0)}_{{K}})
\end{equation}

\subsection{Challenges in Rank Inference}

Rank inference problems differ from most statistical inference problems because the parameter space of interest is discrete rather than continuous, which poses significant challenges for applying standard inferential tools.
For instance, inference tools based on large-sample methods, such as the Central Limit Theorem, are generally not applicable because the regularity conditions required for the theorem, do not hold for point estimators of ranks. Frequentist approaches, such as bootstrap methods based on rank estimators, falter in this setting because the limiting distributions of discrete rank estimators are often unknown, and the bootstrap Central Limit Theorem does not apply. As noted in \cite{hall2010}, even the limiting distribution of $\max_k \hat{\theta}_k$ is elusive, complicating the development of valid inference procedures.
 Bayesian methods, though capable of generating credible sets for ranks, also encounter fundamental difficulties. Particularly, credible sets obtained using
the posteriors distributions perform poorly in terms of covering the true rank in repeated runs because the Bernstein–von Mises theorem does not extend to discrete parameter spaces. Moreover, the performance of the credible sets depends critically on the choice of priors, for which there is no commonly agreed-upon choice. Again, when a Bayesian method enforces continuous prior on the latent parameter space $\bm{\Theta}$, it breaks any potential ties that may exist between the true population parameters. Our work addresses this critical gap by sidestepping these issues and developing a novel procedure to produce finite-samples valid confidence sets for the true population ranks.  

\subsection{Main goal of the paper}

Given observed data $\mathcal D^{\mathrm{obs}}$, we aim to construct a set 
$\tilde{\Gamma}_{\mathcal{V}_\alpha}(\mathcal D^{\mathrm{obs}})\subset [K]^K$ of rank vectors that satisfies the finite-sample coverage constraint $\mathbb P \!\big(\bm R^{(0)} \in \tilde{\Gamma}_{\mathcal{V}_\alpha}(\mathcal D)\big) \ge 1-\alpha,$ for a prescribed confidence level $\alpha \in (0,1)$. This formulation parallels the construction of finite-sample valid prediction sets for 
discrete outcomes, but the inferential target here is an ordering $S(\bm\theta)$ rather than 
a continuous parameter. Within the Repro-Samples framework, our method extends the notion of exact coverage from $\bm \Theta$ to its induced rank space.

\subsection{Existing Approaches to Rank Inference}

The task of quantifying uncertainty in the ordering of latent parameters $\theta,\dots,\theta_K$ has progressed through a network of interconnected breakthroughs. 
Early work by \cite{goldstein} recognized that posterior uncertainty in hierarchical models should induce uncertainty in ranks. 
They aimed to critically examine the statistical challenges of “league tables” for comparing institutional performance using hierarchical models with shrunken estimates of institution-specific effects $\theta_k$, applying normal priors and Gibbs-sampler-based intervals. 
More recently, to provide joint credible intervals for the ranks under a Bayesian framework, \cite{datta2024credibledistributionsoverallranking} compared an unstructured flat prior $\pi(\theta)$ and the Fay–Herriot prior on $\theta_k$, using MCMC to produce full posterior rank‐distribution matrices.  
\cite{gu2023invidious} recast ranking as a compound‐decision problem by modeling $\theta_k$ through a mixing distribution $G$, estimated via nonparametric MLE, thereby bridging empirical Bayes estimation and predictive ranking.

Unlike Bayesian credible intervals, which reflect posterior belief, frequentist methods focus on producing confidence sets that achieve the claimed coverage probability. 
Early frequentist approaches recognized that naive “plug-in” ranks $\hat{r}_k = 1 + \sum_{i \neq k} \mathbf{1}\{\hat{\theta}_i < \hat{\theta}_k\}$ break down in the presence of ties or near-ties. 
\cite{xie2009} addressed this by replacing the discrete indicator with a smooth approximation. 
They assumed root-$n$ estimators $\hat{\theta}_{kn} = \theta_k + n^{-1/2}Z_{kn} + o_P(1)$, where $Z_{kn}$ converges in law to $N(0,\sigma_k^2)$, and defined a smoothed version $\hat{R}_{kn}^{\mathrm{smooth}} = 1 + \sum_{i \neq k} G_n(\hat{\theta}_{in} - \hat{\theta}_{kn})$ of the plug-in ranks, where $G_n$ approximates the step function. 
By tuning the smoothing bandwidth, they proved that $\hat{R}_{kn}^{\mathrm{smooth}}$ is consistent for $R_k$ and developed a specialized bootstrap that remains valid even with near-ties among the $\theta_k^{(0)}$. 
Simultaneously, \cite{hall2010} demonstrated the failure of the naive $n$-out-of-$n$ bootstrap for discrete ranks (where $n$ is the average sample size across the $K$ populations) and advocated an $m$-out-of-$n$, $m<n$ resampling scheme to restore consistency.

A parallel strand developed exact finite‐sample constructions via optimization and likelihood. 
\cite{liu2022lagrangian} reformulated rank‐confidence‐interval construction as an integer program, first forming normal confidence intervals for each contrast 
$\theta_k - \theta_i \in \hat{\theta}_k - \hat{\theta}_i \pm z_{\alpha/2}\sqrt{\sigma_k^2/n_k + \sigma_i^2/n_i}$, 
then applying a Lagrangian relaxation to find the minimal and maximal ranks consistent with all these intervals. 
\cite{almohamad2022simultaneous} proposed simultaneous $1-\alpha$ confidence intervals for the true ranks $r_k$ in independent Gaussian samples $Y_k \sim N(\theta_k,\sigma_k^2)$ using Tukey’s honest significant difference method. 
which could be conservative when the $\theta_i$’s are close together. 
Addressing a selection problem, \cite{andrews2019inference} studied the “winner’s curse” that arises when one first selects a parameter $\hat{a}$ by optimizing over a finite set and then conducts inference on its effect $\theta(\hat{a})$. 
They derived conditional truncated‐normal confidence intervals for the “winner” $\hat{\theta} = \arg\max_{\theta \in \Theta} X(\theta)$ and proposed intervals  achieving exact conditional coverage $1-\alpha$. 

A complementary multiple‐testing approach to rank inference focused squarely on the family of pairwise hypotheses $H_{ik}\!:\theta_k \le \theta_i$. 
\cite{holm2013confidence} constructed intervals using a step‐down procedure at level $\alpha/(K-1)$, counting rejections $N_k^-$ and $N_k^+$ on each side and setting 
$\mathrm{CI}_k = [1+N_k^-,\,K-N_k^+]$ with exact family‐wise error control.  
\cite{klein2020joint} introduced joint confidence sets $\{L_k,U_k\}$ for each $\theta_k$ via Bonferroni or exact methods, defining
$r_k \in \{\,|\{i:U_i<L_k\}|+1,\dots,|\{i:L_i\le U_k\}|\,\}$, thereby obtaining valid rank sets without resampling. 
\cite{mogstad2024inference} sharpened this approach by constructing uniform simultaneous bands for all contrasts $\theta_i-\theta_k$ via the maximum of studentized statistics obtaining rank sets under directional FWER control.  
Specializing to categorical data and accounting for dependence, \cite{bazylik2021finite} employed UMPU conditional‐binomial tests for $\theta_j \le \theta_k$, combined with Holm/Bonferroni adjustments, to deliver finite‐sample exact rank intervals. 
However, both the number of tests and the complexity of the Holm procedure scale quadratically, which can become prohibitive when there are many populations.

From an algorithmic standpoint, methods in the machine‐learning literature reframed ranking as a predictive task. 
\cite{furnkranz2003pairwise} introduced pairwise‐preference learning, defining a ranking function that maps instances to total orders over a set of labels.
\cite{negahban2012iterative} proposed Rank Centrality, an iterative rank‐aggregation algorithm that estimates scores from pairwise comparisons, with finite‐sample error bounds scaling as $\mathcal{O}(n\log n)$. 
In the Plackett–Luce setting, \cite{azari2013generalized} proposed a generalized method‐of‐moments estimator, breaking full rankings into moment equations and solving for latent utilities $\bm{\theta}$.  
\cite{chen2019spectral} refined Bradley–Terry estimation by combining a spectral initialization with coordinate‐wise maximum‐likelihood updates to achieve minimax error rates under suitable separation conditions. 
More recent work has established rigorous inference in sparse‐comparison and multiway models. 
\cite{han2020asymptotic} proved that, on an $n$‐vertex Erdős–Rényi graph with edge probability $p_n\gtrsim(\log n)^3/n$, the Bradley–Terry MLE $\widehat{\theta}_i$ satisfies 
$\sqrt{n p_n}\bigl(\widehat{\theta}_i/\theta_i - 1\bigr)\xrightarrow{d}N(0,\Sigma_{ii}^{-1})$, 
enabling rank intervals by counting significant log‐score differences. 
\cite{han2023general} generalized this to arbitrary sparse networks under parametric links, establishing uniform consistency $\widehat{\bm{\theta}}$ and a componentwise CLT. 
\cite{chen2021optimalrankingpairwisecomparisons} showed that the optimal error rate for recovering a full ranking under the Bradley–Terry–Luce model exhibits a sharp threshold—exponential decay in one regime and polynomial in another. 
Building on this, \cite{gao2021uncertainty} derived precise finite‐sample approximations for both MLE and spectral estimators even under sparsity, establishing central‐limit results and confidence intervals for each rank. 
\cite{han2025unifiedanalysislikelihoodbasedestimators} recently unified full, marginal, and quasi‐MLE estimators in the Plackett–Luce model on hypergraphs under a rapid‐expansion condition, providing smoothed‐rank estimators with bootstrap corrections that cover near‐ties. 
\cite{fan2024ranking} addressed multiway Plackett–Luce comparisons, observing only the top choice in each $M$-length subset among $K$ populations or items, and applied a Gaussian multiplier bootstrap on all pairwise contrasts to construct valid rank intervals. 
Existing methodologies for rank inference can be organized into four broad categories:  
(i) \emph{Asymptotic frequentist methods}, which rely on root-$n$ approximations or CLTs 
(ii) \emph{Finite sample based multiple testing procedures} 
which provide exact control but can be conservative and computationally demanding;  
(iii) \emph{Likelihood or optimization‐based finite‐sample procedures}, 
which guarantee finite‐sample validity but often depend on strong model assumptions and  
(iv) \emph{Bayesian credible‐set methods} 
which quantify posterior belief rather than frequentist coverage. Within this taxonomy, our Repro‐Samples approach belongs to the finite‐sample frequentist class but differs fundamentally from optimization‐ or bootstrap‐based methods. 
By explicitly reproducing the noise generating process rather than resampling the data or estimating asymptotic distributions, our procedure constructs confidence sets for ranks that achieve exact finite‐sample coverage under minimal assumptions on the data‐generating mechanism.  
Our method bypasses reliance on point estimators or asymptotic approximations and remains effective even when the parameters are closely spaced, something which previous approached fell short of.
Conceptually, it provides a bridge between finite‐sample coverage and rank uncertainty quantification, establishing a new class of nonasymptotic, model‐agnostic rank‐confidence methods.

\subsection{Main Contributions}

This paper makes four main contributions. 
First, it extends the Repro-Samples principle to discrete rank parameters, demonstrating that finite-sample validity can be achieved by reproducing model noise rather than resampling data or relying on asymptotic distributions. 
Second, it introduces a constraint-based construction of candidate rank sets, in which a discordance constraint limits the number of pairwise order reversals between model-implied and data-implied ranks, thereby ensuring computational tractability and interpretability. 
Third, it establishes non-asymptotic coverage guarantees and characterizes how the size of the candidate set depends on the discordance budget and the number of repro samples. 
Finally, the framework is shown to unify several ranking settings, including nonparametric quantile ranking, regression-based comparisons, and partial rankings under the Plackett–Luce model, providing both joint and marginal rank confidence sets that are validated through theoretical analysis and simulation studies.

\subsection{Sample Data and Model Setup}
 We consider a very general setup that encompasses almost all scenarios encountered in practice, where the sample data $\mathcal D$ may consist of individual‐level information and/or interaction (network) data spanning multiple institutions. 
To set notation, we assume that we observe an $n\times1$ response vector $\bm{Y}\subseteq\mathcal{Y}$ with an $n\times q$ design matrix $\bm{X}\subseteq\mathcal{X}$, where $q\ge1$, and write
\[
\mathcal D = (\bm{Y}, \bm{X})
\]
The matrix $\bm{X}$ may encode covariates or features of institutions (Section~3.2) or indices linking observations to institutions (Section~3.3). 
For example, in the English Premier League (EPL) 2024 ranking application (Section~4.2), $\bm Y$ represents the vector of game scores and $\bm{X}$ records the fixed team identifiers of the competing clubs. 
Although in most examples $\mathcal Y$ and $\mathcal X$ are subsets of Euclidean spaces, this is not required; for instance, in the Plackett–Luce network model,  $\bm Y$ consists of a set of item indices together with a partial ranking outcome. 
We use a superscript “$\mathrm{obs}$’’ to denote observed (non‐random) quantities, while the corresponding random versions are denoted by $\bm{Y}$ and $\mathcal D$. For notational simplicity, we assume that $\bm{X}$ is fixed (non-random) for conditional inference; alternatively, any random components in $\bm{X}$ could possibly be absorbed into $\bm{Y}$. Our modeling assumption is deliberately minimal. 
Irrespective of whether the underlying mechanism is statistical or algorithmic, parametric or nonparametric, we require only that the random data $\mathcal D$ contain sufficient information to recover a $K$‐dimensional latent vector of characteristic parameters $\boldsymbol\theta$ through a deterministic mapping
\begin{equation}
\label{eq:inverted-general}
\boldsymbol\theta = H(\mathcal D, \bm U),
\end{equation}
where $H(\cdot)$ denotes a function or algorithm, and $\bm U \in \mathcal U \subseteq \mathbb R^m$ (for some $m\ge1$) represents model noise or latent variability associated with $\mathcal D$. 
We assume that $\bm U$ can be simulated from a known distribution function $F_{\bm U}(\cdot)$, as in \cite{liang2024} and other BFF work such as \cite{berger2024handbook}. 
The formulation in \eqref{eq:inverted-general} is broad enough to cover nearly all statistical and machine‐learning models used for rank inference. As an illustration, consider a generic generative model in which the random data $\mathcal D = \{(\bm{Y}, \bm{X})\}$ are produced from random noise $\bm U$ given model parameters $\boldsymbol\eta$:
\begin{equation}
\label{eq:model}
\bm{Y} = G(\boldsymbol\eta, \bm{X}, \bm U),
\end{equation}
where $G:\Omega\times\mathcal U\times\mathcal X \to \mathcal Y$ is a mapping and $\boldsymbol\theta = \zeta(\boldsymbol\eta)$. 
Since any sample from a density or mass function $f_{\boldsymbol\eta}(\cdot)$ can be generated via the inverse transform $F^{-1}_{\boldsymbol\eta}(Z)$ with $Z\sim\mathrm{Uniform}(0,1)$, most likelihood‐based models can be represented in the form \eqref{eq:model} (see \cite{xie2022reprosamplesmethodfinite}). 
The sample‐realized version corresponding to \eqref{eq:model} is
\begin{equation}
\label{eq:observed-model}
\bm{y}^{\mathrm{obs}} = G(\boldsymbol\eta^{(0)}, \bm{x}^{\mathrm{obs}}, \bm u^{\mathrm{rel}}),
\end{equation}
where $\bm u^{\mathrm{rel}}$ denotes the realized (unobserved) value of the random noise $\bm U$. 
If $\bm u^{\mathrm{rel}}$ were available, the $K$ target parameters $\boldsymbol\theta^{(0)} = \zeta(\boldsymbol\eta^{(0)})$ could be recovered by solving
\begin{equation}
\label{eq:inverted}
\boldsymbol\theta^{(0)} 
= \arg\min_{\boldsymbol\theta} \Big\{ \min_{\boldsymbol\eta : \zeta(\boldsymbol\eta)=\boldsymbol\theta}
L\big(\bm{y}^{\mathrm{obs}}, G(\boldsymbol\eta, \bm{x}^{\mathrm{obs}}, \bm u^{\mathrm{rel}})\big) \Big\}
\;\overset{\mathrm{def}}{=}\;
H(\mathcal D^{\mathrm{obs}}, \bm u^{\mathrm{rel}}),
\end{equation}
for an appropriate loss $L(\cdot)$, such as the squared‐error loss when $\mathcal Y \subseteq \mathbb R^K$. 
This optimization view shows that, for data generated under \eqref{eq:model}, the assumption \eqref{eq:inverted-general} typically holds. 
When \eqref{eq:observed-model} admits a unique solution in $\boldsymbol\theta$, the optimizer in \eqref{eq:inverted} coincides with $\boldsymbol\theta^{(0)}=\zeta(\boldsymbol\eta^{(0)})$; in more complex cases, \eqref{eq:inverted} may yield a local optimum, in which case our rank inference for $\bm R^{(0)} = \mathcal S(\boldsymbol\theta^{(0)})$ is based on $\boldsymbol\theta^{(0)}$ as defined in \eqref{eq:inverted-general}. 
Finally, the formulation \eqref{eq:inverted-general} also extends beyond generative models, encompassing settings where the observed data cannot be represented in the form \eqref{eq:model}; an example is the nonparametric quantile‐ranking model discussed in Section~3.1.
\subsection{Notations}

Throughout the paper we use the following notations. Lowercase letters (e.g.\ $y$, $u$, $\theta$) denote scalar quantities.
Boldface lowercase letters (e.g.\ $\bm{y}$, $\bm{u}$, $\bm{\theta}$) denote vectors.
Uppercase letters (e.g.\ $Y$, $U$) denote random responses, random variables or vectors.
We write $S_K$ for the set of all permutations of $[K]=\{1,\dots,K\}.$
The symbol $\mathbb{I}\{\cdot\}$ denotes the indicator function, taking value $1$ 
when the condition inside braces is true and $0$ otherwise.
For any random element $A$, we write $\mathbb{P}_{A}(\cdot)$ and $\mathbb{E}_{A}(.)$ for 
probability and expectation under the distribution of $A$ and
$\mathbb{P}_{A\,|\,B}(\cdot)$ and $\mathbb{E}_{A\,|\,B}[\cdot]$ to denote conditional probability and expectation taken with respect to $A$ given $B.$ We denote the random vector corresponding to the $b^{th}$ repro sample by $\bm{U}^{(b)}.$ Joint probability with respect to latent noise $\bm{U}$ and repro samples $\bm{U}^{\ast(1)},\dots,\bm{U}^{\ast(|\mathcal{V}|)}$  
for a fixed index set $\mathcal{V}$, is denoted by $\mathbb{P}_{\bm{U},\mathcal{V}}(\,\cdot\,),$
and the corresponding conditional and unconditional expectations follow the 
same subscript convention.

\section{Methodology Developments and Theories}

From (\ref{eq:rank and theta}) and (\ref{eq:inverted-general}), we can write
$
\bm{R} = \mathcal{S}\left( H(\mathcal{D},\bm{U}) \right).$
Thus, if we observe 
$\mathcal{D}^{\text{obs}}$ and knew 
$\bm{u}^{\mathrm{rel}}$,
the true rank in (\ref{eq:rank and theta true}) can be fully recovered as  
\begin{equation}
\label{eq:rank_and_u}
\bm{R}^{(0)} = \mathcal{S}\left( H(\mathcal{D}^{\text{obs}},\bm{u}^{\text{rel}}) \right)
\end{equation}
However, we do not know $ \boldsymbol{u}^{\text{rel}} $, but we know $ F_{\boldsymbol{U}}(\cdot)$ so we can simulate model noise, say $ \boldsymbol{u}^\star $, mimicking $ \boldsymbol{u}^{\text{rel}} $ and use these synthetic $ \boldsymbol{u}^\star $ to help make inference for $ \boldsymbol{R}^{(0)} $.
Following \cite{xie2022reprosamplesmethodfinite}, we refer such ${\bm u}^\star $'s as repro samples of $\bm{U}$. For a repro sample ${\bm u}^\star $ that we generate, we define $ \bm{\theta}^\star =H(\mathcal{D}^{\text{obs}},{\bm u}^\star )$ and 
\begin{equation}
\label{eq:theta_star_def}
{\bm R}^\star = {\cal{S}}(\bm{\theta}^\star) ={\cal{S}}\big(H(\mathcal{D}^{\text{obs}},{\bm u}^\star )\big)
 \end{equation} 
which forms a mapping from ${\bm u}^\star \in {\cal U} \to \bm{R}^\star \in {\cal I}$.   In Section 2.1, we construct a $1-\alpha$ confidence set for the ranks of a selected subset of populations. In Section 2.2 we create a candidate rank set using repro samples ${\bm u}^\star $ that includes the true rank $\bm{R}^{(0)}$ with  high probability that can be used for more complex general models to obtain a  computationally tractable confidence set.
\subsection{Level $1-\alpha$ Confidence Set for Rank Vectors}

 \phantomsection
\noindent\textbf{Inversion argument and neighborhood sets:}\label{subsec:inversion}
Suppose we wish to infer the true rank $r_k^{(0)}$ of population $k$.
If $\theta_i^{(0)}<\theta_k^{(0)}$, then $r_i^{(0)}<r_k^{(0)}$.  
Under the model
$\bm\theta^{(0)} = H(\mathcal D^{\mathrm{obs}},\bm u^{\mathrm{rel}}),$
this ordering implies
$H(\mathcal D^{\mathrm{obs}},\bm u^{\mathrm{rel}})_i
<H(\mathcal D^{\mathrm{obs}},\bm u^{\mathrm{rel}})_k.$
Similarly, if $r_i^{(0)}>r_k^{(0)}$, then
$H(\mathcal D^{\mathrm{obs}},\bm u^{\mathrm{rel}})_i>H(\mathcal D^{\mathrm{obs}},\bm u^{\mathrm{rel}})_k.$
For any noise vector $\bm u\in \mathcal U$, define the neighborhood sets
\begin{align*}
\mathcal N_k^{-}(\mathcal D^{\mathrm{obs}},\bm u)
&=
\bigl\{\, i\neq k:\;
H(\mathcal D^{\mathrm{obs}},\bm u)_i
<
H(\mathcal D^{\mathrm{obs}},\bm u)_k
\bigr\},\\[3pt]
\mathcal N_k^{+}(\mathcal D^{\mathrm{obs}},\bm u)
&=
\bigl\{\, i\neq k:\;
H(\mathcal D^{\mathrm{obs}},\bm u)_i
>
H(\mathcal D^{\mathrm{obs}},\bm u)_k
\bigr\}.
\end{align*}
Thus, for the unknown true noise $\bm u^{\mathrm{rel}}$, 
$|\mathcal N_k^{-}(\mathcal D^{\mathrm{obs}},\bm u^{\mathrm{rel}})|+1
\;\le\;
r_k^{(0)}
\;\le\;
K-|\mathcal N_k^{+}(\mathcal D^{\mathrm{obs}},\bm u^{\mathrm{rel}})|.$ These inequalities,
hold for the unknown true rank $r_k^{(0)}$ conditional on the latent noise 
$\bm u^{\mathrm{rel}}$ that actually generated the data.   If the true noise $\bm u^{\mathrm{rel}}$ were observable, these inequalities would 
identify all rank values consistent with the observed data.   Thus they describe the complete set of rank vectors that are compatible with 
$\mathcal D^{\mathrm{obs}}$ under the (unknown) latent perturbations that produced it. However the true noise $\bm u^{\mathrm{rel}}$ is unobserved, so we replace it by all noise realizations lying in a 
$(1-\alpha)$-probability Borel set 
$B_\alpha(\theta).$ Imposing the inequalities over this region yields a confidence set that contains the true rank vector with probability at least 
$1-\alpha.$

 \phantomsection
\noindent\textbf{General Borel-Set Constraint:}\label{subsec:iGeneral Borel-Set}
Let $T:\mathcal U\times\Theta\to\mathbb R^p$ be a measurable map and let
$B_\alpha(\bm\theta)\subset\mathcal{U}$ satisfy
\begin{equation}
\label{Tuprob}
\mathbb P_{\bm U}\bigl(T(\bm U,\bm\theta)\in B_\alpha(\bm\theta)\bigr)=1-\alpha,
\qquad 0<\alpha<1,
\end{equation}
for every fixed $\bm\theta$.
For each $\bm\theta$, the set $B_\alpha(\bm\theta)$ determines the $(1-\alpha)$-probability region of the latent noise. Next we fix an index set $\mathcal I=\{t_1,\dots,t_{|\mathcal I|}\}\subseteq[K]$.
The restricted subset rank vector of the populations $\mathcal P_{t_1},\ldots,\mathcal P_{t_{|\mathcal I|}}$ is
$\bm R|_{\mathcal I}=(r_{t_l})_{t_l\in\mathcal I}.$
Our goal is to construct a joint confidence set for $\bm R|_{\mathcal I}$.\\
\phantomsection
\noindent\textbf{{Rank Confidence Set}:}\label{subsec:Rank set}
We define the joint confidence set for the ranks of the populations in~$\mathcal I$
\begin{equation}
\label{conf_set_final}
\begin{aligned}
\Gamma_{\alpha}^{\mathcal I}(\mathcal D^{\mathrm{obs}})
=
\Bigl\{\,
\bm R|_{\mathcal I}\;:\;&
\exists\,\bm u^\star\in\mathcal U
\ \text{such that }\
T(\bm u^\star,\bm\theta)\in B_\alpha(\bm\theta),\\[4pt]
&\bigl|\mathcal N^{-}_{t_l}(\mathcal D^{\mathrm{obs}},\bm u^\star)\bigr|+1
\ \le\ r_{t_l}\ \le\
K-\bigl|\mathcal N^{+}_{t_l}(\mathcal D^{\mathrm{obs}},\bm u^\star)\bigr|,\ 
\forall\, t_l\in\mathcal I
\Bigr\}.
\end{aligned}
\end{equation}
The existential quantifier reflects the  principle: a rank vector $\bm R|_{\mathcal I}$ is included in 
$\Gamma_{\alpha}^{\mathcal I}(\mathcal D^{\mathrm{obs}})$ 
whenever there exists at least one noise realization 
$T(\bm u^\star,\bm\theta)\in B_\alpha(\bm\theta)$ for which the ordering 
constraints implied by the observed data are satisfied.

\phantomsection
\noindent\textbf{Finite-Sample Validity:} \label{subsec:Finite-Sample Validity} The following result shows that the rank set construction in \eqref{conf_set_final} attains at least $(1-\alpha)$ coverage for the true restricted rank vector.

\begin{theorem}
\label{thm_rank_confidence_jrssb}
Let $\bm R|_{\mathcal I}^{(0)}$ be the true rank vector for the populations indexed by $\mathcal I$.
If the model $\bm R^{(0)}=\mathcal S\bigl(H(\mathcal D,\bm U)\bigr)$ holds and the Borel-set constraint \eqref{Tuprob} is exact, then
$$\mathbb P_{\bm U}\bigl(
\bm R|_{\mathcal I}^{(0)}\in \Gamma_{\alpha}^{\mathcal I}(\mathcal D)
\bigr)
\;\ge\; 1-\alpha.$$
More generally if, $\mathbb P_{\bm U}\bigl(T(\bm U,\bm\theta)\in B_\alpha(\bm\theta)\bigr)
\ge (1-\alpha)\bigl(1+o(\delta')\bigr),$
then for  $\delta'>0$ which may depend on $\sum_{k=1}^K n_k,$ we have $\mathbb P_{\bm U}\bigl(
\bm R|_{\mathcal I}^{(0)}\in \Gamma_{\alpha}^{\mathcal I}(\mathcal D)
\bigr)
\ge (1-\alpha)\bigl(1+o(\delta')
\bigr),$

\end{theorem}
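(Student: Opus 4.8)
The plan is to exploit the fact that the true realized noise $\bm u^{\mathrm{rel}}$ is a legitimate witness for the existential quantifier in \eqref{conf_set_final} precisely on the event that $T(\bm u^{\mathrm{rel}},\bm\theta^{(0)})\in B_\alpha(\bm\theta^{(0)})$. First I would fix the true parameter $\bm\theta^{(0)}$ (equivalently $\bm\eta^{(0)}$) and the true rank vector $\bm R^{(0)}=\mathcal S(\bm\theta^{(0)})$, and recall from the inversion argument that, since $\bm\theta^{(0)}=H(\mathcal D^{\mathrm{obs}},\bm u^{\mathrm{rel}})$, the two-sided neighborhood inequalities
\[
\bigl|\mathcal N^{-}_{t_l}(\mathcal D^{\mathrm{obs}},\bm u^{\mathrm{rel}})\bigr|+1
\ \le\ r^{(0)}_{t_l}\ \le\
K-\bigl|\mathcal N^{+}_{t_l}(\mathcal D^{\mathrm{obs}},\bm u^{\mathrm{rel}})\bigr|
\]
hold simultaneously for every $t_l\in\mathcal I$ — this is a deterministic consequence of $r^{(0)}_{t_l}=1+\sum_{i\neq t_l}\mathbb I\{\theta^{(0)}_i<\theta^{(0)}_{t_l}\}$ together with the strict-ordering definition of the neighborhood sets and the assumption that the true scores are distinct (so there are no ties to worry about in the witness). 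Consequently, on the event
\[
\mathcal E \;=\; \bigl\{\,T(\bm u^{\mathrm{rel}},\bm\theta^{(0)})\in B_\alpha(\bm\theta^{(0)})\,\bigr\},
\]
the choice $\bm u^\star=\bm u^{\mathrm{rel}}$ satisfies both the Borel-set membership clause and all the neighborhood inequalities in the definition of $\Gamma^{\mathcal I}_\alpha(\mathcal D^{\mathrm{obs}})$, and therefore $\bm R|^{(0)}_{\mathcal I}\in\Gamma^{\mathcal I}_\alpha(\mathcal D^{\mathrm{obs}})$ on $\mathcal E$.

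The second step is purely probabilistic: by the exactness hypothesis \eqref{Tuprob} applied at the fixed value $\bm\theta=\bm\theta^{(0)}$, we have $\mathbb P_{\bm U}(\mathcal E)=1-\alpha$, where the probability is over $\bm U\sim F_{\bm U}$ (identified with $\bm u^{\mathrm{rel}}$). Combining this with the set inclusion $\mathcal E\subseteq\{\bm R|^{(0)}_{\mathcal I}\in\Gamma^{\mathcal I}_\alpha(\mathcal D)\}$ from the first step gives
\[
\mathbb P_{\bm U}\bigl(\bm R|^{(0)}_{\mathcal I}\in\Gamma^{\mathcal I}_\alpha(\mathcal D)\bigr)\ \ge\ \mathbb P_{\bm U}(\mathcal E)\ =\ 1-\alpha,
\]
which is the first claim. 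For the more general statement, I would repeat the identical argument but invoke the weaker hypothesis $\mathbb P_{\bm U}\bigl(T(\bm U,\bm\theta^{(0)})\in B_\alpha(\bm\theta^{(0)})\bigr)\ge(1-\alpha)(1+o(\delta'))$ in place of the exact identity; the set-inclusion step is unchanged (it never used exactness), so monotonicity of probability immediately yields $\mathbb P_{\bm U}\bigl(\bm R|^{(0)}_{\mathcal I}\in\Gamma^{\mathcal I}_\alpha(\mathcal D)\bigr)\ge(1-\alpha)(1+o(\delta'))$, with the $o(\delta')$ term carried verbatim from the hypothesis and $\delta'$ allowed to depend on $\sum_k n_k$.

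The main obstacle — really the only subtle point — is the first step: verifying that the witness $\bm u^\star=\bm u^{\mathrm{rel}}$ genuinely satisfies \emph{all} the constraints in \eqref{conf_set_final} at once, and in particular reconciling the strict inequalities defining $\mathcal N^{\pm}_k$ with the possibility that $H(\mathcal D^{\mathrm{obs}},\bm u^{\mathrm{rel}})$ returns a non-unique or merely locally optimal solution. Here I would lean on the convention fixed in the excerpt that rank inference is conducted for $\bm R^{(0)}=\mathcal S(\bm\theta^{(0)})$ with $\bm\theta^{(0)}$ \emph{defined} through \eqref{eq:inverted-general}, so that by construction $H(\mathcal D^{\mathrm{obs}},\bm u^{\mathrm{rel}})=\bm\theta^{(0)}$ and the neighborhood counts computed from the witness coincide exactly with those determining $\bm R^{(0)}$; combined with the distinctness of $\theta^{(0)}_1,\dots,\theta^{(0)}_K$, this closes the gap. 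A secondary bookkeeping point is that \eqref{Tuprob} is stated "for every fixed $\bm\theta$", so one must be careful to instantiate it at the \emph{true} $\bm\theta^{(0)}$ and note that the event $\mathcal E$ and the set $\Gamma^{\mathcal I}_\alpha(\mathcal D)$ are measurable functions of $\bm U$ (through $\mathcal D=G(\bm\eta^{(0)},\bm x^{\mathrm{obs}},\bm U)$), which justifies taking the probability $\mathbb P_{\bm U}$ over the same source of randomness on both sides.
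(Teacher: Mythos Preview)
Your proposal is correct and follows essentially the same argument as the paper's own proof (given in the ``Interpretation'' paragraph immediately after the theorem): on the event $\{T(\bm U,\bm\theta^{(0)})\in B_\alpha(\bm\theta^{(0)})\}$ the realized noise $\bm u^{\mathrm{rel}}$ serves as a witness for the existential clause in \eqref{conf_set_final}, yielding the set inclusion $\{T(\bm U,\bm\theta)\in B_\alpha(\bm\theta)\}\subseteq\{\bm R|_{\mathcal I}^{(0)}\in\Gamma_\alpha^{\mathcal I}(\mathcal D)\}$ and hence the coverage bound. Your treatment is in fact more explicit than the paper's about why the neighborhood inequalities hold deterministically at $\bm u^{\mathrm{rel}}$ and about the measurability and instantiation-at-$\bm\theta^{(0)}$ points, which the paper leaves implicit.
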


\phantomsection
\noindent\textbf{Interpretation:}\label{subsec:Interpretation}
Although $\bm{u}^{\mathrm{rel}}$ is unknown, we have size $1-\alpha$ confidence that $T\bigl(\bm{u}^{\mathrm{rel}},\bm{\theta}\bigr)\in B_{\alpha}(\bm{\theta}).$ As we set 
$\bm{R}^{(0)}= \mathcal{S}\!\bigl(H(\mathcal{D}^{\text{obs}},\,\bm{u}^{\mathrm{rel}})\bigr)$ it follows that $\bm R|_{\mathcal I}^{(0)} \in\Gamma^{\mathcal{I}}_{\alpha}(\mathcal{D}^{\text{obs}})$.
More generally, under any model of the form 
$\bm{R}^{(0)}=\mathcal{S}\bigl(H(\mathcal{D},\,\bm{U})\bigr)$, the event
$\bigl\{T(\bm{U},\bm{\theta})\in B_{\alpha}(\bm{\theta})\bigr\}\subseteq
  \bigl\{\bm R|_{\mathcal I}^{(0)} \in\Gamma^{\mathcal{I}}_{\alpha}(\mathcal{D})\bigr\}.$
Hence,
$1 - \alpha \le
  \mathbb{P}_{\bm{U}}\bigl(T(\bm{U},\bm{\theta})\in B_{\alpha}(\bm{\theta})\bigr)
  \;\le\;
\mathbb{P}_{\bm{U}}\bigl(\bm{R}|^{(0)}_{\mathcal I}\in \Gamma^{\mathcal{I}}_{\alpha}(\mathcal{D})\bigr)$
demonstrating that $\Gamma_{\alpha}^{\mathcal{I}}(\mathcal{D}^{\text{obs}})$ indeed achieves at least $1-\alpha$ coverage probability for any ${\mathcal{I}}.$ Because $\bm u^{\mathrm{rel}}$ and the unknown $\bm u^{\star}$ are identically distributed, if $T\bigl({\bm u}^\star ,\bm{\theta}\bigr)$  confined within the same set  $ B_\alpha(\bm{\theta})$, we get $1-\alpha$ coverage. \\
\phantomsection
\noindent\textbf{Example 2.1: Independent Gaussian Populations:}\label{subsec:Example 2.1}
Consider $K$ independent Gaussian populations with
$y_{ik}^{\mathrm{obs}}$ from $N(\theta^{(0)}_k,\sigma_k^2),$ for a known $\sigma_k,k\in[K].$
Let $n_k$ observations be drawn from each population and define the sample means
$y_k^{\mathrm{obs}}
=\frac1{n_k}\sum_{i=1}^{n_k}y_{ik}^{\mathrm{obs}}.$
Let $u_k^{\mathrm{rel}}$ be a realization from $N(0,1)$.  
By the Gaussian location-scale identity,
$\theta_k^{(0)}
=
y_k^{\mathrm{obs}}
-
\frac{\sigma_k^{(0)}}{\sqrt{n_k}}\,
u_k^{\mathrm{rel}},$ $k=1,\dots,K.$
We choose  $$T(\bm{u} ,\bm{\theta})=\left(\min_{i \in [K]} \left( \frac{\sigma_i u_i}{\sqrt{n_i}} - \frac{\sigma_k u_k}{\sqrt{n_k}} \right),\max_{i \in [K]} \left( \frac{\sigma_i u_i}{\sqrt{n_i}} - \frac{\sigma_k u_k}{\sqrt{n_k}} \right)\right)$$ and $ B_\alpha(\boldsymbol{\theta}) $ such that
$ \mathbb{P}_{\bm{U}}\left\{ 
\max_{i \in [K]} \left( \frac{\sigma_i u_i}{\sqrt{n_i}} - \frac{\sigma_k u_k}{\sqrt{n_k}} \right) < c^{+}_k, 
\min_{i \in [K]} \left( \frac{\sigma_i u_i}{\sqrt{n_i}} - \frac{\sigma_k u_k}{\sqrt{n_k}} \right) > c^{-}_k ,\forall k
\right\} = 1 - \alpha$
for suitable $c_k^{+}>c_k^{-}>0$.
On the event in  $B_\alpha(\boldsymbol{\theta})$ in \eqref{Tuprob} which holds with probability $1-\alpha$ we have 
\[
c_k^{+} \;<\;  \frac{\sigma_i u_i}{\sqrt{n_i}} - \frac{\sigma_k u_k}{\sqrt{n_k}}
\;<\; c_k^{-}\qquad\text{for all } i\neq k .
\]
Thus whenever the observed difference \( y_i^{\mathrm{obs}} - y_k^{\mathrm{obs}} \) is less than the lower tolerance bound \( c_k^{-} \), it follows that $\theta_i^{(0)} - \theta_k^{(0)}< 0.$ Similarly, whenever the observed difference \( y_i^{\mathrm{obs}} - y_k^{\mathrm{obs}} \) exceeds the negative upper tolerance bound \( -c_k^{+} \), we have $\theta_i^{(0)} - \theta_k^{(0)}> 0.$ Define the neighborhood sets as $\mathcal{N}^{-}_k(\mathcal{D}^{\text{obs}}, \bm{u}) =
\left\{ i \ne k:\;
  {y}_i^{\mathrm{obs}} -   {y}_k^{\mathrm{obs}} < c^-_k \right\},$ $
\mathcal{N}^{+}_k(\mathcal{D}^{\text{obs}},\bm{u}) =
\left\{ i \ne k:\;
  {y}_i^{\mathrm{obs}} -   {y}_k^{\mathrm{obs}} > -c_k^+ \right\}.$ Then on $B_\alpha(\theta),$ $i\in\mathcal{N}^{-}_k(\mathcal{D}^{\text{obs}}, \bm{u}^{\mathrm{rel}}) $ implies $\theta_i^{(0)} < \theta_k^{(0)},$ $
i\in\mathcal{N}^{+}_k(\mathcal{D}^{\text{obs}}, \bm{u}^{\mathrm{rel}}) $ implies $\theta_i^{(0)} > \theta_k^{(0)}.$ Then a corresponding level $1-\alpha $ confidence set for the rank vector $ \bm{R} $  is given by $
\Gamma^{\mathcal{I}}_\alpha\bigl(\mathcal{D}^{\mathrm{obs}}\bigr)$ defined in (\ref{conf_set_final}).

\phantomsection
\noindent\textbf{Monte Carlo approximation: }\label{subsec:Monte Carlo}When the distribution of $T(\bm U,\bm\theta)$ is not available analytically, we approximate $B_\alpha(\bm\theta)$ as follows. This yields a Monte-Carlo approximation to $\Gamma_\alpha^{\mathcal I}(\mathcal D^{\mathrm{obs}})$ while preserving the finite-sample guarantee of Theorem~\ref{thm_rank_confidence_jrssb}.

\begin{algorithm}[H]
\caption{Monte-Carlo Construction of the Repro-Samples Rank Confidence Set $\Gamma_{\alpha}^{\mathcal I}(\mathcal D^{\mathrm{obs}})$}
\label{alg:construction}
\begingroup
\renewcommand{\baselinestretch}{.9}\normalsize
\setlength{\belowdisplayskip}{0pt}
\setlength{\abovedisplayskip}{0pt}
\setlength{\intextsep}{0pt}
\setlength{\textfloatsep}{0pt}
\setlength{\floatsep}{0pt}
\setlength{\parskip}{0pt}
\setlength{\itemsep}{0pt}
\begin{algorithmic}
\State \textbf{Step 1.} For a given parameter value $\bm\theta\in\Theta$, compute $\bm{R}=(r_1,..r_K)=\mathcal{S}(\bm{\theta})$: 
\Statex \hspace{1em}\textit{(a)} Generate $\bm U^S\in \mathcal{U},$ $s=1,..,B$ and use the Monte Carlo method based on the finite set $\{T(\bm U^s,\bm\theta),s=1,..,B\}$ to obtain the level-$\alpha$ Borel set $B_\alpha(\bm\theta)$ in~\eqref{Tuprob} from the empirical distribution of $T(\bm U,\bm\theta).$
\Statex \hspace{1em}\textit{(b)} If there exists $\bm u^{\star}\in \mathcal{U}$, check whether $T(\bm u^{\star},\bm\theta)\in B_\alpha(\bm\theta)$ and  $|\mathcal N_k^{-}(\mathcal D^{\mathrm{obs}},\bm u^{\star})|+1<r_k<K-|\mathcal N_k^{+}(\mathcal D^{\mathrm{obs}},\bm u^{\star})|$ for $k\in[K]$. If both of the above criteria are satisfied, keep the $\bm{R}$.
\State \textbf{Step 2.}  Collect all kept $\bm{R}$ and subset $\bm{R}|_\mathcal{I}$ to form the confidence set $\Gamma_{\alpha}^{\mathcal I}(\mathcal D^{\mathrm{obs}}).$
\end{algorithmic}
\endgroup
\end{algorithm}

\subsection{Refined confidence set using a candidate set with high coverage}
\label{subsec:refined_confidence_set}

This section develops a reduction of the rank search space through a data–adapted
candidate set that retains all rank vectors compatible with the observed data and with
high–probability neighborhoods of the latent noise vector.

\phantomsection
\noindent\textbf{Construction of the candidate set:}
\label{subsec:CandidateSet} The mapping
$\bm u^\star \mapsto 
\bm R^\star = \mathcal S\!\big(H(\mathcal D^{\mathrm{obs}},\bm u^\star)\big)$
is typically \textit{many–to–one}, the latent noise space $\mathcal U$ is (often) uncountable,
whereas the rank space $S_K$ contains $K!$ permutations.
Hence many distinct noise vectors may induce the same rank vector.  
In principle, the ranks produced by a large number of repro samples could range
over all of $S_K$, but for moderate or large $K$ this becomes computationally infeasible.
Thus it is desirable to restrict attention to a subset of $S_K$
that still contains the truth with high probability.
To quantify compatibility with the empirical ordering, for each ordered pair $(i,j)$
we test whether the repro sample reverses the empirical order.  
The discordance statistic aggregates such reversals:
\[
\mathrm{Disc}(\mathcal D^{\mathrm{obs}},\bm\theta^\star)
=
\sum_{i\neq j}
\mathbb I\!\Big(
(\hat\theta^{\mathrm{obs}}_i-\hat\theta^{\mathrm{obs}}_j)
(\theta^\star_i-\theta^\star_j)
<0
\Big).
\]
Fix a discordance budget $c>0$ and define the candidate set
\begin{equation}
\label{eq:candidateset}
\mathcal C_{\mathcal{V}}(\mathcal{D}^{\mathrm{obs}})
=
\Bigl\{
\bm R^\star :
\bm R^\star=\mathcal S(\bm\theta^\star),\
\mathrm{Disc}(\mathcal D^{\mathrm{obs}},\bm\theta^\star)<c,\
\bm u^\star\in\mathcal V
\Bigr\},
\end{equation}
where 
$\mathcal V=\{\bm u^{\star(1)},\ldots,\bm u^{\star(|\mathcal V|)}\}$ 
are i.i.d.\ draws from $F_{\bm U}$, independent of $\bm u^{\mathrm{rel}}$.
Smaller $c$ yields a tighter candidate set, while larger $c$ yields increased robustness. \\
\phantomsection
\noindent\textbf{Example 2.1 continued.} 
We generate \( |\mathcal V| \) i.i.d.\ perturbations \(u_k^{\ast(b)}\)from \(N(0,1)\) and form
$\theta_k^{\ast(b)}
 \;=\; y_k^{\mathrm{obs}}
   - \frac{\sigma_k^{(0)}}{\sqrt{n_k}}\,u_k^{\ast(b)},$ for each $k.$
Since $y_k^{\mathrm{obs}}=\frac{1}{n_k}\sum_{i=1}^{n_k} y_{ik}^{\mathrm{obs}}$
is an unbiased and consistent estimator of $\theta_k^{(0)},$ we use $y_k^{\mathrm{obs}}$ to define the discordance as
$\mathrm{Disc}\left(\mathcal D_n^{\mathrm{obs}},\bm\theta^{\ast(b)}\right)=
 \sum_{i\neq j}\mathbb{I}(
 \bigl(y_i^{\mathrm{obs}} - y_j^{\mathrm{obs}}\bigr)
 \bigl(\theta_i^{\ast(b)} - \theta_j^{\ast(b)}\bigr)<0),$
and then the candidate set \(\mathcal C_{\mathcal{V}}(\mathcal{D}_n^{\mathrm{obs}})\) is obtained as in~\eqref{eq:candidateset}.\\
A key observation is that the true latent noise $\bm u^{\mathrm{rel}}$ generates the
true ranking: $\bm R^{(0)}
=\mathcal S\!\big(H(\mathcal D^{\mathrm{obs}},\bm u^{\mathrm{rel}})\big).$ Consequently, any repro sample $\bm u^\star$ that lies sufficiently close to 
$\bm u^{\mathrm{rel}}$ must produce the same ordering.
Intuitively, if we generate many independent draws of $\bm u^\star$, at least one
should fall in a neighbourhood of $\bm u^{\mathrm{rel}}$ and therefore replicate
$\bm R^{(0)}$.

\phantomsection
\noindent\textbf{Neighborhood condition:}
\label{subsec:NeighborhoodCondition} Formally, suppose there exists a neighborhood 
$Q_n(\bm u^{\mathrm{rel}})\subseteq\mathcal U$ such that every $\bm u^\star\in Q_n(\bm u^{\mathrm{rel}})$ satisfies
$\mathcal S\!\big(H(\mathcal D^{\mathrm{obs}},\bm u^\star)\big)=\bm R^{(0)}.$ We assume an independent latent-noise copy $\bm u^\star$ falls in this neighborhood with positive probability:
\begin{equation}
\label{A1}
\tag{A1}
\mathbb P_{\bm U,\bm U^\star}\!\big(\bm U^\star\in Q_n(\bm U)\big)\ \ge\ c_n>0.
\end{equation}
Condition~(A1) formalizes the idea that the true ordering reappears among the repro
samples with non-negligible probability.

\phantomsection
\noindent\textbf{Guaranteeing coverage:}
\label{subsec:guaranteeingcoverage} We now establish that the true rank vector $\bm R^{(0)}$ lies in the
candidate set $\mathcal C_{\mathcal{V}}(\mathcal{D}^{\mathrm{obs}})$ with high probability.
The first step is to show that the true parameter
$\bm\theta^{(0)}$ falls in the feasible region
$\{\bm\theta:\mathrm{Disc}(\mathcal D,\bm\theta)<c\}$ with
non-negligible probability.  The argument is based on pairwise reversal
probabilities and a simple Markov bound, the proof is given in the
Appendix.
\begin{lemma}\label{thm:disc-markov-ordered}
Let $\bm{\theta}^{(0)}\in\mathbb{R}^K$ be fixed, consider the model $\bm{\theta}^{(0)} 
= H(\mathcal{D},\bm{U})$ and let $\hat{\bm{\theta}}\in\mathbb{R}^K$ be any reasonable estimator.
For $1\le i\ne j\le K$, define the gap $\Delta^{(0)}_{ij}=\theta^{(0)}_i-\theta^{(0)}_j$, the gap between estimator $\widehat\Delta_{ij}=\hat\theta_i-\hat\theta_j$, $\delta_{ij}=\widehat\Delta_{ij}-\Delta^{(0)}_{ij}$ and $p_{ij} = \mathbb{P}_{\bm{U}}\big(|\delta_{ij}|\ \ge\ |\Delta^{(0)}_{ij}|\big).$ Then for any $c>0,$ \vspace{-3mm}
\begin{equation}\label{eq:disc-basic-ordered}
\mathbb{P}_{\bm{U}}\big\{\bm{\theta}^{(0)}\notin \{\bm{\theta}:Disc(  \mathcal{D},\bm{\theta})<c\}\}=\mathbb{P}_{\bm{U}}\!\big\{Disc(  \mathcal{D},\bm{\theta}^{(0)})\ge c\big\}
\ \le\
\frac{1}{c}\sum_{1\le i\ne j\le K} p_{ij}
\end{equation} 
 \noindent Moreover: {\bf (a) Finite Variance.}
If $Variance(\delta_{ij})\le m_{ij}^2$, then $p_{ij} \le \frac{m_{ij}^2}{\Delta^{(0)^2}_{ij}}$,\\ $ \mathbb{P}_{\bm{U}}\big\{\bm{\theta}^{(0)}\notin \{\bm{\theta}:Disc(  \mathcal{D};\bm{\theta})<c\}\} \le
\frac{1}{c}\sum_{i\ne j}\frac{m_{ij}^2}{\Delta^{(0)^2}_{ij}}$\\
\noindent{\bf (b) Sub-Gaussian}
If $\delta_{ij}$ is sub-Gaussian with parameter $\tau_{ij}^2$, that is  
$\mathbb{E}_{\bm{U}}\big[e^{\lambda \delta_{ij}}\big] \le \exp\!\Big(\frac{\lambda^2\tau_{ij}^2}{2}\Big)$ for all $\lambda\in\mathbb{R},$ %
then $\mathbb{P}_{\bm U}\!\Big\{
\bm{\theta}^{(0)}\notin 
\{\bm{\theta}:\mathrm{Disc}(\mathcal D,\bm\theta)<c\}
\Big\}\le\frac{1}{c}\sum_{i\ne j}2\exp\left(-\frac{(\Delta_{ij}^{(0)})^2}{2\tau_{ij}^2}\right)$ 
\\ In addition, if $\Delta^{(0)}_{\min}=\min_{i\ne j}|\Delta^{(0)}_{ij}|>0$  $\tau_{ij}\le\tau$, $i\ne j$, the shortfall from $1$ decays exponentially in the pairwise signal-to-noise ratio and 
$\mathbb{P}_{\bm{U}}\big\{\bm{\theta}^{(0)}\notin \{\bm{\theta}:Disc(  \mathcal{D};\bm{\theta})<c\}\}\le\
\frac{2K(K-1)}{c}\exp\!\Big(-\frac{\Delta_{\min}^{{(0)^2}}}{2\tau^2}\Big)$ 
\end{lemma}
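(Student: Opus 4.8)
The plan is to start from the observation that a single ordered pair $(i,j)$ contributes to $\mathrm{Disc}(\mathcal D,\bm\theta^{(0)})$ precisely when $(\widehat\Delta_{ij})(\Delta^{(0)}_{ij})<0$, i.e.\ the estimator gap has the wrong sign relative to the true gap. Writing $\widehat\Delta_{ij}=\Delta^{(0)}_{ij}+\delta_{ij}$, a sign reversal forces $|\delta_{ij}|>|\Delta^{(0)}_{ij}|$ (since to flip the sign of $\Delta^{(0)}_{ij}$ the perturbation must at least cancel it). Hence the indicator $\mathbb I\big((\widehat\Delta_{ij})(\Delta^{(0)}_{ij})<0\big)\le \mathbb I\big(|\delta_{ij}|\ge|\Delta^{(0)}_{ij}|\big)$ pointwise, so $\mathbb E_{\bm U}\,\mathbb I\big((\widehat\Delta_{ij})(\Delta^{(0)}_{ij})<0\big)\le p_{ij}$. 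Summing over the $K(K-1)$ ordered pairs gives $\mathbb E_{\bm U}\,\mathrm{Disc}(\mathcal D,\bm\theta^{(0)})\le\sum_{i\ne j}p_{ij}$. Since $\mathrm{Disc}\ge 0$, Markov's inequality yields $\mathbb P_{\bm U}\{\mathrm{Disc}(\mathcal D,\bm\theta^{(0)})\ge c\}\le c^{-1}\sum_{i\ne j}p_{ij}$, which is \eqref{eq:disc-basic-ordered}; the first equality in that display is just the definition of the feasible region $\{\bm\theta:\mathrm{Disc}(\mathcal D,\bm\theta)<c\}$ evaluated at $\bm\theta^{(0)}$.

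For part (a), I would bound $p_{ij}=\mathbb P_{\bm U}(|\delta_{ij}|\ge|\Delta^{(0)}_{ij}|)$ by Chebyshev's inequality. The only subtlety is that $\delta_{ij}$ need not be centered; but "any reasonable estimator" should be read as meaning $\widehat\theta$ is (asymptotically) unbiased for the contrasts, or one absorbs the bias — I would either assume $\mathbb E_{\bm U}\delta_{ij}=0$ or note that Chebyshev applied around the mean together with a negligible-bias assumption gives $p_{ij}\le \mathrm{Var}(\delta_{ij})/(\Delta^{(0)}_{ij})^2\le m_{ij}^2/(\Delta^{(0)}_{ij})^2$. Substituting into \eqref{eq:disc-basic-ordered} gives the stated bound $c^{-1}\sum_{i\ne j} m_{ij}^2/(\Delta^{(0)}_{ij})^2$.

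For part (b), under the stated sub-Gaussianity $\mathbb E_{\bm U}[e^{\lambda\delta_{ij}}]\le e^{\lambda^2\tau_{ij}^2/2}$, the standard Chernoff two-sided tail bound gives $\mathbb P_{\bm U}(|\delta_{ij}|\ge t)\le 2\exp(-t^2/(2\tau_{ij}^2))$ for $t>0$; taking $t=|\Delta^{(0)}_{ij}|$ yields $p_{ij}\le 2\exp(-(\Delta^{(0)}_{ij})^2/(2\tau_{ij}^2))$, and plugging into \eqref{eq:disc-basic-ordered} gives the first sub-Gaussian display. The final uniform bound follows by replacing each $\tau_{ij}$ by the common upper bound $\tau$ and each $|\Delta^{(0)}_{ij}|$ by the minimum gap $\Delta^{(0)}_{\min}$ (both steps only inflate the exponential), and counting the $K(K-1)$ pairs to get $\tfrac{2K(K-1)}{c}\exp(-(\Delta^{(0)}_{\min})^2/(2\tau^2))$.

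The only real point requiring care — and the place where "reasonable estimator" must be pinned down — is the pointwise domination $\mathbb I\big((\widehat\Delta_{ij})(\Delta^{(0)}_{ij})<0\big)\le \mathbb I\big(|\delta_{ij}|\ge|\Delta^{(0)}_{ij}|\big)$ together with the centering issue in part (a); the Markov/Chebyshev/Chernoff steps themselves are routine. I would state the unbiasedness (or asymptotically negligible bias) of the contrasts explicitly as the meaning of "reasonable estimator" so that the Chebyshev bound in (a) is clean.
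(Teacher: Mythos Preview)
Your proposal is correct and follows essentially the same route as the paper: the indicator domination $\mathbb I\{(\widehat\Delta_{ij})(\Delta^{(0)}_{ij})<0\}\le \mathbb I\{|\delta_{ij}|\ge|\Delta^{(0)}_{ij}|\}$, then Markov on the sum, then Chebyshev for (a) and the Chernoff sub-Gaussian tail for (b), with the final uniform bound by worst-casing the gap and the proxy. Your flagged centering issue in (a) is legitimate; the paper applies Chebyshev in exactly the same one-line fashion and implicitly treats $\delta_{ij}$ as mean-zero, so your suggestion to make ``reasonable estimator'' mean unbiasedness of the contrasts is precisely the tacit assumption in the paper's argument.
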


\paragraph{High–probability inclusion in the candidate set:} Using Lemma~\ref{thm:disc-markov-ordered}, we next show that if
$|\mathcal V|$ is sufficiently large, then at least one repro draw
aligns with $\bm R^{(0)}$, ensuring the true rank lies in the candidate
set defined in~\eqref{eq:candidateset}.  Proof details are given in the Appendix.
\begin{lemma}\label{coverage-disc}Let $\mathcal V=\{\bm u^{\star(1)},\ldots,\bm u^{\star(|\mathcal V|)}\}$ denote $|\mathcal V|$ draws from $F_{\bm U}(\cdot)$, and suppose Assumption~\textup{(A1)} holds. 
Define 
$q_n=\mathbb{P}_{\bm U}\!\big\{\mathrm{Disc}(\mathcal D,\bm\theta^{(0)})<c\big\}$. Then for a positive constant $c_0>0,$ the candidate set $\mathcal C_{\mathcal{V}}(\mathcal{D}^{\mathrm{obs}})$ defined in~\eqref{eq:candidateset} is such that
$\mathbb{P}_{\bm U,\mathcal V}\!\left(\bm R^{(0)}\notin \mathcal C_{\mathcal{V}}(\mathcal{D})\right)
\ \le\ 1-q_n+e^{-c_0|\mathcal V|}$
\end{lemma}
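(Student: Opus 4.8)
The plan is to decompose the failure event $\{\bm R^{(0)}\notin\mathcal C_{\mathcal V}(\mathcal D)\}$ into a \emph{discordance} part and a \emph{neighborhood} part, show that on the complement of both parts the true rank vector is exhibited as a member of the candidate set by one explicit repro draw, and then finish with a union bound. Concretely, set $E_1=\{\mathrm{Disc}(\mathcal D,\bm\theta^{(0)})<c\}$ and $E_2=\{\exists\,b\le|\mathcal V|:\ \bm u^{\star(b)}\in Q_n(\bm u^{\mathrm{rel}})\}$, where $\bm u^{\mathrm{rel}}$ denotes the latent noise that generated $\mathcal D$. I will first establish the deterministic inclusion $E_1\cap E_2\subseteq\{\bm R^{(0)}\in\mathcal C_{\mathcal V}(\mathcal D)\}$ and then bound $\mathbb P_{\bm U,\mathcal V}(E_1^c)$ and $\mathbb P_{\bm U,\mathcal V}(E_2^c)$ separately.

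For the deterministic step, fix a sample point in $E_1\cap E_2$ and choose $b$ with $\bm u^{\star(b)}\in Q_n(\bm u^{\mathrm{rel}})$. By the neighborhood condition, the perturbed parameter $\bm\theta^{\star(b)}=H(\mathcal D,\bm u^{\star(b)})$ satisfies $\mathcal S(\bm\theta^{\star(b)})=\bm R^{(0)}=\mathcal S(\bm\theta^{(0)})$. The crucial observation is that $\mathrm{Disc}(\mathcal D,\cdot)$ depends on its second argument only through the sign pattern $\{\mathrm{sign}(\theta_i-\theta_j)\}_{i\ne j}$; since the true scores are distinct and $\bm\theta^{\star(b)}$ has distinct coordinates almost surely under the continuous noise, equality of the two rank vectors forces equality of the sign patterns, whence $\mathrm{Disc}(\mathcal D,\bm\theta^{\star(b)})=\mathrm{Disc}(\mathcal D,\bm\theta^{(0)})<c$ on $E_1$. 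Therefore $\bm u^{\star(b)}\in\mathcal V$ satisfies both constraints in \eqref{eq:candidateset} with $\bm R^\star=\bm R^{(0)}$, so $\bm R^{(0)}\in\mathcal C_{\mathcal V}(\mathcal D)$.

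It then remains to control the two bad events. Since $E_1$ does not involve $\mathcal V$, $\mathbb P_{\bm U,\mathcal V}(E_1^c)=\mathbb P_{\bm U}\{\mathrm{Disc}(\mathcal D,\bm\theta^{(0)})\ge c\}=1-q_n$ directly from the definition of $q_n$ (and Lemma~\ref{thm:disc-markov-ordered} can be substituted here if a quantitative lower bound on $q_n$ is desired). For $E_2$, I condition on $\bm U$: given the realized latent noise, the repro draws $\bm u^{\star(1)},\dots,\bm u^{\star(|\mathcal V|)}$ are i.i.d.\ from $F_{\bm U}$ and independent of it, so $\mathbb P_{\mathcal V}(E_2^c\mid\bm U)=\bigl(1-\mathbb P_{\bm U^\star}(\bm U^\star\in Q_n(\bm U))\bigr)^{|\mathcal V|}\le(1-c_n)^{|\mathcal V|}\le e^{-c_0|\mathcal V|}$ by Assumption~(A1) and $1-c_n\le e^{-c_n}$, with $c_0=c_n>0$. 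Since this bound does not depend on $\bm U$, $\mathbb P_{\bm U,\mathcal V}(E_2^c)\le e^{-c_0|\mathcal V|}$, and the union bound $\mathbb P_{\bm U,\mathcal V}(\bm R^{(0)}\notin\mathcal C_{\mathcal V}(\mathcal D))\le\mathbb P_{\bm U,\mathcal V}(E_1^c)+\mathbb P_{\bm U,\mathcal V}(E_2^c)$ yields the stated bound.

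I expect the main obstacle to be the geometric-decay bound for $E_2^c$: the product argument needs the neighborhood hitting probability $\mathbb P_{\bm U^\star}(\bm U^\star\in Q_n(\bm u^{\mathrm{rel}}))$ to be at least $c_n$ for (almost) every realization of $\bm u^{\mathrm{rel}}$, whereas a purely marginal reading of (A1) would allow this conditional probability to vanish on a set of positive mass, in which case $\mathbb E_{\bm U}[(1-\mathbb P_{\bm U^\star}(\bm U^\star\in Q_n(\bm U)))^{|\mathcal V|}]$ need not decay at all. I will therefore invoke (A1) in its conditional (essential-infimum) form, which is what the argument requires. A secondary, routine point is the bookkeeping around ties in $\hat{\bm\theta}(\mathcal D)$ or in $\bm\theta^{\star(b)}$, but these only ever zero out the relevant indicator and hence leave the identity $\mathrm{Disc}(\mathcal D,\bm\theta^{\star(b)})=\mathrm{Disc}(\mathcal D,\bm\theta^{(0)})$ intact.
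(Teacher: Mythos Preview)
Your decomposition into $E_1=\{\mathrm{Disc}(\mathcal D,\bm\theta^{(0)})<c\}$ and $E_2=\{\exists\,b:\bm u^{\star(b)}\in Q_n(\bm u^{\mathrm{rel}})\}$, the deterministic inclusion $E_1\cap E_2\subseteq\{\bm R^{(0)}\in\mathcal C_{\mathcal V}(\mathcal D)\}$, and the union bound match the paper's argument exactly (the paper writes these as $B$ and $\bigcup_b A(\bm U,\bm U^{\ast(b)})$). Your explicit justification that $\mathrm{Disc}(\mathcal D,\cdot)$ depends on its second argument only through the induced ordering is a detail the paper leaves implicit.

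The one substantive divergence is in the bound for $\mathbb P_{\bm U,\mathcal V}(E_2^c)$. You invoke (A1) in a conditional (essential-infimum) form, so that $\mathbb P_{\bm U^\star}(\bm U^\star\in Q_n(\bm u))\ge c_n$ for almost every $\bm u$, which gives $(1-c_n)^{|\mathcal V|}$ directly after conditioning. The paper keeps the marginal reading of (A1) and instead inserts a Jensen step,
\[
\mathbb E_{\bm U}\!\bigl[(1-p(\bm U))^{|\mathcal V|}\bigr]
\;\le\;
\bigl(1-\mathbb E_{\bm U}[p(\bm U)]\bigr)^{|\mathcal V|},
\qquad p(\bm U)=\mathbb P_{\bm U^{\ast}\mid\bm U}\!\bigl(\bm U^{\ast}\in Q_n(\bm U)\bigr),
\]
before applying the marginal bound $\mathbb E_{\bm U}[p(\bm U)]\ge c_n$. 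Your caution here is well placed: $x\mapsto(1-x)^{|\mathcal V|}$ is convex on $[0,1]$, so Jensen's inequality runs the other way, and the paper's displayed inequality is not valid as written. Your pointwise strengthening of (A1) (or some equivalent uniform-in-$\bm u$ hypothesis) is precisely what this step requires; it is a genuine additional assumption beyond the marginal statement, but without it the exponential decay in $|\mathcal V|$ need not hold, exactly as you note.
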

\phantomsection
\noindent\textbf{Choice of c:} 
We choose the threshold $c$ so that the probability 
$q_n=\mathbb P_{\bm U}\!\bigl(
\mathrm{Disc}(\mathcal D^{\mathrm{obs}},\bm\theta^{(0)})<c\bigr)$ is close to one.  
Operationally, we fix a target $q_n\in\{0.90,0.95\}$, generate i.i.d.\ draws 
$\bm u^{\star(b)}$ from $ F_{\bm U}$, construct $\bm\theta^{\star(b)}$, compute 
$\mathrm{Disc}(\mathcal D^{\mathrm{obs}},\bm\theta^{\star(b)})$ for 
$b=1,\dots,B$, and set $c$ to the empirical 90th--95th percentile of these 
discordance values.  
This retains nearly all oracle-like $\bm \theta^\star.$
In block-independent models where observations for differents populations are independent, $q_n$ can be made explicit. For example, Lemma~\ref{thm:disc-markov-ordered}(b) shows that
$\mathbb P_{\bm U}\!\left(
(\widehat\theta_i-\widehat\theta_j)
(\theta_i^{(0)}-\theta_j^{(0)})<0
\right)
\le 
\exp\!\left\{-\frac{(\Delta_{ij}^{(0)})^2}{2\tau_{ij,n}^2}\right\},$
so the expected discordance fraction satisfies
$\frac{\mathbb E[\mathrm{Disc}]}{K_{\mathrm{pairs}}}\le
\exp\!\left\{
-\frac{(\Delta_{\min}^{(0)})^2}{2\tau^2}
\right\},$ where $K_{\mathrm{pairs}}=\binom{K}{2}.$
This quantity is exponentially small in the minimum signal-to-noise ratio.  
We estimate it empirically using
$\widehat{\mathrm{SNR}}_{\min}
=
\min_{i\neq j}
\frac{|\widehat\Delta_{ij}|}{\hat\tau_{ij}},$
$\widehat{\bar p}_{\mathrm{disc}}
=\exp\!\left(-\frac{\widehat{\mathrm{SNR}}_{\min}^2}{2}\right).$ McDiarmid’s inequality yields $q_n
=
\mathbb P_{\bm U}\!\left(
\mathrm{Disc}(\mathcal D^{\mathrm{obs}},\bm\theta^{(0)})
<
p^\ast K_{\mathrm{pairs}}
\right)
\ge 
1-\exp\{-c^\star K\varepsilon^2\},$
$\varepsilon = p^\ast - \widehat{\bar p}_{\mathrm{disc}},$
so $q_n\approx 1$ whenever $p^\ast > \widehat{\bar p}_{\mathrm{disc}}$.  
Motivated by this bound, we choose 
$p^\ast=\lambda\,\widehat{\bar p}_{\mathrm{disc}},$
$\lambda\approx 1.2\text{--}1.5.$
Finally, we set the discordance cutoff to
$c=\lfloor p^\ast K_{\mathrm{pairs}}\rfloor,$ a simple and data-adaptive threshold ensuring that the repro-samples retained in 
$\mathcal C_{\mathcal V}(\mathcal D^{\mathrm{obs}})$ remain consistent with the observed ordering with high probability.

\phantomsection
\noindent{\textbf{Refined confidence set:}}
\label{subsec:refinedset}We refine the the rank confidence set
$\Gamma_\alpha^{\mathcal I}(\mathcal D^{\mathrm{obs}})$
by intersecting it with the candidate set as
$\tilde{\Gamma}_{\mathcal{V}_\alpha}^{\mathcal I}(\mathcal D^{\mathrm{obs}})
=
\Gamma_{\alpha}^{\mathcal I}(\mathcal D^{\mathrm{obs}})
\cap
\mathcal C_{\mathcal{V}}(\mathcal{D}^{\mathrm{obs}}).$
This removes rank vectors incompatible with any low discordance repro sample.

\begin{corollary}
Let $\bm R|_{\mathcal{I}}^{(0)}$ denote the true rank vector for the populations 
$\{\mathcal P_{t_\ell}: t_\ell\in\mathcal I\}$. Assume that the model \eqref{eq:rank_and_u} holds and that the Borel--set condition 
\eqref{Tuprob} is exact for every $\bm\theta$. Let 
$q_n=\mathbb P_{\bm U}\!\left\{\mathrm{Disc}(\mathcal D,\bm{\theta}^{(0)}) < c \right\},$
and suppose $1-q_n \le \zeta.$ 
Then, for some constant $c_0>0,$ the set
$\tilde{\Gamma}_{\mathcal{V}_\alpha}^{\mathcal I}(\mathcal D)
=
\Gamma^{\mathcal I}_{\alpha}(\mathcal D)
\;\cap\;
\bigl\{\bm R^\star : \bm R^\star \in C_{\mathcal{V}}(\mathcal{D})\bigr\},$
where $\Gamma^{\mathcal I}_{\alpha}(\mathcal D)$ is defined in 
\eqref{conf_set_final} satisfies, 
$$\mathbb P_{\bm U,\mathcal V}\!\left(
\bm R|_{\mathcal{I}}^{(0)} \in \tilde{\Gamma}_{\mathcal{V}_\alpha}^{\mathcal I}(\mathcal D)
\right)
\;\ge\;
1 - \alpha - \zeta - e^{-c_0 |\mathcal V|}$$
Moreover, if 
$\mathbb P_{\bm U}\!\{\,T(\bm U,\bm\theta)\in B_\alpha(\bm\theta)\,\}
\;\ge\;
(1-\alpha)\{1+o(\delta')\},$
for some $\delta'>0$ which may or not may not depend on $\sum_{k=1}^K n_k$ we get 
$\mathbb P_{\bm U,\mathcal V}\!\left(
 \bm R|_{\mathcal{I}}^{(0)} \in \tilde{\Gamma}_{\mathcal{V}_\alpha}^{\mathcal I}(\mathcal D)
\right)
\;\ge\;
(1-\alpha)\{1+o(\delta')\}
-\zeta
-e^{-c_0 |\mathcal V|}.$
\end{corollary}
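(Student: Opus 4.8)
The plan is to bound the probability that $\bm R|_{\mathcal I}^{(0)}$ escapes the refined set by a union bound over the two distinct failure modes: either the unrefined rank set $\Gamma_\alpha^{\mathcal I}(\mathcal D)$ already misses the truth, or the candidate set $\mathcal C_{\mathcal V}(\mathcal D)$ does. Concretely, introduce the two events $A=\{\bm R|_{\mathcal I}^{(0)}\in\Gamma_\alpha^{\mathcal I}(\mathcal D)\}$ and $B=\{\bm R^{(0)}\in\mathcal C_{\mathcal V}(\mathcal D)\}$, the latter referring to the full rank vector. On $B$ the full true rank vector $\bm R^{(0)}$ is a member of $\mathcal C_{\mathcal V}(\mathcal D)$, so its restriction $\bm R^{(0)}|_{\mathcal I}=\bm R|_{\mathcal I}^{(0)}$ is the restriction of a member of the candidate set; intersecting with $A$ then places $\bm R|_{\mathcal I}^{(0)}$ inside the set $\tilde{\Gamma}_{\mathcal V_\alpha}^{\mathcal I}(\mathcal D)=\Gamma_\alpha^{\mathcal I}(\mathcal D)\cap\{\bm R^\star:\bm R^\star\in\mathcal C_{\mathcal V}(\mathcal D)\}$. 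Thus $A\cap B\subseteq\{\bm R|_{\mathcal I}^{(0)}\in\tilde{\Gamma}_{\mathcal V_\alpha}^{\mathcal I}(\mathcal D)\}$, and it remains to lower bound $\mathbb P_{\bm U,\mathcal V}(A\cap B)\ge 1-\mathbb P_{\bm U,\mathcal V}(A^c)-\mathbb P_{\bm U,\mathcal V}(B^c)$.

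For the first term I would argue that $\Gamma_\alpha^{\mathcal I}(\mathcal D)$ is a function of $\mathcal D$ alone (equivalently of $\bm U$, since the model \eqref{eq:rank_and_u} generates $\mathcal D$ from $\bm U$) and does not involve the repro-sample collection $\mathcal V$; because $\mathcal V$ consists of draws independent of $\bm u^{\mathrm{rel}}$, the event $A$ is $\sigma(\bm U)$-measurable and carries the same probability under $\mathbb P_{\bm U,\mathcal V}$ as under $\mathbb P_{\bm U}$. Since the corollary assumes exactly the hypotheses of Theorem~\ref{thm_rank_confidence_jrssb} (model \eqref{eq:rank_and_u} holds and the Borel-set condition \eqref{Tuprob} is exact), that theorem gives $\mathbb P_{\bm U,\mathcal V}(A^c)=\mathbb P_{\bm U}(A^c)\le\alpha$; under the relaxed version $\mathbb P_{\bm U}\{T(\bm U,\bm\theta)\in B_\alpha(\bm\theta)\}\ge(1-\alpha)\{1+o(\delta')\}$ it gives instead $\mathbb P_{\bm U,\mathcal V}(A^c)\le 1-(1-\alpha)\{1+o(\delta')\}$.

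For the second term, Lemma~\ref{coverage-disc} (invoked under Assumption~\textup{(A1)}) gives $\mathbb P_{\bm U,\mathcal V}(B^c)=\mathbb P_{\bm U,\mathcal V}(\bm R^{(0)}\notin\mathcal C_{\mathcal V}(\mathcal D))\le 1-q_n+e^{-c_0|\mathcal V|}$, and the standing hypothesis $1-q_n\le\zeta$ bounds this by $\zeta+e^{-c_0|\mathcal V|}$. Substituting both pieces into the union bound yields $\mathbb P_{\bm U,\mathcal V}(\bm R|_{\mathcal I}^{(0)}\in\tilde{\Gamma}_{\mathcal V_\alpha}^{\mathcal I}(\mathcal D))\ge 1-\alpha-\zeta-e^{-c_0|\mathcal V|}$ in the exact case and $\ge(1-\alpha)\{1+o(\delta')\}-\zeta-e^{-c_0|\mathcal V|}$ in the relaxed case, which is precisely the stated conclusion.

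I do not expect a deep obstacle; the argument is essentially bookkeeping, and the only points that require care are structural rather than analytic. The first is reading the intersection correctly: the candidate set $\mathcal C_{\mathcal V}(\mathcal D)$ consists of full permutations in $S_K$ whereas $\Gamma_\alpha^{\mathcal I}(\mathcal D)$ lives in the restricted rank space, so the second factor must be understood as the image of $\mathcal C_{\mathcal V}(\mathcal D)$ under restriction to $\mathcal I$, and one must check the compatibility that restricting the \emph{true} full rank vector produces the true restricted rank vector — which is immediate from the definition of $\bm R|_{\mathcal I}^{(0)}$. The second is that the two probability bounds, one originally stated for $\mathbb P_{\bm U}$ and the other for $\mathbb P_{\bm U,\mathcal V}$, may legitimately be added on the common probability space of $(\bm U,\mathcal V)$; this is justified by the independence of $\mathcal V$ from $\bm u^{\mathrm{rel}}$ and the $\mathcal V$-free nature of $\Gamma_\alpha^{\mathcal I}(\mathcal D)$ noted above.
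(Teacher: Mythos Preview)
Your proposal is correct and follows essentially the same approach as the paper. The paper decomposes $\mathbb P(\bm R|_{\mathcal I}^{(0)}\in\Gamma_\alpha^{\mathcal I})$ according to whether $\bm R^{(0)}\in C_{\mathcal V}(\mathcal D)$ or not, obtaining $\mathbb P(\bm R|_{\mathcal I}^{(0)}\in\tilde\Gamma)=\mathbb P(A)-\mathbb P(A\cap B^c)\ge(1-\alpha)-\mathbb P(B^c)$, and then invokes Lemma~\ref{coverage-disc}; your union bound $\mathbb P(A\cap B)\ge 1-\mathbb P(A^c)-\mathbb P(B^c)$ yields the identical inequality since $1-\mathbb P(A^c)=\mathbb P(A)$, and your care about the restriction to $\mathcal I$ and the common probability space $(\bm U,\mathcal V)$ makes explicit points the paper leaves implicit.
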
 
A proof is given in the Appendix. As long as the discordance filter excludes the true parameter with probability
at most $\zeta$, the set $\tilde{\Gamma}_{\mathcal{V}_\alpha}^{\mathcal I}(\mathcal D^{\mathrm{obs}})$ preserves near-nominal
coverage, $1-\alpha-\zeta - o(1).$ Moreover, by choosing the Borel region to have level $1-\alpha'$ with
$\alpha'$ arbitrarily close to $\alpha$, selecting the discordance tolerance so
that $\zeta$ is negligible, and taking $|\mathcal V|$ sufficiently large so
that the Monte--Carlo error term $e^{-c_0|\mathcal V|}$ vanishes, the three
sources of error can be made arbitrarily small.  Consequently, the coverage of
$\tilde{\Gamma}_{\mathcal{V}_\alpha}^{\mathcal I}(\mathcal D^{\mathrm{obs}})$ can be tuned to approach $1-\alpha$.
The method for obtaining the final confidence set is summarized in Algorithm \ref{alg:candidate}.
\begin{algorithm}[H]
\caption{Candidate‐Adjusted Rank Confidence Set $\tilde{\Gamma}_{\mathcal{V}_\alpha}^{\mathcal I}(\mathcal D^{\mathrm{obs}})$}
\label{alg:candidate}
\begingroup
\renewcommand{\baselinestretch}{0.92}\normalsize
\setlength{\belowdisplayskip}{0pt}
\setlength{\abovedisplayskip}{0pt}
\setlength{\intextsep}{2pt}
\setlength{\textfloatsep}{2pt}
\setlength{\floatsep}{2pt}
\setlength{\parskip}{0pt}
\setlength{\itemsep}{1pt}
\begin{algorithmic}[1]

\State \textbf{Step 1.}  Apply Algorithm~\ref{alg:construction} to the observed data $\mathcal D^{\mathrm{obs}}$ to obtain  
$\Gamma_{\alpha}^{\mathcal I}(\mathcal D^{\mathrm{obs}}),$
the level $1-\alpha$ Repro-Samples rank confidence set derived from $\bm{u}^{\star}\in{\mathcal U}$.
\State \textbf{Step 2.}  
Intersect the base set with the candidate set $C_{\mathcal{V}}(\mathcal{D}^{\mathrm{obs}})$ obtained from a new independent set of  $\{\bm u^{*(i)}\}_{i=1}^{\mathcal V}$ from the same $F_{\bm{U}}(\cdot)$ and obtain $\tilde{\Gamma}_{\mathcal{V}_\alpha}^{\mathcal I}(\mathcal D^{\mathrm{obs}})=
\Gamma_{\alpha}^{\mathcal I}(\mathcal D^{\mathrm{obs}})\cap C_{\mathcal{V}}(\mathcal{D}^{\mathrm{obs}})$
\end{algorithmic}
\endgroup
\end{algorithm}

\subsection{Expected size of the Sub-Gaussian candidate set and discussion}

The coverage guarantees in  
Lemma~\ref{thm:disc-markov-ordered} and Lemma~\ref{coverage-disc} ensure that the true rank vector 
$\bm R^{(0)}$ is contained in $\mathcal C_{\mathcal{V}}(\mathcal{D}^{\mathrm{obs}})$ 
with high probability.  
To assess the computational cost of the refined confidence set 
$\tilde{\Gamma}_{\mathcal{V}_\alpha}^{\mathcal I}(\mathcal D^{\mathrm{obs}})$, it is therefore 
important to understand the typical size of the candidate set.  
This subsection derives an explicit upper bound on 
$\mathbb E_{\bm U,\mathcal V}\!\big[\,|\mathcal C_{\mathcal{V}}(\mathcal{D})|\,\big]$ 
under sub-Gaussian assumptions for both the plug-in estimation error and the 
latent noise, analogous calculations can be carried out for other light-tailed models with minor modifications.

\begin{lemma} 
\label{expectsize}
Let $\{\bm U^{*(b)}\}_{b=1}^{|\mathcal V|}$ be i.i.d.\ draws from a distribution $F_{\bm U}(.)$, independent of $\mathcal D$, and set
$\bm\theta^{*(b)} \;=\; H\!\big(\mathcal D,\,\bm U^{*(b)}\big), $
$\bm R^{*(b)} \;=\; S\!\big(\bm\theta^{*(b)}\big).$
For $i\neq j$ define
$\Delta_{ij}^{(0)}=\theta_i^{(0)}-\theta_j^{(0)},$ $\Delta_{\min}^{(0)}=\min_{i\neq j}|\Delta_{ij}^{(0)}|,$ 
$\hat{\varepsilon}_{ij}=(\hat\theta_i-\hat\theta_j)-\Delta_{ij}^{(0)},$ $
\delta_{ij}^{*(b)}=(\theta_i^{*(b)}-\theta_j^{*(b)})-\Delta_{ij}^{(0)}.$ For any ranking $\bm{R}=(r_1,..,r_K)\in S_K$, define the ordered-pair normalized discordance 
\[
g(\bm{R})\;=\;\frac{1}{2K_{\mathrm{pairs}}}\sum_{i< j}\mathbf 1\!\Big\{\big({\hat{\theta}_i}-{\hat{\theta}_j}\big)(r_i-r_j)\, < 0\Big\}.
\]If the following assumptions hold
\begin{enumerate}
\item[(B1)] (Sub-Gaussian estimate) For each pair $(i,j)$, $\widehat\varepsilon_{ij}$ is mean-zero sub-Gaussian with proxy $v_{ij,n}^2$; that is 
for all $ i\neq j,$  
$E_{\bm{U}}\!\left[\exp\{t\,\widehat\varepsilon_{ij}\}\right]
\;\le\;
\exp\!\left(\tfrac{t^2 v_{ij,n}^2}{2}\right)$ for $t\in\mathbb{R}.$ Define $\bar v_n^2=\max_{i\ne j} v_{ij,n}^2.$
\item[(B2)] (Sub-Gaussian repro errors) Conditionally on $\bm{U},$ each $\delta_{ij}^{*(b)}$ is mean-zero sub-Gaussian with proxy $\sigma_{ij}^2
$ and 
$\mathbb E_{\bm{U}^{\star(b)}|\bm{U}}[\exp\{t\delta_{ij}^{*(b)}\}]\le \exp\{t^2\sigma_{ij}^2/2\}$ for all $t\in \mathbb{R}.$ Define $\bar{\tau}_n^2=\max_{i\ne j}\sigma^2_{ij}.$.
\item[(B3)] (Disjoint-pair independence) If $(i,j)$ and $(k,\ell)$ are have no indices in common then $\delta_{ij}^{*(b)}$ and $\delta_{k\ell}^{*(b)}$ are independent given $\bm{U}.$
\end{enumerate} 
Then the expected size of the candidate set satisfies
\begin{equation}
\mathbb{E}_{\bm{U},\mathcal{V}}\!\big[\,|\mathcal C_{\mathcal{V}}(\mathcal{D}^{\mathrm{obs}})|\,\big]
\;\le\;
\Big|\{\,\bm{R}:\,g(\bm{R})\le \tilde g_n\,\}\Big|
\;+\;
\sum_{\bm{R}:\,g(\bm{R})>\tilde g_n}
\exp\!\Big\{-\frac{\Delta_{\min}^{(0)\,2}K_{\mathrm{pairs}}}{w_0 \bar\tau_n^2}\,\big(g(\bm{R})-\tilde g_n\big)\Big\}
\label{eq:expected_size_bound_subg_sharp}
\end{equation}
where $w_0\in\{1,\dots,K\}$ is the edge-coloring constant of the ordered-pair 
graph,  $c$ is the discordance budget and 
$\tilde g_n
=
\frac{
\displaystyle 
\frac{c}{2}
\;+\;
\log\!\Bigl[
\bigl(\tfrac{c}{2}+1\bigr)
\bigl(\tfrac{c}{2}\bigr)^{\,c/2}
\Bigl(1+c+c^2 e^{-\Delta_{\min}^{(0)2}/(8\bar v_n^2)}\Bigr)
\Bigr]
\;+\;
\log|\mathcal V|
}{
\displaystyle 
\frac{\Delta_{\min}^{(0)2}}{w_0 \bar\tau_n^2}K_{\mathrm{pairs}}
}$ a positive constant.
\end{lemma}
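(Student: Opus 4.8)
The plan is to expand $|\mathcal C_{\mathcal V}(\mathcal D^{\mathrm{obs}})|=\sum_{\bm R\in S_K}\mathbf 1\{\bm R\in\mathcal C_{\mathcal V}(\mathcal D^{\mathrm{obs}})\}$, take expectations, and split the permutations according to their normalized discordance $g(\bm R)$ with the plug-in fit. For a ``near'' permutation ($g(\bm R)\le\tilde g_n$) I simply bound the indicator by $1$, which produces the leading term $|\{\bm R:g(\bm R)\le\tilde g_n\}|$; all the work is then in the ``far'' permutations ($g(\bm R)>\tilde g_n$). Since $\mathrm{Disc}(\mathcal D^{\mathrm{obs}},\bm\theta^\star)$ depends on $\bm\theta^\star$ only through its induced ranking, $\bm R\in\mathcal C_{\mathcal V}(\mathcal D^{\mathrm{obs}})$ already forces (i) $\mathrm{Disc}(\mathcal D^{\mathrm{obs}},\bm R)<c$, equivalently $\bm R$ disagrees with $\mathcal S(\widehat{\bm\theta})$ on fewer than $c/2$ unordered pairs and $g(\bm R)<c/(4K_{\mathrm{pairs}})$, and (ii) at least one of the $|\mathcal V|$ i.i.d.\ repro draws $\bm u^{\star(b)}$ realizes $\mathcal S(\bm\theta^{\star(b)})=\bm R$. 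Conditioning on $\mathcal D$ and union-bounding over the $|\mathcal V|$ draws gives $\mathbb P(\bm R\in\mathcal C_{\mathcal V}\mid\mathcal D)\le\mathbf 1\{\mathrm{Disc}(\mathcal D,\bm R)<c\}\,\min\{1,|\mathcal V|\pi_{\bm R}\}$, where $\pi_{\bm R}=\mathbb P(\mathcal S(\bm\theta^\star)=\bm R\mid\mathcal D)$.

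The core estimate is an exponential bound on $\pi_{\bm R}$. If $\bm\theta^\star$ induces $\bm R$, then on every pair $(i,j)$ in $D_0(\bm R)$, the set of pairs on which $\bm R$ disagrees with the true order of $\bm\theta^{(0)}$, the repro error must overcome the true gap, $|\delta^\star_{ij}|\ge|\Delta^{(0)}_{ij}|\ge\Delta^{(0)}_{\min}$, an event of conditional probability at most $\exp(-\Delta^{(0)2}_{\min}/(2\bar\tau_n^2))$ by the sub-Gaussian bound (B2). These reversal events are dependent only across pairs sharing a vertex, so by the disjoint-pair independence (B3) together with a proper edge-colouring of the pair graph on $[K]$ — whose chromatic index is $w_0\le K$ — they split into $w_0$ classes of mutually vertex-disjoint, hence independent, pairs; retaining only the largest class, of size $\ge|D_0(\bm R)|/w_0$, and multiplying tails gives $\pi_{\bm R}\le\exp(-\tfrac{\Delta^{(0)2}_{\min}}{2\bar\tau_n^2 w_0}|D_0(\bm R)|)$. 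To convert $|D_0(\bm R)|$ into $2K_{\mathrm{pairs}}g(\bm R)$, the number of disagreements with $\widehat{\bm\theta}$, I use (B1): on the good event $\{\mathcal S(\widehat{\bm\theta})=\bm R^{(0)}\}$ (or the per-$\bm R$ variant requiring $|\widehat\varepsilon_{ij}|<\Delta^{(0)}_{\min}/2$ only on $\bm R$'s at most $c/2$ disagreement pairs), whose complement is controlled by (B1)'s sub-Gaussian tail and hence contributes $e^{-\Delta^{(0)2}_{\min}/(8\bar v_n^2)}$-type terms, the plug-in and true orderings coincide on those pairs, so $|D_0(\bm R)|\ge 2K_{\mathrm{pairs}}g(\bm R)$ and thus $\pi_{\bm R}\le\exp(-\tfrac{\Delta^{(0)2}_{\min}K_{\mathrm{pairs}}}{\bar\tau_n^2 w_0}g(\bm R))$ there.

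Putting the pieces together, $\min\{1,|\mathcal V|\pi_{\bm R}\}\le\exp(-\tfrac{\Delta^{(0)2}_{\min}K_{\mathrm{pairs}}}{\bar\tau_n^2 w_0}(g(\bm R)-\tilde g_n))$ as soon as $\tfrac{\Delta^{(0)2}_{\min}K_{\mathrm{pairs}}}{\bar\tau_n^2 w_0}\tilde g_n$ exceeds $\log|\mathcal V|$ (from the union bound) plus the additive slack coming from the complement of the good event together with the crude bound placed on its contribution — which produces the polynomial factor $1+c+c^2e^{-\Delta^{(0)2}_{\min}/(8\bar v_n^2)}$ — and from the combinatorial over-count $(\tfrac c2+1)(\tfrac c2)^{c/2}$ of the discordance configurations compatible with the budget $c$ that appears when the per-ranking bounds are summed over the far permutations; solving for $\tilde g_n$ reproduces the displayed formula. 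Summing $\min\{1,|\mathcal V|\pi_{\bm R}\}$ over all $\bm R$ — by $1$ on $\{g(\bm R)\le\tilde g_n\}$ and by the exponential on $\{g(\bm R)>\tilde g_n\}$ — and taking expectation over $\bm U$ yields \eqref{eq:expected_size_bound_subg_sharp}.

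The hard part is the middle step. Three issues need care: extracting genuine mutual independence within a single colour class from the pairwise disjoint-pair hypothesis (B3) — automatic in the concrete block-structured models but needing explicit justification; switching the reference ranking from $\widehat{\bm\theta}$, which is what $g$ and the discordance budget actually see, to the unknown $\bm\theta^{(0)}$, which is what (B2) controls, via a per-ranking good event so that the estimation-error correction scales with the budget $c$ rather than with $K$; and tracking all the $c$-dependent combinatorial and bad-event slack into the single closed-form $\tilde g_n$. The edge-colouring/pigeonhole device is the clean route to the constant $w_0/\bar\tau_n^2$ in the exponent, but it is this last bookkeeping that makes the expression for $\tilde g_n$ look intricate.
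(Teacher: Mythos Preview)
Your overall architecture matches the paper's: write $|\mathcal C_{\mathcal V}|$ as a sum of indicators over $S_K$, union-bound over the $|\mathcal V|$ repro draws, bound the per-$\bm R$ inclusion probability by forcing many pairwise reversals, control the dependence among reversal events via an edge-colouring of the pair graph into $w_0$ matchings, and use the sub-Gaussian estimate-error tail (B1) to pass between the plug-in and true orderings. Two technical choices differ. First, you extract the $w_0$ factor by \emph{pigeonhole} (retain the largest colour class, of size $\ge |S|/w_0$, and multiply independent tails), whereas the paper keeps all classes and applies H\"older's inequality with exponents $w_0$ across the matchings; both routes give exactly the same exponent $\Delta_{\min}^{(0)2}|S|/(2w_0\bar\tau_n^2)$, and yours is arguably simpler. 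Second, you first compare $\bm R$ to the \emph{true} order (your $D_0(\bm R)$) and then switch to $g(\bm R)$ via a good event, while the paper works with $M(\bm R)$ (disagreements with $\widehat{\bm\theta}$) throughout and invokes the good-gap event $\mathcal G_S$ \emph{inside} the Chernoff bound to fix signs; these are essentially equivalent rearrangements.

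Where your account goes off is the bookkeeping for the stated $\tilde g_n$. The factor $(\tfrac c2+1)(\tfrac c2)^{c/2}$ does \emph{not} arise from ``summing over far permutations''; in the paper it comes from an auxiliary device (their Lemma~6) that bounds $\mathbb P(E)$ by a sum over subsets $T\subseteq M(\bm R)$ with $|T|\le c/2$, and the subset count $\sum_{t\le c/2}\binom{|M(\bm R)|}{t}\le(\tfrac c2+1)(\tfrac c2)^{c/2}$ produces that factor. Likewise, the additive $c/2$ in the numerator of $\tilde g_n$ is precisely the shift $\exp\{-C_1(|M(\bm R)|-c/2)\}$ used to make the main exponential $\ge 1$ on the relevant range $|M(\bm R)|<c/2$, so that the additive bad-event term $|S|\,2e^{-C_2}$ can be absorbed \emph{multiplicatively} (yielding $1+c+c^2e^{-C_2}$) rather than left additive---which is what prevents a $K!$ blow-up when you sum over $\bm R$. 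Your sketch gestures at this absorption but misidentifies its source. That said, your pigeonhole route, done carefully with the same multiplicative absorption, would in fact yield a slightly \emph{smaller} $\tilde g_n$ (no subset-$T$ factor), and since the right-hand side of the claimed bound is monotone nondecreasing in $\tilde g_n$, this still implies the lemma as stated.
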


\noindent\textbf{Interpretation:}
Lemma~\ref{expectsize} partitions the permutation space into two regimes. A \emph{plausible region}
$\mathcal R_{\mathrm{plaus}}=\{\bm R: g(\bm R)\le \tilde g_n\},$ which contributes deterministically to the expected size. An \emph{implausible region}
$\mathcal R_{\mathrm{impl}}=\{\bm R: g(\bm R)>\tilde g_n\},$ whose contribution is exponentially suppressed at rate $\bigl(\Delta_{\min}^{(0)2 }K_{\mathrm{pairs}}/(w_0\bar\tau_n^2)\bigr)$. Consequently, although $S_K$ contains $K!$ permutations, the effective support 
of the repro-sampling distribution is concentrated around rankings near the 
empirical ordering.  
Rankings with discordance exceeding $\tilde g_n$ have negligible probability 
of appearing in $\mathcal C_{\mathcal{V}}(\mathcal{D}^{\mathrm{obs}})$.

\medskip
\noindent\textbf{Behaviour of the cutoff $\tilde g_n$: }
The threshold $\tilde g_n$ increases with the noise levels $\bar v_n$ and 
$\bar\tau_n$, and with the number of repro-samples $|\mathcal V|$.  
It decreases with the minimal signal $\Delta_{\min}^{(0)}$ and with the number 
of pairwise comparisons $K_{\mathrm{pairs}}$.  
Thus, for moderate $K$ and non-vanishing gaps between $\theta^{(0)}_i$, $\tilde g_n$ remains small, so 
the expected candidate-set size is typically far below $K!$ and the refined 
confidence set remains computationally feasible.

\noindent\textbf{Choice of T(.):} In the repro‑samples framework, one chooses $T(\bm{U},\bm{\theta})$ so that under the true noise $\bm{U}\sim F_{\bm{U}}(\cdot)$, the random vector $T(\bm{U},\bm{\theta})$ has a known distribution (independent of the observed data).  For example, in a quantile  model (Section 3.1) one may simply take $ T(\bm{U},\bm{\theta})=\bm{U}\,$ since $\bm{U}$ itself is binomial and fully characterizes the randomness in the model. More generally, $T(\bm{U},\bm{\theta})$ can be any pivot or likelihood‑ratio–type statistic whose distribution $F_{\bm{U}}(\cdot)$ is tractable.  By focusing on $T(\bm{U},\bm{\theta})$, we bypass the need to approximate the distribution of a point estimator or to invoke large‑sample asymptotics. For a detailed discussion of the choice of $T(\cdot)$ see \cite{xie2022reprosamplesmethodfinite}.
\section{Validating Proposed Method via Case Studies}

This section illustrates the versatility of our method across different ranking
scenarios. Section~\ref{quanteg} presents a quantile ranking example with unknown
population distributions. Sections~\ref{football} and~\ref{PL} address settings in which a
single observation informs multiple parameters: Section~\ref{football} considers a soccer
ranking problem requiring an algorithmic solution, and Section~\ref{PL} analyzes
partial rankings under the Plackett-Luce model where only top-ranked choices
are observed.

\subsection{Ranking Quantiles of Completely Unknown Distributions}
\label{quanteg}

We consider $K$ independent populations $\{\mathcal{P}_k\}_{k=1}^K$ with distribution functions 
$F_k(.)$, and aim to rank them according to their $\zeta$–quantiles  
$\theta_k^{(0)}$, defined by 
$F_k(\theta_k^{(0)})=\zeta$ for a fixed $\zeta\in(0,1)$. For each population $k$, observations $\{y^{\mathrm{obs}}_{ki}\}_{i=1}^{n_k}$ are independently drawn from $\mathcal{P}_k$. We impose no smoothness,  
shape, or parametric assumptions on~$F_k$.\\
\noindent\textbf{Oracle characterization:} For any variable $Y\sim F_{k}$, the indicator $\mathbb{I}(Y<\theta^{(0)}_k)$ follows a Bernoulli$(\zeta)$ distribution. We introduce latent errors $u^{\mathrm{rel}}_{ki}$ which are realizations from $\text{Bernoulli}(\zeta)$ distribution such that the oracle  
quantile is the solution to the implicit equation
\begin{equation}
\label{theta quant}
\theta^{(0)}_k = \arg\min_{\theta} \left\{ \sum_{i=1}^{n_k} \mathbb{I}(y^{obs}_{ki} - \theta < 0) - \sum_{i=1}^{n_k} u^{\mathrm{rel}}_{ki} \right\}
\end{equation}
Although the data cannot be written solely as a function of the parameter via (2), the generalized formulation in (3) still applies. 
Here the nuclear mapping $T(\bm{u},\bm{\theta})$ does not depend on $\bm{\theta}$ so we remove it from the notation. We write $
T_k(\bm{u}) = \sum_{i=1}^{n_k} u_{ki}
$ for each $k$ and the nuclear mapping
$T(\bm{u})^{K\times 1} = ( T_1(\bm{u}), T_2(\bm{u}),\ldots, T_K(\bm{u}) )$. Equation (\ref{theta quant})
implies that the solution $\theta^{(0)}_k$ is bracketed by the order statistics of the observed data as 
$ y^{obs,k}_{(T_k(u^{\mathrm{rel}}))} \le \theta^{(0)}_k \le y^{obs,k}_{(T_k(u^{\mathrm{rel}})+1)}
$ where $y^{obs,k}_{(r)}$ is the  $r^{th}$ sorted value within $\{y^{obs}_{ki}\}_{i=1}^{n_k}$ from $\mathcal{P}_k$. Then $y^{obs,k}_{(T_k(u^{\mathrm{rel}})+1)}< y^{obs,i}_{(T_i(u^{\mathrm{rel}}))}$, implies $\theta^{(0)}_k<\theta^{(0)}_i$ whereas  $y^{obs,i}_{(T_{i}(u^{\mathrm{rel}})+1)}<y^{obs,k}_{(T_k(u^{\mathrm{rel}}))}$ implies $\theta^{(0)}_k>\theta^{(0)}_i$.

\noindent{\textbf{Neighborhood sets and Borel region: }}
For any realization $\bm u\in\mathcal \mathcal{U}$, define the neighborhood sets $\mathcal N_k^{-}(\mathcal D^{\mathrm{obs}},\bm u)
=
\Bigl\{
i\neq k: 
y^{\mathrm{obs},k}_{(T_k(\bm u))}
>
y^{\mathrm{obs},i}_{(T_i(\bm u)+1)}
\Bigr\},$ $
\mathcal N_k^{+}(\mathcal D^{\mathrm{obs}},\bm u)
=
\Bigl\{
i\neq k:
y^{\mathrm{obs},k}_{(T_k(\bm u)+1)}
<
y^{\mathrm{obs},i}_{(T_i(\bm u))}
\Bigr\}.$
 As
$T_k(\bm u^{\mathrm{rel}})=\sum_{i=1}^{n_k} u^{\mathrm{rel}}_{ki}$ is a realization $ \mathrm{Binomial}(n_k,\zeta),$ we construct a marginal $(1-\alpha)^{1/K}$ Binomial confidence interval  
$[c_L^k,c_R^k]$ for each~$k$, defined as the shortest integer interval  
$(c^k_L, c^k_R) = \arg\min_{(i,j) \in \mathcal{A}_k} \; |j-i|$ with $\mathcal{A}_k = \left\{ (i,j) \bigg|\sum_{r=i}^{j} \binom{n_k}{r} \zeta^r (1-\zeta)^{n_k-r} \ge {(1-\alpha)}^{1/K} \right\}$.
The Borel set is
\begin{equation}\label{eq:borel-quantiles}
B_\alpha
=
\Bigl\{
T(\bm U)\;\big|\;
c_L^k\le T_k(\bm U)\le c_R^k,\ \forall k\in[K]
\Bigr\},
\end{equation}
and yields the preliminary confidence region  
$\Gamma_\alpha^\mathcal I(\mathcal D^{\mathrm{obs}})$ defined in  
\eqref{conf_set_final}.
Since $\bm u^{\mathrm{rel}}$ is unobserved, the neighborhood sets are evaluated for  
each artificial copy ${\bm u}^\star\in\mathcal U$.

\noindent{\textbf{Generated quantiles and candidate set:}} For any repro copy ${\bm u}^\star\in\mathcal V$, define the generated $\theta_k^\star
=
\arg\min_{\theta}
\Bigl\{
\sum_{i=1}^{n_k} I(y^{\mathrm{obs}}_{ki}<\theta)
-
\sum_{i=1}^{n_k} u^\star_{ki}
\Bigr\},$
$\widehat\theta_k^{\mathrm{obs}}
=
y^{\mathrm{obs},k}_{(\lceil n_k\zeta\rceil)}
=
\inf\{\theta:\widehat F_k^{\mathrm{obs}}(\theta)\ge\zeta\},$ the sample  
$\zeta$-quantile where $
\widehat F_k^{\mathrm{obs}}(\theta)
=\frac{1}{n_k}\sum_{i=1}^{n_k}
I(y^{\mathrm{obs}}_{ki}\le\theta),$ and the  discordance score comparing $\bm\theta^\star$ and  
$\widehat{\bm\theta}^{\mathrm{obs}}$ as
$\mathrm{Disc}(\mathcal D^{\mathrm{obs}},\bm\theta^\star)
=
\sum_{1\le i\ne j\le K}
I\!\Bigl(
(\widehat\theta_i^{\mathrm{obs}}
-\widehat\theta_j^{\mathrm{obs}})
(\theta_i^\star-\theta_j^\star)
<0
\Bigr).$ A repro sample is accepted into the candidate set  
$\mathcal C_{\mathcal{V}}(\mathcal{D}^{\mathrm{obs}})$ if  
$\mathrm{Disc}(\mathcal D^{\mathrm{obs}},\bm\theta^\star)<c$,  
in which case $\bm R^\star=\mathcal S(\bm\theta^\star)$ is retained.

\noindent{\textbf{Final confidence set: }}We collect all such $\bm R=(r_1,..,r_K)\in S_K,$ for which there exists any ${\bm u}^\star \in \mathcal{U}$ such that  
$T(\bm u^\star)$ lies in $B_\alpha$, and 
$\bigl|\mathcal N_k^{-}(\mathcal D^{\mathrm{obs}},\bm u^\star)\bigr|+1
\le
r_k
\le
K-
\bigl|\mathcal N_k^{+}(\mathcal D^{\mathrm{obs}},\bm u^\star)\bigr|.$ Then using \ref{alg:construction} and ~\ref{alg:candidate} we construct the
refined $(1-\alpha)$ confidence set $\tilde{\Gamma}_{\mathcal{V}_\alpha}(\mathcal D^{\mathrm{obs}}).$  

\subsection{Ranking in Competitive Sports via a Regression Model}
\label{football}
 Next we consider the problem of ranking $K$ sports teams according to latent ability
parameters $\theta_k$, where larger values of $\theta_k$ represent stronger teams.
Let $\bm Y\in\mathbb R^{n}$ denote observed game-level responses (e.g., goal
differences), and let $\bm X\in\mathbb R^{n\times K}$ be a fixed design matrix
encoding nonrandom covariates such as opponent indicators, match locations, or
other game characteristics. For a  noise vector $ \bm{U} \sim F_{\bm{U}}(.) $ and $\sigma$ an unknown scale parameter, we assume the linear regression model $\bm{Y} = \bm{X}\bm{\theta} + \sigma \bm{U}$. The sample realized version of the above model is given by 
$ \bm{y}^{\mathrm{obs}} = \bm{x}^{\mathrm{obs}}\bm{\theta}^{(0)} + \sigma \bm{u}^{\mathrm{rel}}.$\\
\noindent{\textbf{Oracle characterization:}}
By ordinary least squares, $\hat{\bm{\theta}}^{\mathrm{obs}} = (\hat{\theta}^{\mathrm{obs}}_1, \dots, \hat{\theta}^{\mathrm{obs}}_K) = (\bm{x}^{\mathrm{obs}\top}\bm{x}^{\mathrm{obs}})^{-1}\bm{x}^{\mathrm{obs}\top}\bm{y}^{\mathrm{obs}}$. Then, $\hat{\bm{\theta}}^{\mathrm{obs}} = \bm{\theta}^{(0)} + \sigma(\bm{x}^{\mathrm{obs}\top}\bm{x}^{\mathrm{obs}})^{-1}\bm{x}^{\mathrm{obs}\top}\bm{u}^{\mathrm{rel}}.$ 
Thus for $
A=(\bm{x}^{\mathrm{obs}\top}\bm{x}^{\mathrm{obs}})^{-1}\bm{x}^{\mathrm{obs}\top},$
where $A_k$ denotes the $k$th row of~$A$ we have 
$\theta^{(0)}_k
=\widehat\theta^{\mathrm{obs}}_k
-\sigma\,A_k\bm u^{\mathrm{rel}}.$\\
\noindent{\textbf{Neighborhood sets and Borel region:}} The sign of
$\theta_k^{(0)}-\theta_i^{(0)}$ is determined by the sign of
$\widehat\theta_k^{\mathrm{obs}}-\widehat\theta_i^{\mathrm{obs}}$
and the sign of $A_k\bm u^{\mathrm{rel}}-A_i\bm u^{\mathrm{rel}}$, that is $\{A_k\bm u^{\mathrm{rel}}<A_i\bm u^{\mathrm{rel}}, 
\widehat\theta_k^{\mathrm{obs}}>\widehat\theta_i^{\mathrm{obs}}\}\subseteq
\{\theta_k^{(0)}>\theta_i^{(0)}
\},$
and similarly for the reversed inequality.  
Thus for any noise vector $\bm u\in\mathcal V$, define
\begin{align}\label{eq:linear-neighborhoods}
\mathcal N_k^{-}(\mathcal D^{\mathrm{obs}},\bm u)
&=
\bigl\{
i\neq k:\ 
A_k\bm u<A_i\bm u,\ 
\widehat\theta_k^{\mathrm{obs}}>\widehat\theta_i^{\mathrm{obs}}
\bigr\}\\
\mathcal N_k^{+}(\mathcal D^{\mathrm{obs}},\bm u)
&=
\bigl\{
i\neq k:\ 
A_k\bm u>A_i\bm u,\ 
\widehat\theta_k^{\mathrm{obs}}<\widehat\theta_i^{\mathrm{obs}}
\bigr\}
\end{align}
Here the nuclear mapping is $T(\bm u)=\bm u$ itself.  
To control sampling variability of~$\bm u^{\mathrm{rel}}$, we construct the
coordinatewise region $B_\alpha
=
\Bigl\{
\bm U:\ 
c_L^i\le U_i\le c_R^i,\quad i=1,\dots,n
\Bigr\},$
where $[c_L^i,c_R^i]$ is a marginal $(1-\alpha)^{1/n}$ interval for~$u_i$ under
$F_{\bm U}$. Laplace errors are frequently used in competitive-sports applications (e.g.,
soccer goal differences) due to heavy-tailed error patterns and sharp central
peaks.  Such non-Gaussian noise leads to the absence of closed-form estimators
for $\sigma,$ hence our inference is constructed without directly estimating
$\sigma$ from the data. \\
\noindent{\textbf{Generated $\bm \theta^\star$ and Candidate set:}} To generate repro parameters, we draw $ {\bm u}^\star $ from $F_{\bm{U}}(\cdot) $,
For a scale $\sigma^\star$, we rewrite the model 
$\bm y^{\mathrm{obs}}
=\bm x^{\mathrm{obs}}\bm\theta^\star
+\sigma^\star {\bm u}^\star,$ as 
$\bm y^{\mathrm{adj}}
=\bm y^{\mathrm{obs}}-\sigma^\star{\bm u}^\star
=\bm x^{\mathrm{obs}}\bm\theta^\star.$
Given $\sigma^\star$, the corresponding repro-sample parameter satisfies $$\bm\theta^\star(\sigma^\star)=
(\bm x^{\mathrm{obs}\top}\bm x^{\mathrm{obs}})^{-1}
\bm x^{\mathrm{obs}\top}
\bigl(\bm y^{\mathrm{obs}}-\sigma^\star{\bm u}^\star\bigr).$$ The scale $\sigma^\star$ minimizes the residual sum of squares:
$\sigma^\star
=
\arg\min_{\sigma>0}
\bigl\|
\bm y^{\mathrm{obs}}
-
\sigma{\bm u}^\star
-
\bm x^{\mathrm{obs}}
\bm\theta^\star(\sigma)
\bigr\|^2,$ which we solve by Brent’s method.  
Iteratively solving the above equations till convergence yields the repro-sample parameter~$\bm\theta^\star$.
For each repro draw $\bm\theta^\star$, define the discordance count 
$\mathrm{Disc}(\mathcal D^{\mathrm{obs}},\bm\theta^\star)
=
\sum_{1\le i\ne j\le K}
\mathbb I\!\Bigl(
(A_i\bm u^{\mathrm{rel}}-A_k\bm u^{\mathrm{rel}})
(\theta_i^\star-\theta_k^\star)
<0
\Bigr).$
If the discordance count is less than c, include $\bm R^\star=\mathcal S(\bm\theta^\star)$ in the
candidate set $\mathcal C_{\mathcal{V}}(\mathcal{D}^{\mathrm{obs}})$. Using Algorithms \ref{alg:construction} and ~\ref{alg:candidate} we obtain $\tilde{\Gamma}_{\mathcal{V}_\alpha}(\mathcal{D}^{\mathrm{obs}}).$
            
\subsection{Ranking Plackett--Luce Parameters for Top-Choice Data}
\label{PL}

We now consider ranking items under the Plackett--Luce (PL) model, a standard
framework for analyzing top-choice or partial ranking data.  Each item
$k\in[K]$ possesses a positive worth parameter $\theta_k>0$, where larger
$\theta_k$ indicates a higher chance of being chosen.  In each trial $t$, a
subset of items
$S_t^{\mathrm{obs}}
=\{j^t_1<j^t_2<\cdots<j^t_M\}\subseteq[K]$
is presented, from which a single item is observed as the top choice as given in \cite{fan2024ranking}. Under
the PL model, item $j^t_m\in S_t^{\mathrm{obs}}$ is selected with probability
$\mathbb P(j^t_m \text{ chosen}\mid S_t^{\mathrm{obs}})
=
\frac{\theta_{j^t_m}}{\sum_{k\in S_t^{\mathrm{obs}}}\theta_k}.$
This setting arises in applications such as peer review, consumer preference
surveys, and subset-wise recommendation systems.  Suppose each $M$-subset is
repeated $L$ times, yielding $T=\binom{K}{M}L$ observed trials.

\noindent{\textbf{Oracle characterization via quadratic programming:}}
Let $\bm u^{\mathrm{rel}}=(u^{\mathrm{rel}}_1,\ldots,u^{\mathrm{rel}}_T)$ denote
unobserved uniform noise $u^{\mathrm{rel}}_t$ from $\mathrm{Unif}(0,1)$ determining the
selected item in each trial. If   
$S^{\mathrm{obs}}_t=\{j^t_1,\ldots,j^t_M\}$ and item $j^t_m$ is chosen at
trial~$t$, then under the PL generative mechanism,
\begin{equation}\label{eq:PL-ineq}
\sum_{r=1}^{m-1}\theta^{(0)}_{j^t_r}
<
u^{\mathrm{rel}}_t
\sum_{k\in S_t^{\mathrm{obs}}}\theta^{(0)}_k
\;\le\;
\sum_{r=1}^m \theta^{(0)}_{j^t_r}.
\end{equation}
For each trial, the inequalities~\eqref{eq:PL-ineq} can be written as linear
constraints.  Let $G^{\mathrm{rel}}\in\mathbb R^{2T\times K}$ contain the $2T$
rows constructed from~\eqref{eq:PL-ineq}.  For trial $t$ with top choice
$j^t_m$, the two constraint rows $G^{\mathrm{rel}}_{2t-1}$ and
$G^{\mathrm{rel}}_{2t}$ are
{\small
\[
(G^{\mathrm{rel}}_{2t-1,k})=
\begin{cases}
1-u^{\mathrm{rel}}_t,& k\in\{j^t_1,\dots,j^t_{m-1}\},\\
-\,u^{\mathrm{rel}}_t,& k\in S_t^{\mathrm{obs}}\setminus\{j^t_1,\dots,j^t_{m-1}\},\\
0,&\text{otherwise},
\end{cases}
\qquad
(G^{\mathrm{rel}}_{2t,k})=
\begin{cases}
u^{\mathrm{rel}}_t-1,& k\in\{j^t_1,\dots,j^t_m\},\\
u^{\mathrm{rel}}_t,& k\in S_t^{\mathrm{obs}}\setminus\{j^t_1,\dots,j^t_m\},\\
0,&\text{otherwise}.
\end{cases}
\]
}
We define the oracle worth vector $\bm\theta^{(0)}$ as the solution of
\begin{equation}\label{eq:PL-oracle-QP}
\min_{\bm\theta^{(0)}\ge0}\ \|\bm\theta^{(0)}\|_2^2
\quad\text{subject to}\quad
G^{\mathrm{rel}}\bm\theta^{(0)}\le0,\qquad 
\sum_{k=1}^K\theta^{(0)}_k=1.
\end{equation}

\noindent{\textbf{Neighborhood sets and Borel region:}}
Fix a subset of items $S=\{j_1<j_2<j_3\}$, and define
$\mathcal T_S=\{\, t\in\{1,\dots,T\} : S_t^{\mathrm{obs}}=S \,\},$
$L \;=\; |\mathcal T_S|.$ For each $t\in\mathcal T_S$, let $u^{\mathrm{rel}}_t\sim\mathrm{Unif}(0,1)$ denote the latent noise used to generate the top choice under the Plackett--Luce mechanism.  Let
$u^{\mathrm{rel}}_{(1)}<\cdots<u^{\mathrm{rel}}_{(L)}$
be the corresponding order statistics. Let $y_i^{\mathrm{obs}}$ denote the number of times item $j_i$ is selected as the top choice among these $L$ trials, and write
$\bm y^{\mathrm{obs}}_S
=
\bigl(y_1^{\mathrm{obs}},y_2^{\mathrm{obs}},y_3^{\mathrm{obs}}\bigr),$
$y_1^{\mathrm{obs}}+y_2^{\mathrm{obs}}+y_3^{\mathrm{obs}} = L.$
From the PL inequalities~\eqref{eq:PL-ineq}, we obtain the ratio bounds
\[
\frac{u^{\mathrm{rel}}_{(y_1^{\mathrm{obs}})} - u^{\mathrm{rel}}_{(1)}}
     {u^{\mathrm{rel}}_{(y_1^{\mathrm{obs}}+y_2^{\mathrm{obs}}+1)} -
      u^{\mathrm{rel}}_{(y_1^{\mathrm{obs}})}}
\;<\;
\frac{\theta^{(0)}_{j_1}}{\theta^{(0)}_{j_2}}
\;<\;
\frac{u^{\mathrm{rel}}_{(y_1^{\mathrm{obs}}+1)}}
     {u^{\mathrm{rel}}_{(y_1^{\mathrm{obs}}+y_2^{\mathrm{obs}})} -
      u^{\mathrm{rel}}_{(y_1^{\mathrm{obs}}+1)}},
\]
\[
\frac{u^{\mathrm{rel}}_{(y_1^{\mathrm{obs}})} - u^{\mathrm{rel}}_{(1)}}
     {1 - u^{\mathrm{rel}}_{(y_1^{\mathrm{obs}}+y_2^{\mathrm{obs}})}}
\;<\;
\frac{\theta^{(0)}_{j_1}}{\theta^{(0)}_{j_3}}
\;<\;
\frac{u^{\mathrm{rel}}_{(y_1^{\mathrm{obs}}+1)}}
     {u^{\mathrm{rel}}_{(L)} -
      u^{\mathrm{rel}}_{(y_1^{\mathrm{obs}}+y_2^{\mathrm{obs}}+1)}},
\]
\[
\frac{u^{\mathrm{rel}}_{(y_1^{\mathrm{obs}}+y_2^{\mathrm{obs}})} -
      u^{\mathrm{rel}}_{(y_1^{\mathrm{obs}}+1)}}
     {1 -
      u^{\mathrm{rel}}_{(y_1^{\mathrm{obs}}+y_2^{\mathrm{obs}})}}
\;<\;
\frac{\theta^{(0)}_{j_2}}{\theta^{(0)}_{j_3}}
\;<\;
\frac{u^{\mathrm{rel}}_{(y_1^{\mathrm{obs}}+y_2^{\mathrm{obs}}+1)} -
      u^{\mathrm{rel}}_{(y_1^{\mathrm{obs}})}}
     {u^{\mathrm{rel}}_{(L)} -
      u^{\mathrm{rel}}_{(y_1^{\mathrm{obs}}+y_2^{\mathrm{obs}}+1)}}.
\]
These inequalities provide partial orderings of the parameters. For instance, $\frac{u^{\mathrm{rel}}_{(y_1^{\mathrm{obs}}+1)}}{u^{\mathrm{rel}}_{(y_1^{\mathrm{obs}}+y_2^{\mathrm{obs}})} - u^{\mathrm{rel}}_{(y_1^{\mathrm{obs}}+1)}} < 1$ implies $\theta^{(0)}_{j_1} < \theta^{(0)}_{j_2},$
while
$\frac{u^{\mathrm{rel}}_{(y_1^{\mathrm{obs}})} - u^{\mathrm{rel}}_{(1)}}
{u^{\mathrm{rel}}_{(y_1^{\mathrm{obs}}+y_2^{\mathrm{obs}}+1)} -
      u^{\mathrm{rel}}_{(y_1^{\mathrm{obs}})}} > 1$
implies 
$\theta^{(0)}_{j_1} > \theta^{(0)}_{j_2}.$ To study the rank of a given item $k$, we collect all trials $t$ for which $k\in S_t^{\mathrm{obs}}$.  For each such trial, writing $S_t^{\mathrm{obs}}=\{j_1^t<j_2^t<j_3^t\}$, the corresponding inequalities give pairwise information for the ordered pairs $(k,i)\subset S_t^{\mathrm{obs}}$. For each ordered pair $(k,i)\subset S$, we define indicator functions
$I_S^{k<i}(\bm u^{\mathrm{rel}})$ and $I_S^{k>i}(\bm u^{\mathrm{rel}})$, which record whether the above constraints imply $\theta^{(0)}_k<\theta^{(0)}_i$ or $\theta^{(0)}_k>\theta^{(0)}_i$ as 
{\scriptsize
\[
\begin{aligned}
I_S^{k<i}(\bm{u}^{\mathrm{rel}})
&=
\begin{cases}
1,&
(k,i)=(j_1,j_2),
\dfrac{u^{\mathrm{rel}}_{(y^{\mathrm{obs}}_1+1)}}
      {u^{\mathrm{rel}}_{(y^{\mathrm{obs}}_1+y^{\mathrm{obs}}_2)}
       -u^{\mathrm{rel}}_{(y^{\mathrm{obs}}_1+1)}}<1,\\
1,&
(k,i)=(j_1,j_3),
\dfrac{u^{\mathrm{rel}}_{(y^{\mathrm{obs}}_1+1)}}
      {u^{\mathrm{rel}}_{(L)}
       -u^{\mathrm{rel}}_{(y^{\mathrm{obs}}_1+y^{\mathrm{obs}}_2+1)}}<1,\\
1,&
(k,i)=(j_2,j_3),
\dfrac{u^{\mathrm{rel}}_{(y^{\mathrm{obs}}_1+y^{\mathrm{obs}}_2+1)}
      -u^{\mathrm{rel}}_{(y^{\mathrm{obs}}_1)}}
      {u^{\mathrm{rel}}_{(L)}
       -u^{\mathrm{rel}}_{(y^{\mathrm{obs}}_1+y^{\mathrm{obs}}_2+1)}}<1,\\
0,&\text{otherwise},
\end{cases}
\,
I_S^{k>i}(u^{\mathrm{rel}})
=
\begin{cases}
1,&
(k,i)=(j_1,j_2),
\dfrac{u^{\mathrm{rel}}_{(y^{\mathrm{obs}}_1)}
      -u^{\mathrm{rel}}_{(1)}}
      {u^{\mathrm{rel}}_{(y^{\mathrm{obs}}_1+y^{\mathrm{obs}}_2+1)}
       -u^{\mathrm{rel}}_{(y^{\mathrm{obs}}_1)}}>1,\\
1,&
(k,i)=(j_1,j_3),
\dfrac{u^{\mathrm{rel}}_{(y^{\mathrm{obs}}_1)}
      -u^{\mathrm{rel}}_{(1)}}
      {1
       -u^{\mathrm{rel}}_{(y^{\mathrm{obs}}_1+y^{\mathrm{obs}}_2)}}>1,\\
1,&
(k,i)=(j_2,j_3),
\dfrac{u^{\mathrm{rel}}_{(y^{\mathrm{obs}}_1+y^{\mathrm{obs}}_2)}
      -u^{\mathrm{rel}}_{(y^{\mathrm{obs}}_1+1)}}
      {1
       -u^{\mathrm{rel}}_{(y^{\mathrm{obs}}_1+y^{\mathrm{obs}}_2)}}>1,\\
0,&\text{otherwise}.
\end{cases}
\end{aligned}
\]
\normalsize
For each item $k\in[K]$, we define
$\mathcal N_k^{-}(\mathcal D^{\mathrm{obs}},\mathbf u)
=
\left\{
\, i \neq k :
\exists\, S \ni k,i \ \text{such that}\ 
I_{S}^{\,k>i}(\mathbf u)=1
\right\},$ and
$\mathcal N_k^{+}(\mathcal D^{\mathrm{obs}},\mathbf u)
=
\left\{
\, i \neq k :
\exists\, S \ni k,i \ \text{such that}\ 
I_{S}^{\,k<i}(\mathbf u)=1
\right\}$ where each set records items forced to be below or above $k$. We write $T(\bm u)=\bm u$, and denote its order statistics by
$u_{(1)}<\cdots<u_{(T)}$.  For each $t=1,\dots,T$, choose
$c_L^t=F_{U_{(t)}}^{-1}(\alpha/2),$ 
$c_R^t=F_{U_{(t)}}^{-1}(1-\alpha/2),$ and define $B_\alpha
=
\bigl\{
\bm U\in[0,1]^T:\ 
c_L^t<U_{(t)}<c_R^t,t=1,\dots,T
\bigr\}.$\\
\noindent{\textbf{Generated parameter and candidate set:}} As $\bm u^{\mathrm{rel}}$ is unobserved, we draw
$\bm u^\star$ from $\mathrm{Unif}(0,1)^T$ and construct a constraint matrix
$G^\star$ identical to $G^{\mathrm{rel}}$ but with $u^{\mathrm{rel}}_t$ replaced
by $u^\star_t$.  For each repro draw, compute $\bm\theta^\star
=\arg\min_{\bm\theta\ge0}\ \|\bm\theta\|_2^2,$ such that
$G^\star\bm\theta\le0,$ $\sum_{k=1}^K\theta_k=1.$ Each $\bm\theta^\star$ yields a repro ranking
$\bm R^\star=\mathcal S(\bm\theta^\star)$.  We use the discordance
criterion
$\mathrm{Disc}(\mathcal D^{\mathrm{obs}},\bm\theta^\star)$
based on the PL estimator $\widehat{\bm\theta}^{\mathrm{obs}}$ given in \cite{fan2024ranking}} after correcting with a log factor since their probabilities are proportional to $e^\theta_k.$ If $\mathrm{Disc}(\mathcal D^{\mathrm{obs}},\bm\theta^\star)<c$, then $\bm R^\star$ is placed in the candidate set
$\mathcal C_{\mathcal{V}}(\mathcal{D}^{\mathrm{obs}})$. Algorithm \eqref{alg:construction} and \eqref{alg:candidate} then produce
$\tilde{\Gamma}_{\mathcal{V}_\alpha}^{\mathcal I}(\mathcal D^{\mathrm{obs}})$.

\section{Numerical Illustrations with Real-World Data}\label{numerical}

\subsection{Quantile-Based Ranking of National Wealth Distributions}
We analyzed wealth data from Forbes’ 2024 \textit{World’s Billionaires} report, using Section \ref{quanteg} theory, restricting attention to all individuals with estimated net worths exceeding \$5 million from the United States, Germany, Russia, India, and China. For each country $k$, the $\zeta$-quantile was estimated by the empirical order statistic
$\widehat{\theta}_k
= y^{\mathrm{obs},k}_{(\lceil n_k \zeta\rceil)},$
and we constructed rank confidence sets for $\zeta=0.5$ and $\zeta=0.75.$ Our  method was implemented using Bernoulli$(\zeta)$ latent noise, $B=2000$ resamples, discordance budget 
$c=\lfloor p^\ast K_{\mathrm{pairs}}\rfloor$ with $p^\ast=0.20$. Repro samples passing both the Borel and discordance filters were retained, and the intersection of their induced ranks formed the joint $(1-\alpha)$ confidence set. The resulting rank intervals (Table~\ref{tab:quantile-rank-ci}) are nontrivial and always contain the empirical ranks. At the upper quartile, Russia and India are sharply identified as the top two countries, whereas the United States, Germany, and China exhibit broader but still informative intervals. For comparison, we implemented the simultaneous bootstrap procedure. For each contrast $\theta_k - \theta_i$ we constructed Bonferroni–adjusted simultaneous confidence intervals
$[c^L_{kj},\,c^R_{kj}]$ such that $\mathbb{P}\!\left(
c^L_{kj} \le \widehat{\theta}_k - \widehat{\theta}_i \le c^R_{kj}
\;\;\forall\,k,j
\right)\ge 1-\alpha,$ based on the difference of sample quantiles $\widehat{\theta}_k - \widehat{\theta}_i.$ A country $k$ is then deemed certainly ahead of country $j$ if $c^L_{kj}>0$, and certainly behind $j$ if $c^R_{kj}<0$. This yields
$\mathcal N_k^{-}=\{j\ne k:\; c^L_{kj}>0\},$
$\mathcal N_k^{+}=\{j\ne k:\; c^R_{kj}<0\}.$
Following \cite{fan2024ranking}, the simultaneous rank confidence interval for country $k$ is
$1+|\mathcal N_k^{-}|
\;\le\;
r_k
\;\le\;
K-|\mathcal N_k^{+}|.$
For both $\zeta=0.5$ and $\zeta=0.75$, all bootstrap intervals overlapped zero for every pair $(k,j)$, producing the trivial rank set $[1,5]$ for all countries. In contrast, the repro-sampling method produced sharp, interpretable, and finite-sample valid rank sets (Table~\ref{tab:quantile-rank-ci}), offering substantially greater discriminatory power in this heavy-tailed finite sample setting.
\begin{table}[ht]
\centering
\caption{Sample quantiles and rank confidence intervals for the Forbes dataset.}
\label{tab:quantile-rank-ci}
\small
\begin{tabular}{lccccccc}
\toprule
& \multicolumn{3}{c}{$\zeta = 0.5$ (Median)} 
& \multicolumn{3}{c}{$\zeta = 0.75$} 
& \multicolumn{1}{c}{$\zeta = 0.5$, $\zeta = 0.75$ } \\
\cmidrule(lr){2-4}
\cmidrule(lr){5-7}
\cmidrule(lr){8-8}
Country 
& Sample quantile
& Rank 
& Repro CI 
& Sample quantile
& Rank 
& Repro CI 
& Bootstrap CI \\
\midrule
US      & 8.1 & 4    & [3, 5] & 12.4 & 3   & [3, 4] & [1, 5] \\
Germany & 7.9 & 5    & [4, 5] & 12.1 & 4   & [3, 5] & [1, 5] \\
Russia  & 9.8 & 1    & [1, 2] & 21.1 & 1   & [1, 1] & [1, 5] \\
India   & 8.5 & 2--3 & [2, 4] & 17.6 & 2   & [2, 2] & [1, 5] \\
China   & 8.5 & 2--3 & [1, 4] & 11.6 & 5   & [4, 5] & [1, 5] \\
\bottomrule
\end{tabular}
\end{table}

\normalsize
\subsection{Ranking EPL Teams from Pairwise Score Differences}
We apply the procedure of Section~\ref{football} to the 2023--2024 English Premier League (EPL) dataset.
For each match $i=1,\dots,n$, of $n$ total matches the observed goal difference is modeled as $y^{\mathrm{obs}}_{i}
= \theta^{(0)}_{h(i)} - \theta^{(0)}_{a(i)} + \delta^{(0)} + \sigma^{(0)} u^{\mathrm{rel}}_{i},$ where $h(i)$ and $a(i)$ denote the home and away teams, $\delta^{(0)}$ is the home-field intercept. Here the game-level noise $u^{\mathrm{rel}}_i$ is a realization from $\mathrm{Laplace}(0,1).$ Stacking all matches yields the realized linear model
$\bm y^{\mathrm{obs}}
= \bm x^{\mathrm{obs}} \bm\theta^{(0)} + \delta^{(0)} \mathbf{1}_n + \sigma^{(0)} \bm u^{\mathrm{rel}},$
with the design matrix $ x^{\mathrm{obs}}$ encoding $+1$ for the home team and $-1$ for the away team.
We impose the identifiability constraint $\sum_{k=1}^K \theta_k^{(0)} = 0$, and compute the least-squares estimator $\widehat{\bm\theta}$.
The matrix $A = ( \bm x^{\mathrm{obs}^\top}  \bm x^{\mathrm{obs}})^{+}  \bm x^{\mathrm{obs}^\top},$
needed for the discordance statistic, is computed using the constraint-adjusted generalized inverse of $\bm x^{\mathrm{obs}^\top}  \bm x^{\mathrm{obs}}$.
The Borel set $B_\alpha$ from Section~\ref{football} is estimated by Monte Carlo simulation so that $\mathbb{P}_{\bm{U}}(B_\alpha) \approx 1-\alpha$ with $\alpha=0.05$. We retain a artificial copy $\bm u^{\ast(b)}$ if $\mathrm{Disc}(\mathcal{D}^{\mathrm{obs}}_n,\bm \theta^{(b)}) < c,$ $c=420,$ taking $p^\star=0.02.$ Among $2000$ total repro samples, only repro draws satisfying both the Borel screening and the discordance threshold contribute to the final set. Table~\ref{tab:epl-ci-theta} presents the resulting rank intervals together with the official EPL points.
We observe that the teams with fewer decisive performance gaps have larger intervals or show greater uncertainty. 
Importantly, the repro-sample rank intervals align closely with the official standings based on goal difference, demonstrating that the procedure captures the competitive structure of the league.

\begin{table}[ht]
\centering
\caption{Our rank confidence sets with traditional goal-difference rankings for EPL 2023–24 }
\label{tab:epl-ci-theta}
\setlength{\tabcolsep}{5pt}
\renewcommand{\arraystretch}{1.2}
\begin{tabular}{l c c c  l c c c}
\toprule
\textbf{Team} & \textbf{Rank CI} & \textbf{GD} & \textbf{GD Rank} 
& \textbf{Team} & \textbf{Rank CI} & \textbf{GD} & \textbf{GD Rank} \\
\midrule

Man City       & [1,2]    & 62   & 1.5  & Brighton       & [10,13] & -7   & 11 \\
Arsenal        & [1,2]    & 62   & 1.5  & Bournemouth    & [12,16] & -13  & 12 \\
Liverpool      & [3,3]    & 45   & 3    & Fulham         & [10,12] & -6   & 10 \\
Aston Villa    & [5,7]    & 15   & 5    & Wolves         & [14,17] & -15  & 14 \\
Spurs          & [6,7]    & 13   & 7    & Everton        & [12,16] & -11  & 13 \\
Chelsea        & [5,7]    & 14   & 6    & Brentford      & [10,14] & -9   & 12 \\
Newcastle      & [4,4]    & 23   & 4    & Nott'm Forest  & [16,17] & -18  & 17 \\
Man Utd        & [8,9]    & -1   & 8.5  & Luton          & [18,19] & -33  & 18 \\
West Ham       & [13,16]  & -14  & 15   & Burnley        & [18,19] & -37  & 19 \\
Crystal Palace & [8,9]    & -1   & 8.5  & Sheffield Utd  & [20,20] & -69  & 20 \\
\bottomrule
\end{tabular}
\end{table}

\subsection{Ranking Jokes Using the Plackett–Luce Model}

As an illustration of Section~\ref{PL} methodology, we rank jokes using the PL model from the Jester dataset \textit{(https://goldberg.berkeley.edu/jester-data)}. We analyze a subset of 10 jokes evaluated by 10 users and 80 users, each choosing their favorite from subsets of 3 jokes. All possible triplets ($\binom{10}{3}=120$) are evaluated by each user, yielding $1200$ and $9600$ top-choice observations. To construct confidence sets for joke ranks, we generated 2000 artificial noise samples, using Dirichlet bands based on uniform order statistics. Here we do not use a candidate set, or trivially take $c$ to be $K(K-1)$ which is $90$.
For comparison, we applied the algorithm from \cite{fan2024ranking} to the same dataset, estimating $\hat{\bm{\theta}}.$ We performed a bootstrap on 2000 samples to derive simultaneous confidence intervals for parameter differences and joke ranks, obtaining the simultaneous critical value $\zeta_{0.95}$. Our resulting confidence sets are substantially narrower than those from \cite{fan2024ranking} for smaller dataset with $10$ users or replicated comparisons for any set of three jokes. For the larger dataset our results are comparable highlighting our method's effectiveness for finite samples without relying on asymptotic assumptions. Moreover our method almost always covers the MLE estimate of the joke based on \cite{fan2024ranking} in the interval.

\begin{table}[ht]
\centering
\small
\caption{Rank confidence intervals for Jester data using our method and \cite{fan2024ranking}'s method}
\label{tab:jester-rank-ci}
\begin{tabular}{ccccccccc}
\toprule
& \multicolumn{4}{c}{Repetitions L = 10} 
& \multicolumn{4}{c}{Repetitions L = 80} \\
\cmidrule(lr){2-5}
\cmidrule(lr){6-9}
Joke ID 
& $\hat{\theta}_{mle}$ 
& Rank$_{10}$
& Repro CI 
& Bootstrap CI
& $\hat{\theta}_{mle}$ 
& Rank$_{80}$
& Repro CI 
& Bootstrap CI
\\
\midrule
Joke 5  & 0.11775 & 2 & [2,7]  & [2, 8]   & 0.12717 & 2 & [2,2]  & [2, 4] \\
Joke 7  & 0.11164 & 4 & [2,6]  & [2, 9]   & 0.10599 & 4 & [4,5]  & [2,5] \\
Joke 8  & 0.08113 & 6 & [4,10]  & [2, 10]   & 0.08410 & 6 & [6,6]  & [5,7] \\
Joke 13 & 0.07626 & 7 & [4,9]  & [2, 10]   & 0.06539 & 9 & [8,9]  & [7, 9] \\
Joke 15 & 0.07530 & 8 & [5,9] & [2, 10]   & 0.06899 & 7 & [6,8]  & [6, 9] \\
Joke 16 & 0.04750 & 10 & [6,10] & [6, 10]  & 0.04384 & 10 & [10,10] & [10,10] \\
Joke 17 & 0.06516 & 9  & [5,10] & [4, 10]  & 0.06662 & 8  & [7,9]   & [7, 9] \\
Joke 18 & 0.09031 & 5  & [2,7]  & [2, 9]  & 0.09987 & 5  & [4,6]   & [3, 6] \\
Joke 19 & 0.11406 & 3  & [1,8]  & [2, 8]  & 0.11616 & 3  & [3,3]   & [2, 5] \\
Joke 21 & 0.22089 & 1  & [1,2]  & [1, 1]  & 0.22187 & 1  & [1,1]   & [1, 1] \\
\bottomrule
\end{tabular}
\end{table}

\subsection{Ranking Hospitals with Unequal Variances}

To demonstrate the methodology in Example 2.1, we analyze data from the National Committee for Quality Assurance (NCQA) Quality Compass Report on blood glucose (A1c) control among diabetic patients across 78 Veterans Health Administration (VHA) hospitals in the United States \citep{Miller}. The objective is to rank hospitals based on the latent log-odds of good A1c control. The observed statistic for each hospital is $ y_k^{\text{obs}} = \log\left( \frac{\hat p^{\mathrm{obs}}_k}{1 - \hat p^{\mathrm{obs}}_k} \right) $, where $\hat p^{\mathrm{obs}}_k$ is the estimated proportion of well-controlled cases. Under large-sample theory, we model
$y_k^{\text{obs}} = \log\left( \frac{p_k}{1 - p_k} \right) + \sigma_k u^{rel}_k,$
with $u^{\mathrm{rel}}_k$ from $ N(0,1)$ and $\sigma_k$ estimated as $\sqrt{1 / \hat p^{\mathrm{obs}}_k + 1 / (1 - \hat p_k)}.$ Using 1000 artificial noise copies, $p^*=0.10,$ we construct marginal confidence sets for the ranks of the true log-odds parameters $\theta_k^{(0)} = \log(p_k / (1 - p_k))$. Figure~\ref{fig:hospital}(a) shows the estimated log-odds and associated $95\%$ confidence intervals, which frequently overlap, highlighting the difficulty of directly inferring ranks from point estimates alone. To address this, we construct a confidence set using the neighborhood-based procedure described in Example 2.1. The resulting rank confidence set, shown in Figure \ref{fig:hospital}(b), provides a valid finite-sample inference for the discrete ranks of hospitals. Our results are comparable to those of \citet{xie2009}, but offer an improvement by providing exact confidence sets for exact discrete rank parameters rather than for smoothed quantities. 

\begin{figure}[htbp]
\centering
  \begin{subfigure}{.5\textwidth}
    \centering
    \includegraphics[width=1\linewidth]{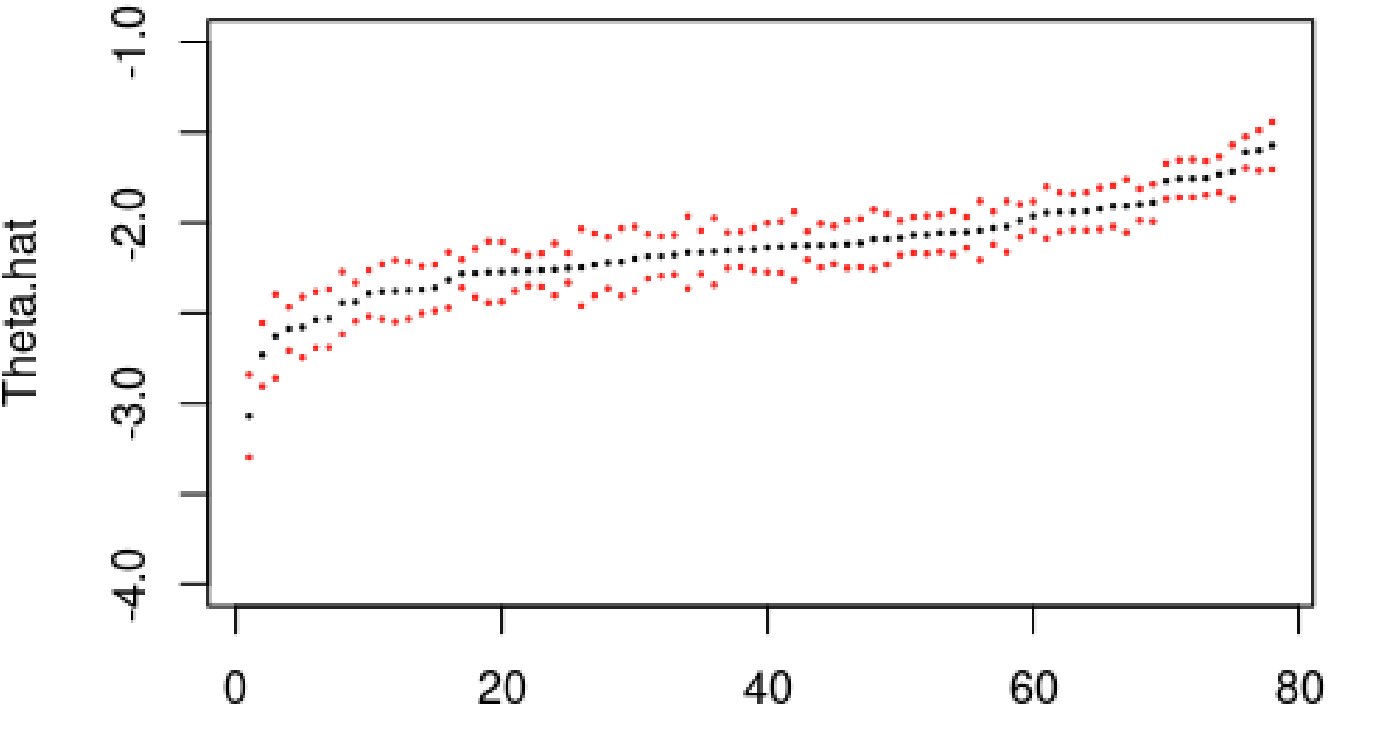}
    \caption{Estimated log-odds with 95\% confidence intervals.}
    \label{fig:a}
  \end{subfigure}
  \hspace{0.01\textwidth}
  \begin{subfigure}{.42\textwidth}
    \centering
    \includegraphics[width=1.1\linewidth]{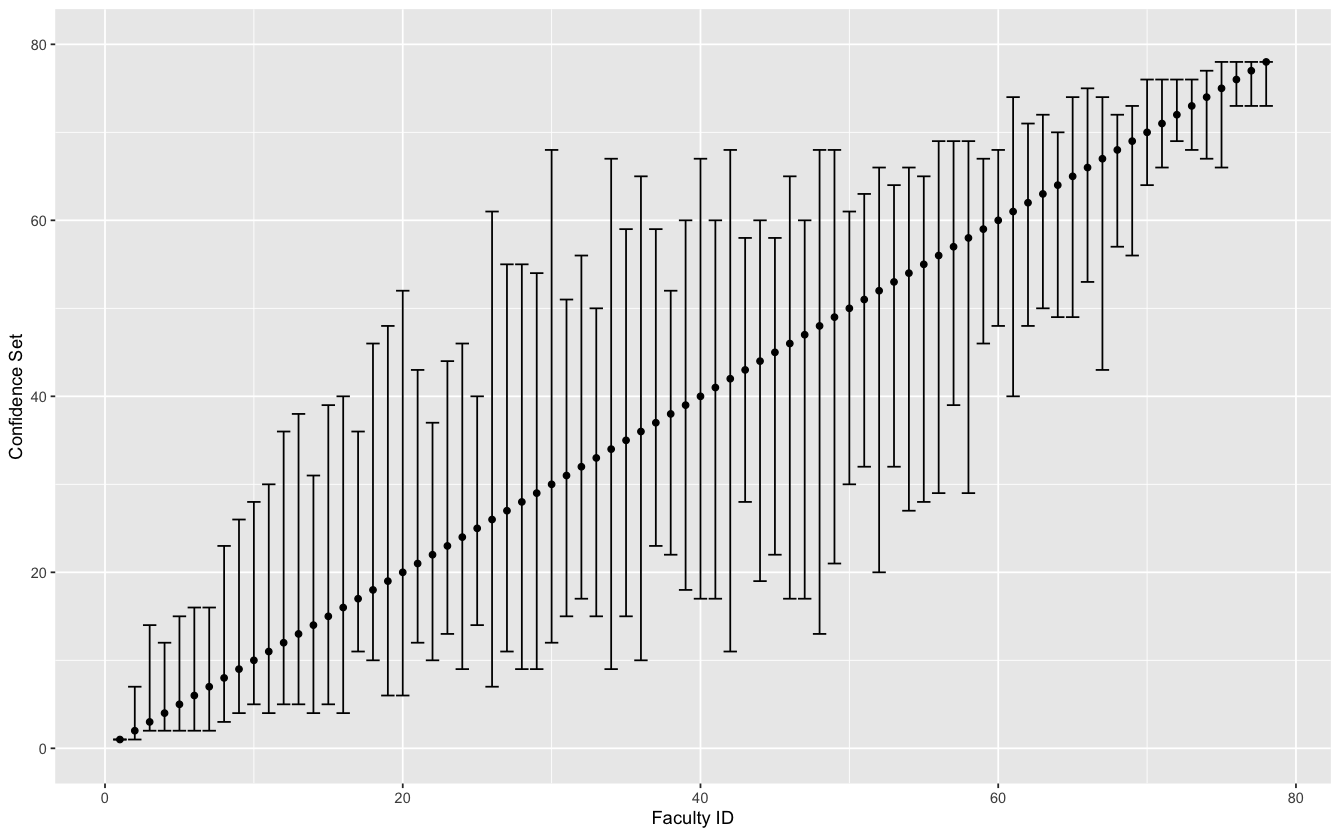}
    \caption{95\% simultaneous confidence set for VHA facilities}
    \label{fig:b}
  \end{subfigure}
  \caption{Confidence intervals and rank sets for A1c control across 78 hospitals. Each black point denotes the observed rank of the estimated log-odds.}
  \label{fig:hospital}
\end{figure}

\section{Simulation Study}
This section examines the finite-sample behaviour of the proposed rank confidence sets across three contrasting regimes: (i) ranking of quantiles under unknown distributions (ii) heavy-tailed designs motivated by the football application in Section~\ref{football}, (iii) unequal-variance Gaussian models reflecting the structure of the hospital dataset in Example~2.1. Our focus is on marginal and joint coverage of the true ranks, together with assessments of interval width and the effective size of candidate set with the choice of c in the Gaussian case.

\subsection{Quantile-Based Rank Coverage}

We first consider a quantile-regression setting in which the underlying populations follow a lognormal distribution. 
Specifically, for each $k=1,\dots,16$ we generated $y_{ik} $ from
$\mathrm{Lognormal}(\mu_k,\sigma^{(0)}),$ $\mu_k^{(0)}$ lies in $(11,14)$ and $\sigma^{(0)}$ in $
(0.1,0.4)$ so that the true distributions differ systematically in location and scale.
Each population was sampled with $n=100$ observations, and across $1000$ Monte Carlo replications we estimated and ranked the $75$th percentiles (oracle quantiles) of the lognormal populations. Repro-based marginal $95\%$ rank confidence sets were constructed using $1500$ artificial perturbations per replication, where each perturbation was drawn from $\mathrm{Binomial}(n,0.75)$ to mimic the score structure of the empirical quantile estimator. 
The candidate-set threshold was fixed at $p^\ast = 0.20$, discarding perturbations whose pairwise orderings deviated excessively from the empirical ordering of the estimated quantiles. In addition to marginal performance, we computed the global joint coverage across replications. 
The joint coverage of the repro-based $95\%$ rank confidence sets was $0.976.$
Table~\ref{tab:quantile_joint} reports the resulting marginal coverage probabilities and the corresponding mean and standard deviation of interval lengths. The results demonstrate that the repro-sampling method maintains nominal or  super-nominal marginal coverage under a lognormal data-generating process while producing substantially more informative rank intervals than the uniformly conservative bootstrap.

\begin{table}[ht]
\centering
\caption{Simulation results for $K=16$ Gaussian populations: marginal coverage,
mean interval length, and standard deviation (SD) of interval lengths.}
\label{tab:quantile_joint}
\begin{tabular}{lccc lccc}
\toprule
Population & Coverage & Mean Length & SD  
& Population & Coverage & Mean Length & SD \\
\midrule
Pop 1  & 0.982 & 6.963 & 2.307  & Pop 9  & 0.978 & 11.076 & 1.784 \\
Pop 2  & 0.985 & 8.062 & 2.382  & Pop 10 & 0.981 & 10.835 & 1.867 \\
Pop 3  & 0.986 & 8.816 & 2.292  & Pop 11 & 0.985 & 10.459 & 1.773 \\
Pop 4  & 0.980 & 9.722 & 2.168  & Pop 12 & 0.985 & 9.786  & 1.850 \\
Pop 5  & 0.980 & 10.325 & 2.246 & Pop 13 & 0.989 & 9.091  & 1.924 \\
Pop 6  & 0.983 & 10.806 & 2.110 & Pop 14 & 0.981 & 8.254  & 1.883 \\
Pop 7  & 0.980 & 11.023 & 2.012 & Pop 15 & 0.985 & 7.358  & 1.929 \\
Pop 8  & 0.988 & 11.199 & 1.917 & Pop 16 & 0.986 & 6.561  & 1.881 \\
\midrule
\multicolumn{8}{c}{
\textbf{Mean Coverage = 0.983} \quad
\textbf{Mean Length = 9.396} \quad
\textbf{Mean SD = 2.02}
} \\
\bottomrule
\end{tabular}
\end{table}

\subsection{Heavy-Tailed Laplace Model}

We next investigate a heavy-tailed regime motivated by the football ranking problem analysed in Section~3.3. 
Following the empirical setting, we fixed the comparison structure and design matrix at their observed values, and conducted a simulation study with $K = 14$ teams. 
For each team $k$, the ground truth parameter $\theta^{(0)}_k$ and scale parameter $\sigma^{(0)}_k$ were set equal to their empirical estimates $(\hat{\theta}_k, \hat{\sigma}_k)$ obtained from Brent’s algorithm applied to the original match-score differences. 
This preserves the signal-to-noise profile of the real dataset while allowing controlled synthetic experimentation. We simulated data from a Laplace model for $1000$ Monte Carlo replications. 
In each replication, the design matrix $\bm{x}^{\mathrm{obs}}$ was held fixed and new Laplace perturbations ${\bm u}_n^\star$ were generated to produce synthetic responses $\bm{y}^{\mathrm{obs}}$.  
A fresh estimator $\bm{\theta}^\star$ was then recomputed from each synthetic dataset, and its induced ranking yielded the replication-specific rank vector $\bm{R}^\star$.  
To construct the repro-based rank confidence sets, we generated $2000$ additional Laplace perturbations per replication, resulting in a collection of candidate rank vectors from which we formed $95\%$ confidence sets.  
Throughout the simulation, the discordance budget was fixed at $p^\ast = 0.10$, restricting accepted perturbations to those whose pairwise orderings remain sufficiently consistent with the empirical ordering and thereby stabilizing inference in heavy-tailed regimes.

Table~\ref{tab:football_marginal} reports the resulting marginal coverage probabilities across all $14$ teams. 
Across the full set of teams, the repro-sampling method maintains nominal or slightly super-nominal coverage, with values ranging from $0.95$ to $0.973$.  
The modest variation across teams reflects heterogeneity in design leverage and the uneven informativeness of the match schedule, yet the overall pattern remains highly stable.  
Importantly, even under substantial non-Gaussian perturbations, the method does not exhibit systematic under-coverage; instead, it shows mild over-coverage for several teams, behaviour consistent with the robustness guarantees of the repro framework.

\begin{table}[ht]
\centering
\caption{Marginal coverage of $95\%$ rank confidence intervals for 14 teams under Laplace noise.}
\label{tab:football_marginal}
\begin{tabular}{l c  l c}
\toprule
Team & Coverage & Team & Coverage \\
\midrule
Team 1  & 0.962 & Team 8  & 0.968 \\
Team 2  & 0.965 & Team 9  & 0.958 \\
Team 3  & 0.966 & Team 10 & 0.961 \\
Team 4  & 0.950 & Team 11 & 0.955 \\
Team 5  & 0.970 & Team 12 & 0.955 \\
Team 6  & 0.973 & Team 13 & 0.959 \\
Team 7  & 0.960 & Team 14 & 0.951 \\
\bottomrule
\end{tabular}
\end{table}

\subsection{Gaussian Models with Heterogeneous Variances}
\label{subsec:gaussian-hetero}

To examine the finite-sample behaviour of our marginal rank confidence intervals under substantial variance heterogeneity, we revisit the structure of the hospital dataset in Example~2.1. We generated $K=20$ independent Gaussian populations with $n_k$ corresponding to the every fourth observation from the original VHS data. The location parameters were anchored at the empirical logits
$\theta^{(0)}_k
    = \log\!\left(\frac{\hat p^{\mathrm{obs}}_k}{1-\hat p^{\mathrm{obs}}_k}\right),$  for $k=1,..,26$ and the corresponding standard deviations were chosen as
$\sigma^{(0)}_k
    = \sqrt{\frac{1}{\hat p^{\mathrm{obs}}_k} + \frac{1}{1-\hat p^{\mathrm{obs}}_k}},$
thereby reproducing the marked heteroscedasticity present in the original data.  
For each configuration, we independently regenerated $1000$ datasets and, within each, constructed repro-based marginal $95\%$ rank confidence intervals using $1500$ copies of $\bm u^\star.$ Table~\ref{tab:faculty20} reports the full population-wise coverage values. 
\begin{table}[ht]
\centering
\caption{Marginal empirical coverage of $95\%$ repro-based rank intervals for 20 Faculty IDs.}
\label{tab:faculty20}

\setlength{\tabcolsep}{6pt}

\begin{tabular}{*{11}{c}}
\toprule
\textbf{Faculty ID}
& 1 & 2 & 3 & 4 & 5 & 6 & 7 & 8 & 9 & 10 \\
\midrule
\textbf{Coverage} &
0.986 & 0.984 & 0.982 & 0.979 & 0.976 &
0.972 & 0.968 & 0.964 & 0.960 & 0.958 \\
\midrule
\textbf{Faculty ID}
& 11 & 12 & 13 & 14 & 15 & 16 & 17 & 18 & 19 & 20 \\
\midrule
\textbf{Coverage} &
0.952 & 0.948 & 0.944 & 0.950 & 0.956 &
0.962 & 0.968 & 0.974 & 0.980 & 0.986 \\
\midrule
\multicolumn{11}{c}{\textbf{Mean Coverage = 0.966}} \\
\bottomrule
\end{tabular}
\end{table}

The coverage behaviour is stable across all populations. A large majority exceed the nominal $0.95$ target, and only four populations fall marginally below this threshold in fewer than $5\%$ of simulations. The overall mean coverage is $0.966$, indicating that the method is mildly conservative yet well-calibrated despite the strong variance heterogeneity. The observed high–low–high pattern in marginal coverage reflects the finite-sample geometry of the ranking problem. Populations at the extremes are more stable because their ranks can shift in only one direction and are typically separated by larger gaps, leading to slight over-coverage. In contrast, mid-ranked populations face close competitors both above and below, and small perturbations produce frequent bidirectional rank reversals. This greater overlap reduces coverage modestly in the centre, especially under heterogeneous variances, while overall levels remain close to the nominal target.

\subsection{Size of the Feasible Rank Region}

To illustrate how the discordance tolerance governs the size of the feasible rank region, we report in Table~\ref{tab:pstar-unique-ranks} the number of distinct rank vectors generated by admissible repro-samples in Example 2.1's setup for a range of $p^\ast$ values. For each choice of $p^\ast$, we draw $10000$ standard Gaussian noise $\bm u^\ast$ and form
$\bm \theta^\ast \;=\; \bigl(y^{\mathrm{obs}}_1 - \sigma_1 u^\ast_1,\ldots, y^{\mathrm{obs}}_K - \sigma_K u^\ast_K \bigr),$
accepting copy of the parameter vector only if its pairwise discordance with the observed score vector satisfies
$\mathrm{Disc}\!\left(\bm\theta^{\mathrm{obs}},\bm\theta^\ast\right)< c,$ $ c=p^\ast K_{\mathrm{pairs}}.$
For all accepted perturbations we compute the induced ranking $\bm R^\ast = S(\bm\theta^\ast)$ and record the number of unique feasible rank vectors in the set $C_{\mathcal{V}}(\mathcal{D}^{\mathrm{obs}})$.
\begin{table}[ht]
\centering
\caption{Cardinality of accepted repro-sample rank vectors across discordance budgets $p^\ast$.}
\label{tab:pstar-unique-ranks}
\begin{tabular}{cccc}
\toprule
$p^\ast$ & Unique Ranks & Accepted $u^\ast$ & $c = p^\ast K_{\mathrm{pairs}}$ \\
\midrule

\midrule
0.03157895 & 0    & 0     &  94.83158 \\
0.04210526 & 0    & 0     & 126.44211 \\
0.05263158 & 2    & 2     & 158.05263 \\
0.06315789 & 57   & 57    & 189.66316 \\
0.07368421 & 1023 & 1023  & 221.27368 \\
0.08421053 & 4618 & 4618  & 252.88421 \\
0.09473684 & 8412 & 8412  & 284.49474 \\
0.10526316 & 9820 & 9820  & 316.10526 \\
0.11578947 & 9988 & 9988  & 347.71579 \\
0.12631579 & 10000& 10000 & 379.32632 \\
0.13684211 & 10000& 10000 & 410.93684 \\

\bottomrule
\end{tabular}
\end{table}

The results reveal a clear phase transition as the discordance budget increases. For extremely small values of $p^\ast$ (up to roughly $0.05$), the feasible perturbation region collapses: no repro-samples satisfy the discordance constraint, and consequently the confidence set for the ranking is empty. This reflects the fact that enforcing near-exact agreement with the observed pairwise orderings is incompatible with the level of sampling noise inherent in the data. Once $p^\ast$ exceeds approximately $0.052$, admissible perturbations begin to appear and the number of distinct feasible rank vectors grows rapidly. This steep increase highlights the intrinsic instability of ranking operators: even moderate perturbations of the underlying scores can lead to substantial reshuffling among items. In this transitional regime, the repro-sampling method yields nontrivial but interpretable uncertainty regions, representing the operationally meaningful range of rank variability supported by the data.

For larger values of $p^\ast$, the discordance constraint becomes non-restrictive. Nearly all perturbations are accepted, and the feasible rank region expands to the entire permutation space. Although such choices of $p^\ast$ guarantee coverage, they produce uninformative confidence sets. These results underscore the importance of selecting $p^\ast$ within the moderate transitional region where the feasible set is neither degenerate nor saturated, and where the repro-induced rank variation faithfully reflects the sampling uncertainty in the underlying score estimates.

\section{Conclusion} 

This paper introduces a general, finite-sample-valid framework for constructing confidence sets for ranks in a wide variety of statistical settings. The central idea is to reproduce latent modelling noise rather than resample the observed data, thereby generating a collection of rank vectors that are compatible with both the data and a carefully defined neighbourhood of the underlying noise distribution. This \textit{Repro-Samples} principle, implemented through a combination of Borel-set inversion, artificial noise generation, and a data-adaptive discordance-based candidate set, yields non-asymptotic coverage for the entire rank vector without relying on model-specific asymptotics, smoothness assumptions, or structural simplifications. In contrast to bootstrap-based or asymptotic methods, the proposed procedure guarantees coverage at finite sample sizes and in models where classical large-sample approximations are unreliable.

The methodology consists of two complementary components. The first step constructs a high-probability confidence region for the latent noise. Any rank vector that can be generated when the noise falls within this region is deemed feasible for inference. This step ensures that the final procedure honors the underlying probability model and provides rigorous coverage guarantees. The second component is a candidate-set refinement, built from a data-driven discordance budget that filters out rank vectors that are incompatible with observed pairwise or multiway comparisons. The candidate set construction drastically reduces the combinatorial complexity of rank search while provably maintaining coverage under broad conditions. Our theoretical results establish bounds on the expected size of the candidate set under sub-Gaussian latent noise, demonstrating that the refinement is effective even in challenging regimes where population parameters are closely spaced.

The empirical and simulation results further highlight the robustness and versatility of the proposed approach. The method performs reliably across heterogeneous normal models, quantile ranking problems, sports league tables, and multiway Plackett–Luce comparisons, and it adapts seamlessly to high-dimensional and weak-signal regimes. Notably, the procedure remains valid even when the standard assumptions underlying delta-method approximations or parametric bootstraps fail. The case studies, illustrate the method's practical interpretability and its ability to provide meaningful uncertainty quantification in applications where ranking error can materially affect conclusions. In particular, the hospital example demonstrates that the method remains calibrated despite substantial heteroscedasticity and overlap, and the PL experiments show that it provides stable performance under complex discrete choice structures.

The proposed framework also offers conceptual clarity from a decision-theoretic perspective, quantifies uncertainty over an inherently discrete parameter, rather than forcing a continuous approximation. Rank inference is notably sensitive to small signal differences, especially when populations are tightly clustered, and our results highlight how the finite-sample geometry of ranking leads naturally to wider intervals for mid-ranked items and one-sided stability for extreme ranks. The Repro-Samples construction is therefore not only statistically valid, but also aligned with the intrinsic structure of the ranking problem.

There remain several promising avenues for future research. One direction involves exploiting additional problem structure, such as partial orders, graph constraints, hierarchical ranking systems, or temporal evolution of ranks, to construct even sharper rank confidence sets. Another direction is to develop procedures for selective or post-inference ranking, particularly in contexts where the ranked entities are themselves outputs of a model-fitting or screening step. Extending the framework to handle personalised or local ranking metrics, robustified noise models, or adversarial perturbations would broaden its applicability in large-scale or high-stakes ranking environments. Finally, computational advances for extremely high-dimensional ranking problems, including scalable optimisation and parallelisation strategies, offer a fruitful avenue for further development.

In summary, this paper presents a unified, broadly applicable framework for finite-sample valid inference on ranks. By directly reproducing latent noise and leveraging a principled balance between feasibility and refinement, our methodology provides robust, interpretable, and theoretically sound uncertainty quantification for ranking problems across a wide range of statistical models. We hope that this work stimulates further methodological and applied research into reliable inference for discrete and combinatorial parameters, an area of increasing relevance in modern data analysis.

\bibliographystyle{apalike}
\bibliography{Bibliography}

\newpage

\section{Appendix}
\subsection*{Proofs}
\begin{proof}[Proof of Lemma 1]
Let $Z_{ij}=\mathbf 1\{(\widehat\Delta_{ij})(\Delta^{(0)}_{ij})<0\}$ for $i\ne j$. 
Then $Disc({D}_n,\bm{\theta}^{(0)})=\sum_{i\ne j}Z_{ij}$ and by Markov’s inequality,
\[
\mathbb{P}_{\bm{U}}\{Disc(\mathcal{D}_,\bm{\theta}^{(0)})\ge c\}\le \frac{\mathbb{E}_{\bm{U}}[Disc(\mathcal{D},\bm{\theta}^{(0)})]}{c}
   =\frac{1}{c}\sum_{i\ne j}\mathbb{P}_{\bm{U}}(Z_{ij}=1).
\]
If $\Delta^{(0)}_{ij} > 0$ and $\widehat{\Delta}_{ij} \le 0$, then 
$\widehat{\Delta}_{ij} - \Delta^{(0)}_{ij} 
\;\le\; -\Delta^{(0)}_{ij},$
so $|\widehat{\Delta}_{ij} - \Delta^{(0)}_{ij}| \;\ge\; \Delta^{(0)}_{ij} \;=\; |\Delta^{(0)}_{ij}|.$
If $\Delta^{(0)}_{ij} < 0$ and $\widehat{\Delta}_{ij} \ge 0$, then 
$\widehat{\Delta}_{ij} - \Delta^{(0)}_{ij}\;\ge\; -\Delta^{(0)}_{ij},$
so $|\widehat{\Delta}_{ij} - \Delta^{(0)}_{ij}| \;\ge\; -\Delta_{ij} \;=\; |\Delta^{(0)}_{ij}|.$
Hence, in both cases a sign flip implies $|\widehat\Delta_{ij}-\Delta^{(0)}_{ij}|\ge|\Delta^{(0)}_{ij}|$, so 
$\mathbb{P}_{\bm{U}}(Z_{ij}=1)\le\mathbb{P}_{\bm{U}}(|\widehat\Delta_{ij}-\Delta^{(0)}_{ij}|\ge|\Delta^{(0)}_{ij}|)=p_{ij}$, giving
\[
\mathbb{P}_{\bm{U}}\{Disc(\mathcal{D},\bm{\theta}^{(0)})\ge c\}\le \frac{1}{c}\sum_{i\ne j}p_{ij},\qquad
\mathbb{P}_{\bm{U}}\{Disc(\mathcal{D},\bm{\theta}^{(0)})<c\}\ge 1-\frac{1}{c}\sum_{i\ne j}p_{ij}. \tag{$\star$}
\]

\paragraph{(a) Chebyshev (finite variance)}
Given $\delta_{ij}=\widehat\Delta_{ij}-\Delta^{(0)}_{ij}$, $Variance(\delta_{ij})\le m_{ij}^2$, then 
$p_{ij}\le Variance(\delta_{ij}) /\Delta^{(0)^2}_{ij}\le m_{ij}^2/\Delta^{(0)^2}_{ij}$,
and from $(\star)$, $\mathbb{P}_{\bm{U}}\{Disc(\mathcal{D},\bm{\theta}^{(0)})<c\}\ge 1-\frac{1}{c}\sum_{i\ne j}\frac{m_{ij}^2}{\Delta^{(0)^2}_{ij}}.$

\paragraph{(b) Sub-Gaussian case.}
Given 
$\mathbb{E}_{\bm{U}}\big[e^{\lambda \delta_{ij}}\big] \le \exp\!\Big(\frac{\lambda^2\tau_{ij}^2}{2}\Big)$ for all $\lambda\in\mathbb{R}$. Then for any $t>0$, $$\mathbb{P}(\delta_{ij}\ge t)
=\mathbb{P}_{\bm{U}}\big(e^{\lambda \delta_{ij}}\ge e^{\lambda t}\big)
\le e^{-\lambda t}\,\mathbb{E}_{\bm{U}}\big[e^{\lambda \delta_{ij}}\big]
\le \exp\!\Big(-\lambda t+\frac{\lambda^2\tau_{ij}^2}{2}\Big).$$ Optimizing the RHS over $\lambda>0$ gives $\lambda^\star=t/\tau_{ij}^2$, hence $\mathbb{P}_{\bm{U}}(\delta_{ij}\ge t)\ \le\ \exp\!\Big(-\frac{t^2}{2\tau_{ij}^2}\Big).$
By the same argument for $-\delta_{ij}$,
$\mathbb{P}_{\bm{U}}(\delta_{ij}\le -t)\le \exp(-t^2/(2\tau_{ij}^2))$.
Therefore,
\[
\mathbb{P}_{\bm{U}}\big(|\delta_{ij}|\ge t\big)
\ \le\ 
\mathbb{P}(\delta_{ij}\ge t)+\mathbb{P}(\delta_{ij}\le -t)
\ \le\ 
2\exp\!\Big(-\frac{t^2}{2\tau_{ij}^2}\Big).
\]
Setting $t=|\Delta^{(0)}_{ij}|$ yields
$\mathbb{P}\big(|\delta_{ij}|\ge |\Delta^{(0)}_{ij}|\big)
\ \le\ 
2\exp\!\Big(-\frac{\Delta^{(0)2}_{ij}}{2\tau_{ij}^2}\Big).$ If $\Delta^{(0)}_{\min}=\min_{i\ne j}|\Delta^{(0)}_{ij}|>0$ and $\tau_{ij}\le\tau$ for all pairs, 
then $p_{ij}\le 2e^{-\Delta_{\min}^{(0)^2}/(2\tau^2)}$ and there are $K(K-1)$ ordered pairs, so
$\sum_{i\ne j}p_{ij}\le 2K(K-1)\exp\bigl({-\Delta_{\min}^{(0)^2}/(2\tau^2)\bigr)}$
Substituting in $(\star)$ gives
$\mathbb{P}_{\bm{U}}\{Disc(\mathcal{D},\bm{\theta}^{(0)})<c\}\ge 1-\frac{2K(K-1)}{c}\exp{\bigl(-\Delta_{\min}^{{(0)}^2}/(2\tau^2)\bigr)}$
and the shortfall from one therefore decays exponentially in the pairwise signal-to-noise ratio. \qedhere
\end{proof}

\begin{proof}[Proof of Lemma~2] We define $Q_n(\bm U)$  as in (A1).
Define the set $A(\bm U,\bm{U}^{\ast(b)})=\{\bm{U}^{\ast{(b)}}\in Q_n(\bm{ U})\}$ then on  $A(\bm U,\bm{U}^{\ast(b)}),$ $\bm{R}^{\ast(b)}=\mathcal{S}(H(\mathcal D,\bm U^{\star(b)}))=\bm R^{(0)}$. 
Define the set $B=\{Disc(\mathcal{D},\bm{\theta}^{(0)})<c\}.$
Then,
\[
\{\bm R^{(0)}\in \mathcal C_{\mathcal{V}}(\mathcal{D})\}\ \supseteq\ B\cap\Big(\bigcup_{v=1}^{|\mathcal V|} A(\bm{U},\bm{U}^{\ast(b)})\Big)
\]
Taking complements we get the inclusion failure event $\{\bm R^{(0)}\in \mathcal C_{\mathcal{V}}(\mathcal{D}^{\mathrm{obs}})\} \subseteq\ B^{\mathrm c} \cup
\Big(B\cap\bigcap_{m=1}^{|\mathcal V|}A(\bm{U},\bm{U}^{\ast(b)})^{\mathrm c}\Big).$
Using $q_n=\mathbb{P}_{\bm U}(B)$,
$\mathbb{P}_{\bm U,\mathcal{V}}\bigl(\bm R^{(0)}\notin \mathcal C_{\mathcal{V}}(\mathcal{D})\bigr)
\ \le\ (1-q_n)\ +\mathbb{P}_{\bm U,\mathcal{V}}\!\Big(B\cap\bigcap_{m=1}^{|\mathcal V|}A(\bm{U},\bm{U}^{\ast(b)})^{\mathrm c}\Big).$
Conditioning on $\bm U$ and using independence of $\{\bm U^{\star(b)}\}$ and $\bm U$ and across $b$,
\begin{align*}
\mathbb{P}_{\bm U,\mathcal{V}}\!\Big(B\cap\bigcap_{b=1}^{|\mathcal V|}A(\bm{U},\bm{U}^{\ast(b)})^{\mathrm c}\Big)
=\mathbb{E}_{\bm U}\!\Big[\mathbf 1_B\ \mathbb{P}_{\mathcal{V}|\bm{U}}\Big(\bigcap_{m=1}^{|\mathcal V|}A(\bm{U},\bm{U}^{\ast(1)})^{\mathrm c}\,\Big)\Big]
=\mathbb{E}_{\bm U}\!\Big[\mathbf 1_B\ \{1-\mathbb{P}_{\bm{U}^{\ast(1)}|\bm{U}}(A(\bm{U},\bm{U}^{\ast(1)})\}^{|\mathcal V|}\Big]\\\le \mathbb{E}_{\bm U}\!\Big[ \{1-\mathbb{P}_{\bm{U}^{\ast(1)}|\bm{U}}(A(\bm{U},\bm{U}^{\ast(1)})\}^{|\mathcal V|}\Big]\le \{1-\mathbb{E}_{\bm U}\mathbb{P}_{\bm{U}^{\ast(1)}|\bm{U}}(A(\bm{U},\bm{U}^{\ast(1)})\}^{|\mathcal V|}\Big]
\end{align*}
where the last statement follows from Jensen's inequality. From Assumption (A1) it follows that $\mathbb{E}_{\bm{U}}\mathbb{P}_{\bm{U}^{\ast(1)}|\bm{U}}(A(\bm{U},\bm{U}^{\ast(1)}))=\mathbb{P}_{\bm{U}^{\ast(1)},\bm{U}}(\bm{U}^{\ast(1)}\in Q_n(\bm{U}))>c_n$ Therefore, $$\mathbb{P}_{\bm U,\mathcal{ V}}\!\Big(B\cap\bigcap_{b=1}^{|\mathcal V|}(A(\bm{U},\bm{U}^{\ast(m)}))^{\mathrm{c}}\Big)
\ \le\ \mathbb{E}_{\bm U}\!\big[(1-c_n)^{|\mathcal V|}\big]\ =(1-c_n)^{|\mathcal V|}$$
Combining the pieces yields
$\mathbb{P}_{\bm{U},\mathcal{V}}\bigl(\bm{R}^{(0)} \notin \mathcal{C}_{\mathcal{V}}(\mathcal{D})\bigr)\le 1-q_n+(1-c_n)^{|\mathcal{ V}|}.$ Let $c_0<\frac{-1}{2}\log(1-c_n)$.

\end{proof}
\begin{proof}[Proof of Corollary~1]
By the definition of $\Gamma^{\mathcal{I}}_{\alpha}$ we have
\begin{align*}
\mathbb{P}_{\bm{U}}\bigl(\bm{R}|^{(0)}_{\mathcal I}\in \Gamma^{\mathcal{I}}_{\alpha}(\mathcal{D})\bigr)
&=
\mathbb{P}_{\bm U,\mathcal V}\!\big(\bm{R}|^{(0)}_{\mathcal I}\in \Gamma^{\mathcal{I}}_{\alpha}(\mathcal{D}),\, \bm{R}^{(0)} \in C_{\mathcal{V}}(\mathcal{D})\big)
+\mathbb{P}_{\bm U,\mathcal V}\!\big(\bm{R}|^{(0)}_{\mathcal I}\in \Gamma^{\mathcal{I}}_{\alpha}(\mathcal{D}),\, \bm{R}^{(0)} \notin C_{\mathcal{V}}(\mathcal{D})\big)\\[4pt]
&=
\mathbb{P}_{\bm U,\mathcal V}\!\big(\bm{R}|^{(0)}_{\mathcal I}\in \tilde{\Gamma}^{\mathcal{I}}_{\alpha}(\mathcal{D})\big)
+\mathbb{P}_{\bm U,\mathcal V}\!\big(\bm{R}|^{(0)}_{\mathcal I}\in \Gamma^{\mathcal{I}}_{\alpha}(\mathcal{D}),\, \bm{R}^{(0)} \notin C_{\mathcal{V}}(\mathcal{D})\big)
\;\ge\; 1-\alpha.
\end{align*}
It then follows that
\begin{align*}
    \mathbb{P}_{\bm U,\mathcal V}\!\big(\bm{R}|^{(0)}_{\mathcal I}\in \tilde{\Gamma}^{\mathcal{I}}_{\alpha}(\mathcal{D})\big)&=\mathbb{P}_{\bm U}\!\big(\bm{R}|^{(0)}_{\mathcal I}\in \Gamma^{\mathcal{I}}_{\alpha}(\mathcal{D})\big)
-\mathbb{P}_{\bm U,\mathcal V}\!\big(\bm{R}|^{(0)}_{\mathcal I}\in \Gamma^{\mathcal{I}}_{\alpha}(\mathcal{D}),\, \bm{R}^{(0)} \notin C_{\mathcal{V}}(\mathcal{D})\big)\\
&\ge 1-\alpha-\mathbb{P}_{\bm{U},\mathcal{V}}\!\big(\bm{R}^{(0)} \notin C_{\mathcal{V}}(\mathcal{D})\big).
\end{align*}
Corollary~1 follows from Lemma 2
\end{proof}
\begin{lemma}{\label{lemma 4}}For each $b\le|\mathcal V|$, fix an arbitrary ranking $\bm R\in S_K$ and define 
$E(\bm{\theta}^{(b)},\bm{R})\ =\ \Bigl\{\,S\bigl(\bm\theta^{\ast(b)}\bigr)=\bm R,\ \ g(\bm R)\;<\;\frac{c}{4K_{\mathrm{pairs}}}\Bigr\}$ then $$\mathbb{P}_{\mathcal{V}}\!\Big(\exists\,b\le|\mathcal V|:\ S(\bm\theta^{*(b)})=\bm R,\ g(\bm R)<\tfrac{c}{4K_{\mathrm{pairs}}}\Big)
\ \le\ |\mathcal V|\;\mathbb{P}_{\bm U^{\ast(1)}}(E(\bm{\theta}^{(1)},\bm{R})).$$
\end{lemma}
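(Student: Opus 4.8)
The plan is to recognize the left–hand event as a finite union over the repro index $b$, apply subadditivity of probability, and then exploit the i.i.d.\ structure of the repro draws to collapse the resulting sum into $|\mathcal V|$ copies of a single probability. Concretely, I would first rewrite
$$
\Bigl\{\exists\,b\le|\mathcal V|:\ S(\bm\theta^{\ast(b)})=\bm R,\ g(\bm R)<\tfrac{c}{4K_{\mathrm{pairs}}}\Bigr\}
=\bigcup_{b=1}^{|\mathcal V|} E(\bm\theta^{(b)},\bm R),
$$
which is immediate from the definition of $E(\bm\theta^{(b)},\bm R)$ as the intersection of the two events ``$S(\bm\theta^{\ast(b)})=\bm R$'' and ``$g(\bm R)<c/(4K_{\mathrm{pairs}})$''. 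The union bound then gives
$$
\mathbb P_{\mathcal V}\!\Bigl(\bigcup_{b=1}^{|\mathcal V|} E(\bm\theta^{(b)},\bm R)\Bigr)\ \le\ \sum_{b=1}^{|\mathcal V|}\mathbb P_{\mathcal V}\!\bigl(E(\bm\theta^{(b)},\bm R)\bigr).
$$

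The second step is to argue that every summand equals $\mathbb P_{\bm U^{\ast(1)}}(E(\bm\theta^{(1)},\bm R))$. Here I would use that $\bm\theta^{\ast(b)}=H(\mathcal D^{\mathrm{obs}},\bm U^{\ast(b)})$ with $\bm U^{\ast(1)},\dots,\bm U^{\ast(|\mathcal V|)}$ i.i.d.\ from $F_{\bm U}$, and that $H(\mathcal D^{\mathrm{obs}},\cdot)$ and $S(\cdot)$ are fixed measurable maps; hence the $S(\bm\theta^{\ast(b)})$ are identically distributed across $b$. The only remaining subtlety is the factor $\mathbf 1\{g(\bm R)<c/(4K_{\mathrm{pairs}})\}$: since $g$ is defined through the observed estimator $\hat{\bm\theta}$ and the fixed ranking $\bm R$ only, this indicator does not depend on $b$ and can be pulled out before the i.i.d.\ argument is applied to the events $\{S(\bm\theta^{\ast(b)})=\bm R\}$. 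Thus $\mathbb P_{\mathcal V}(E(\bm\theta^{(b)},\bm R))=\mathbb P_{\bm U^{\ast(1)}}(E(\bm\theta^{(1)},\bm R))$ for all $b$, and the displayed sum is exactly $|\mathcal V|\,\mathbb P_{\bm U^{\ast(1)}}(E(\bm\theta^{(1)},\bm R))$.

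For completeness I would also dispatch the degenerate case $g(\bm R)\ge c/(4K_{\mathrm{pairs}})$: then every $E(\bm\theta^{(b)},\bm R)$ is the empty event, so both sides of the claimed inequality are zero and it holds trivially. There is no real obstacle in this lemma — it is essentially a one-line union bound combined with the i.i.d.\ property of the repro noise — so the ``hard part'' is merely the bookkeeping observation that the data-dependent discordance condition is shared across all $b$ and therefore does not interfere with the identical-distribution step.
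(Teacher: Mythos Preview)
Your proposal is correct and follows exactly the paper's own argument: rewrite the existential event as $\bigcup_{b=1}^{|\mathcal V|} E(\bm\theta^{(b)},\bm R)$, apply the union bound, and use the identical distribution of the repro draws $\bm U^{\ast(b)}$ to reduce to $|\mathcal V|\,\mathbb P_{\bm U^{\ast(1)}}(E(\bm\theta^{(1)},\bm R))$. Your additional remarks on the $b$-independence of the indicator $\mathbf 1\{g(\bm R)<c/(4K_{\mathrm{pairs}})\}$ and on the degenerate case are sound extra bookkeeping that the paper leaves implicit.
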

\begin{proof}[Proof of Lemma~4]

By the union bound and identical distribution across $b$,
\begin{align*}\mathbb{P}_{\mathcal{ V}}\!\Bigl(\exists\,b\le|\mathcal V|:\ S(\bm\theta^{\ast(b)})=\bm{R},g(\bm{R})<\frac{c}{4K_{pairs}}\Bigr)
&=\mathbb{P}_{\mathcal{ V}}\!\Bigl(\bigcup_{b=1}^{|\mathcal V|}E(\bm{\theta}^{(b)},\bm{R})\Bigr)\\
&\le \sum_{b=1}^{|\mathcal V|}\mathbb{P}_{\bm U^{\ast(b)}}(E(\bm{\theta}^{(b)},\bm{R}))
\ =\ |\mathcal V|\,\mathbb{P}_{\bm U^{\ast(1)}}(E(\bm{\theta}^{(1)},\bm{R})).\end{align*}
Thus it suffices to bound 
$|\mathcal V|\,\mathbb{P}_{\bm U^{\ast(1)}}(E(\bm{\theta}^{(1)},\bm{R}))$ for a fixed $\bm{R}.$
\end{proof}
\begin{lemma}\label{lem:good-gap-ord}
If $\mathcal G_{ij}=\{|\Delta_{ij}^{(0)}+\widehat\varepsilon_{ij}|\ge |\Delta_{ij}^{(0)}|/2\}$ for $i\ne j$, and $\mathcal G_S=\bigcap_{\{i,j\}\in S}\mathcal G_{ij}$ for any subset of pairwise indices $S\subset\{(i,j):i\ne j\}$ we have $\mathbb P_{\bm{U}}(\mathcal G_S^{\complement})\le |S|\cdot 2\exp\{-\Delta^{(0)^2}_{\min}/(8\bar v_n^2)\}$.
\end{lemma}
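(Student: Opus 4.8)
The plan is to combine a union bound over the pairs in $S$ with a single pointwise sub-Gaussian tail estimate on each $\widehat\varepsilon_{ij}$. First I would write $\mathcal G_S^{\complement}=\bigcup_{\{i,j\}\in S}\mathcal G_{ij}^{\complement}$ and apply the union bound to get $\mathbb P_{\bm U}(\mathcal G_S^{\complement})\le\sum_{\{i,j\}\in S}\mathbb P_{\bm U}(\mathcal G_{ij}^{\complement})$, which reduces the problem to controlling a single pair uniformly.

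Next I would translate the event $\mathcal G_{ij}^{\complement}=\{|\Delta^{(0)}_{ij}+\widehat\varepsilon_{ij}|<|\Delta^{(0)}_{ij}|/2\}$ into a deviation event for $\widehat\varepsilon_{ij}$ alone. By the reverse triangle inequality, on $\mathcal G_{ij}^{\complement}$ one has $|\widehat\varepsilon_{ij}|\ge|\Delta^{(0)}_{ij}|-|\Delta^{(0)}_{ij}+\widehat\varepsilon_{ij}|>|\Delta^{(0)}_{ij}|/2$, so $\mathcal G_{ij}^{\complement}\subseteq\{|\widehat\varepsilon_{ij}|>|\Delta^{(0)}_{ij}|/2\}$. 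Then, invoking the sub-Gaussian property from assumption (B1)—via exactly the Chernoff argument already carried out in the proof of Lemma~1(b), which gives $\mathbb P_{\bm U}(|\widehat\varepsilon_{ij}|\ge t)\le 2\exp\{-t^{2}/(2v_{ij,n}^{2})\}$ for every $t>0$—I would take $t=|\Delta^{(0)}_{ij}|/2$ to obtain $\mathbb P_{\bm U}(\mathcal G_{ij}^{\complement})\le 2\exp\{-|\Delta^{(0)}_{ij}|^{2}/(8 v_{ij,n}^{2})\}$. Using $|\Delta^{(0)}_{ij}|\ge\Delta^{(0)}_{\min}$, $v_{ij,n}^{2}\le\bar v_n^{2}$, and monotonicity of $x\mapsto e^{-x}$ upgrades this to the pair-uniform bound $2\exp\{-(\Delta^{(0)}_{\min})^{2}/(8\bar v_n^{2})\}$, and summing over the at most $|S|$ terms gives the claim.

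This argument is essentially routine, so there is no genuine obstacle; the only place to be slightly careful is the direction of the reverse triangle inequality, and the harmless relaxation of the strict inequality in the definition of $\mathcal G_{ij}^{\complement}$ to a non-strict one when passing to the sub-Gaussian tail bound.
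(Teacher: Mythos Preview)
Your proposal is correct and follows essentially the same route as the paper: union bound over $S$, reverse triangle inequality to pass from $\mathcal G_{ij}^{\complement}$ to $\{|\widehat\varepsilon_{ij}|>|\Delta_{ij}^{(0)}|/2\}$, then the sub-Gaussian tail bound from (B1) together with $|\Delta_{ij}^{(0)}|\ge\Delta_{\min}^{(0)}$ and $v_{ij,n}^{2}\le\bar v_n^{2}$. If anything, you are more explicit than the paper, which only writes out the single-pair estimate and leaves the union bound implicit in the final $|S|$ factor.
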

\begin{proof}
By (A1), $\mathbb P_{\bm{U}}(\mathcal G_{ij}^{\complement})\le \mathbb P(|\Delta_{ij}^{(0)}+\widehat\varepsilon_{ij}|<\tfrac{1}{2}|\Delta_{ij}^{(0)}|) \le P(|\Delta_{ij}^{(0)}|-|\widehat\varepsilon_{ij}|<\tfrac{1}{2}|\Delta_{ij}^{(0)}|)=P(\tfrac{1}{2}|\Delta_{ij}^{(0)}|<|\widehat\varepsilon_{ij}|)\le 2\exp\{-\Delta^{(0)^2}_{\min}/(8\bar v_{ij,n}^2)\le2\exp\{-\Delta^{(0)^2}_{\min}/(8\bar v_n^2)\}$.
\end{proof}
\begin{lemma}\label{E upperbound}Define the set of indices $M(\bm R)\ =\ \Bigl\{(i,j):\ i< j,\ \bigl(\hat{\theta}^{\mathrm{obs}}_i-\hat{\theta}^{\mathrm{obs}}_j\bigr) (r_i-r_j)<0\Bigr\},$ and the indicator variable $Z_{ij}^{(b)}\ =\mathbb{I}\!\Bigl\{\bigl(\theta_i^{\ast(b)}-\theta_j^{\ast(b)}\bigr)\bigl(\hat{\theta}^{\mathrm{obs}}_i-\hat{\theta}^{\mathrm{obs}}_j\bigr)<0\Bigr\}$ for any $b\le\mathcal{V}$ and $\mathcal{J}=\{(i,j):i<j\}$ then for any set $T \in \{T\subseteq \mathcal{J}:\ |T|\le c\}$ and event $E(\bm{\theta}^{(1)},\bm{R}))$ defined in Lemma~\ref{lemma 4} we have  
\begin{equation}\label{prob ineq}\mathbb{P}_{\bm U^{\ast(1)}}(E(\bm{\theta}^{(1)},\bm{R})))
\ \le\
\sum_{\substack{T\subseteq M(\bm R)\\ |T|\le c}}\ 
\mathbb{P}_{\bm U^{\ast(1)}}\!\Bigl(\bigcap_{(i,j)\in M(\bm R)\setminus T}\{Z_{ij}^{(1)}=1\}\Bigr)\end{equation}

\end{lemma}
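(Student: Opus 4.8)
The plan is to establish a single set containment, $E(\bm\theta^{(1)},\bm R)\subseteq\bigcap_{(i,j)\in M(\bm R)}\{Z_{ij}^{(1)}=1\}$, and then read \eqref{prob ineq} off it, noting that the right–hand side already contains the summand indexed by $T=\emptyset$ while all its other summands are nonnegative. First I would fix $\bm R\in S_K$ and work on the event $\{S(\bm\theta^{\ast(1)})=\bm R\}$. Because $\bm R$ is a genuine permutation, on this event the coordinates of $\bm\theta^{\ast(1)}$ are pairwise distinct (otherwise $\mathcal S(\bm\theta^{\ast(1)})$ could not be a permutation), and by the definition of the ranking map we have $r_i>r_j\iff\theta_i^{\ast(1)}>\theta_j^{\ast(1)}$, i.e.\ $\operatorname{sign}(r_i-r_j)=\operatorname{sign}(\theta_i^{\ast(1)}-\theta_j^{\ast(1)})$ for every $i\ne j$.

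Next I would bring in the definition of $M(\bm R)$. For $(i,j)\in M(\bm R)$ with $i<j$ we have $(\hat\theta_i^{\mathrm{obs}}-\hat\theta_j^{\mathrm{obs}})(r_i-r_j)<0$, so $\hat\theta_i^{\mathrm{obs}}\ne\hat\theta_j^{\mathrm{obs}}$ and $\operatorname{sign}(r_i-r_j)=-\operatorname{sign}(\hat\theta_i^{\mathrm{obs}}-\hat\theta_j^{\mathrm{obs}})$; combining with the previous display yields $(\theta_i^{\ast(1)}-\theta_j^{\ast(1)})(\hat\theta_i^{\mathrm{obs}}-\hat\theta_j^{\mathrm{obs}})<0$, that is $Z_{ij}^{(1)}=1$. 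Since this holds simultaneously over all pairs in $M(\bm R)$, we obtain $\{S(\bm\theta^{\ast(1)})=\bm R\}\subseteq\bigcap_{(i,j)\in M(\bm R)}\{Z_{ij}^{(1)}=1\}$, and hence $E(\bm\theta^{(1)},\bm R)\subseteq\bigcap_{(i,j)\in M(\bm R)}\{Z_{ij}^{(1)}=1\}$, since $E$ only intersects $\{S(\bm\theta^{\ast(1)})=\bm R\}$ with the deterministic constraint $g(\bm R)<c/(4K_{\mathrm{pairs}})$. Finally, since $c>0$ the choice $T=\emptyset$ satisfies $\emptyset\subseteq M(\bm R)$ and $|\emptyset|=0\le c$, so it appears in the sum on the right of \eqref{prob ineq}, and its summand equals $\mathbb P_{\bm U^{\ast(1)}}\!\bigl(\bigcap_{(i,j)\in M(\bm R)}\{Z_{ij}^{(1)}=1\}\bigr)\ge\mathbb P_{\bm U^{\ast(1)}}(E(\bm\theta^{(1)},\bm R))$; keeping the remaining nonnegative summands only enlarges the bound, proving \eqref{prob ineq}.

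There is no real obstacle here beyond careful sign bookkeeping and the observation that a permutation target precludes ties. The one point I would flag is that the sum over subsets $T\subseteq M(\bm R)$ is far looser than what this lemma alone requires — the single term $T=\emptyset$ suffices — and is retained only because the subsequent step, which bounds each $\mathbb P_{\bm U^{\ast(1)}}\!\bigl(\bigcap_{(i,j)\in M(\bm R)\setminus T}\{Z_{ij}^{(1)}=1\}\bigr)$, exploits the freedom to discard up to $c$ troublesome pairs so that the disjoint-pair independence in (B3) and the sub-Gaussian tail bound in (B2), applied after an edge-coloring of the ordered-pair graph, can be brought to bear on the remaining pairs.
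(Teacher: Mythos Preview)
Your proof is correct and follows essentially the same approach as the paper: establish the containment $E(\bm\theta^{(1)},\bm R)\subseteq\bigcap_{(i,j)\in M(\bm R)}\{Z_{ij}^{(1)}=1\}$ via the observation that $S(\bm\theta^{\ast(1)})=\bm R$ forces every pair in $M(\bm R)$ to be discordant, then invoke the $T=\emptyset$ summand. The paper additionally splits into two cases according to whether $|M(\bm R)|>c/2$ (equivalently $g(\bm R)\ge c/(4K_{\mathrm{pairs}})$), in which case $E=\varnothing$ trivially, but your single containment argument already absorbs both cases and is slightly more economical.
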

\begin{proof}
    By definition, $4K_{\mathrm{pairs}}\,g(\bm R)$ counts the ordered pairs $(i,j)$, $i\neq j$, for which $\bm R$ orders $(i,j)$ opposite to the observed ordering induced by $\hat{\bm\theta}^{\mathrm{obs}}$. Define
If $S(\bm\theta^{\ast(1)})=\bm R$, then necessarily $Z_{ij}^{(1)}=1$ for all $(i,j)\in M(\bm R)$ (ties have probability $0$). Note that
$|M(\bm R)|\ =2 K_{\mathrm{pairs}}\,g(\bm R).$

\emph{Case 1: $|M(\bm R)|=2K_{\mathrm{pairs}}\,g(\bm R)>c/2$.}
On $\{S(\bm\theta^{\ast(1)})=\bm R\}$ we have $Disc(\mathcal{D}^{\mathrm{obs}},\bm\theta^{\ast(1)})=2|M(\bm R)|>c$, hence $E(\bm{\theta}^{(1)},\bm{R}))=\varnothing$ and $\mathbb{P}_{\bm U^{\ast(1)}}(E(\bm{\theta}^{(1)},\bm{R})))=0$. The right-hand side of \eqref{prob ineq} is nonnegative, so the inequality holds trivially.

\emph{Case 2: $|M(\bm R)|=2K_{\mathrm{pairs}}\,g(\bm R)\le c/2$.}
Then $E(\bm{\theta}^{(1)},\bm{R}))\ \subseteq\ \bigcap_{(i,j)\in M(\bm R)}\{Z_{ij}^{(1)}=1\},$
because $S(\bm\theta^{\ast(1)})=\bm R$ forces every pair in $M(\bm R)$ to be discordant relative to $\hat{\bm\theta}^{\mathrm{obs}}$. Since the family $T \in \{T\subseteq \mathcal{J}:\ |T|\le c/2\}$ contains $T=\varnothing$ and $M(\bm R) \subseteq \mathcal{J}$, we obtain
\begin{equation}\label{union}
\bigcap_{(i,j)\in M(\bm R)}\{Z_{ij}^{(1)}=1\}
\ =\ \bigcap_{(i,j)\in M(\bm R)\setminus \varnothing}\{Z_{ij}^{(1)}=1\}
\ \subseteq\ \bigcup_{\substack{T\subseteq  M(\bm R)\\ |T|\le c/2}}\ \bigcap_{(i,j)\in M(\bm R)\setminus T}\{Z_{ij}^{(1)}=1\}.
\end{equation}
Therefore, by \eqref{union}, $\mathbb{P}_{\bm U^{\ast(1)}}(E(\bm{\theta}^{(1)},\bm{R})))
\ \le\
\sum_{\substack{T\subseteq M(\bm R)\\ |T|\le c/2}}\ 
\mathbb{P}_{\bm U^{\ast(1)}}\!\Bigl(\bigcap_{(i,j)\in M(\bm R)\setminus T}\{Z_{ij}^{(1)}=1\}\Bigr)$
\end{proof}

\begin{lemma}\label{lem:matching-bound}
Let $S\subseteq\mathcal{J}=\{(i,j):1\le i<j\le K\}$ be arbitrary, and let 
$S = M_1 \,\dot\cup\,\cdots\,\dot\cup\, M_{w_0}$
denote its decomposition into $w_0$ disjoint matchings (each $M_\ell$ consists only of vertex--disjoint edges). 
Under Assumptions~\textup{(B1)}--\textup{(B3)} and on the good--gap event 
$\mathcal G_S$ of Lemma~\ref{lem:good-gap-ord}, we have,
\begin{align}\mathbb{P}_{\bm{U},\mathcal{V}}\Big(\bigcap_{(i,j)\in S}{Z^{(1)}_{ij}=1}\Big)
&\le \exp\Big(-\tfrac{\Delta^{(0)^2}_{\min}|S| }{2w_0\bar \tau_n^2}\Big)
+|S|2\exp{(-\Delta^{(0)^2}_{\min}/(8\bar v_n^2))}
\end{align}
\end{lemma}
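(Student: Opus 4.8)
The plan is to peel off a rare event on which some estimated pairwise gap is badly corrupted, condition on the latent noise $\bm U$, and then reduce the simultaneous‑discordance event over $S$ to a product of one‑sided sub‑Gaussian tails taken along a single colour class (matching) of the edge decomposition, with a pigeonhole step turning $|M_\ell|$ into $|S|/w_0$. Concretely, I would first write
\[
\mathbb{P}_{\bm U,\mathcal V}\!\Bigl(\bigcap_{(i,j)\in S}\{Z^{(1)}_{ij}=1\}\Bigr)
\;\le\;
\mathbb{P}_{\bm U,\mathcal V}\!\Bigl(\mathcal G_S\cap\bigcap_{(i,j)\in S}\{Z^{(1)}_{ij}=1\}\Bigr)
\;+\;
\mathbb{P}_{\bm U}\bigl(\mathcal G_S^{\complement}\bigr),
\]
where I take $\mathcal G_S$ to be the event $\{|\widehat\varepsilon_{ij}|\le|\Delta^{(0)}_{ij}|/2\ \text{for all }(i,j)\in S\}$ underlying Lemma~\ref{lem:good-gap-ord}; by (B1), a union bound, and the same sub‑Gaussian tail estimate used to prove that lemma, $\mathbb{P}_{\bm U}(\mathcal G_S^{\complement})\le |S|\cdot 2\exp\{-(\Delta^{(0)}_{\min})^2/(8\bar v_n^2)\}$, which is exactly the additive term in the claim.

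For the first summand I would condition on $\bm U$. On $\mathcal G_S$ each estimated gap $\widehat\theta^{\mathrm{obs}}_i-\widehat\theta^{\mathrm{obs}}_j=\Delta^{(0)}_{ij}+\widehat\varepsilon_{ij}$ is deterministic, nonzero, and has the same sign as $\Delta^{(0)}_{ij}$ since $|\widehat\varepsilon_{ij}|\le|\Delta^{(0)}_{ij}|/2$. Hence $\{Z^{(1)}_{ij}=1\}$, i.e.\ $(\theta^{\ast(1)}_i-\theta^{\ast(1)}_j)(\widehat\theta^{\mathrm{obs}}_i-\widehat\theta^{\mathrm{obs}}_j)<0$, forces $\Delta^{(0)}_{ij}+\delta^{\ast(1)}_{ij}$ to carry the sign opposite to $\Delta^{(0)}_{ij}$, which requires $\operatorname{sign}(\Delta^{(0)}_{ij})\,\delta^{\ast(1)}_{ij}\le -|\Delta^{(0)}_{ij}|$. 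This is a one‑sided deviation, so by (B2) with proxy $\sigma^2_{ij}\le\bar\tau_n^2$ and a single‑sided Chernoff bound,
\[
\mathbb{P}_{\bm U^{\ast(1)}\mid\bm U}\bigl(Z^{(1)}_{ij}=1\bigr)\;\le\;\exp\!\Bigl(-\tfrac{(\Delta^{(0)}_{ij})^2}{2\sigma^2_{ij}}\Bigr)\;\le\;\exp\!\Bigl(-\tfrac{(\Delta^{(0)}_{\min})^2}{2\bar\tau_n^2}\Bigr),
\]
with \emph{no} factor of $2$.

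Then I would pick $\ell^{\ast}$ with $|M_{\ell^{\ast}}|=\max_{1\le\ell\le w_0}|M_\ell|\ge|S|/w_0$. Since $M_{\ell^{\ast}}$ is a matching its edges are vertex‑disjoint, so by (B3) the family $\{\delta^{\ast(1)}_{ij}\}_{(i,j)\in M_{\ell^{\ast}}}$ — and hence $\{Z^{(1)}_{ij}=1\}_{(i,j)\in M_{\ell^{\ast}}}$ — is conditionally independent given $\bm U$. Dropping all edges of $S$ outside $M_{\ell^{\ast}}$ and multiplying the per‑edge bounds gives, on $\mathcal G_S$,
\[
\mathbb{P}_{\bm U^{\ast(1)}\mid\bm U}\!\Bigl(\bigcap_{(i,j)\in S}\{Z^{(1)}_{ij}=1\}\Bigr)
\;\le\;
\exp\!\Bigl(-\tfrac{(\Delta^{(0)}_{\min})^2|M_{\ell^{\ast}}|}{2\bar\tau_n^2}\Bigr)
\;\le\;
\exp\!\Bigl(-\tfrac{(\Delta^{(0)}_{\min})^2|S|}{2w_0\bar\tau_n^2}\Bigr).
\]
Integrating over $\bm U$ restricted to $\mathcal G_S$ (which only decreases the integral) and recombining with the split above yields the stated bound.

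The step I expect to be the crux is the sign‑control argument: one needs the good‑gap event to pin down the \emph{sign} of each observed gap $\widehat\theta^{\mathrm{obs}}_i-\widehat\theta^{\mathrm{obs}}_j$, not merely lower‑bound its magnitude, so that $\{Z^{(1)}_{ij}=1\}$ collapses to the single‑tail region $\{\operatorname{sign}(\Delta^{(0)}_{ij})\delta^{\ast(1)}_{ij}\le-|\Delta^{(0)}_{ij}|\}$; this is what removes the spurious $2^{|M_{\ell^{\ast}}|}$ factor a crude two‑sided bound would introduce and what makes the two error terms in the statement align. Everything else — the union‑bound split, selecting a largest colour class, and the conditional product over a matching via (B3) — is routine once this reduction is in place.
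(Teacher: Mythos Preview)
Your proof is correct, but it takes a genuinely different route from the paper's. Both arguments share the splitting on $\mathcal G_S$ and the sign-control reduction of $\{Z^{(1)}_{ij}=1\}$ to the one-sided event $\{\operatorname{sign}(\Delta^{(0)}_{ij})\delta^{\ast(1)}_{ij}\le-|\Delta^{(0)}_{ij}|\}$. The divergence is in how the joint event over $S$ is handled. The paper keeps \emph{all} edges of $S$: it applies the exponential-indicator bound $\mathbf 1\{t\le 0\}\le e^{-\lambda t}$ to every edge simultaneously, uses H\"older's inequality with exponents $w_0$ across the matchings $M_1,\dots,M_{w_0}$ to decouple them, factorises the conditional MGF within each matching by (B3), and then optimises over $\lambda$. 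This yields the intermediate bound $\exp\bigl(-\lambda\sum_{(i,j)\in S}|\Delta^{(0)}_{ij}|+\tfrac{w_0\lambda^2}{2}\sum_{(i,j)\in S}\sigma^2_{ij}\bigr)$ before specialising to $\Delta^{(0)}_{\min}$ and $\bar\tau_n^2$. Your argument instead throws away all but the largest colour class $M_{\ell^\ast}$, uses pigeonhole to get $|M_{\ell^\ast}|\ge |S|/w_0$, and multiplies per-edge one-sided sub-Gaussian tails directly. Your route is more elementary---no H\"older, no Chernoff parameter to optimise---and lands on exactly the same final bound; the paper's route gives a sharper intermediate inequality involving the full collection $\{(\Delta^{(0)}_{ij},\sigma^2_{ij})\}_{(i,j)\in S}$, which would matter only if one wanted a pair-specific version of the lemma rather than the stated worst-case form.
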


\begin{proof}
Let us split the probability of observing the event $\bigcap_{(i,j)\in S}\{Z^{(1)}_{ij}=1\}$ as follows
\begin{equation}\label{z split}
\mathbb{P}_{\bm{U},\mathcal{V}}\Big(\!\bigcap_{(i,j)\in S}\{Z^{(1)}_{ij}=1\}\Big)
\le
\mathbb{E}_{\bm{U}}\!\left[ \mathbb{P}_{\mathcal{V}|\bm{U}}\Big(\!\bigcap_{(i,j)\in S}\{Z^{(1)}_{ij}=1\}\cap G_S \Big)\right]
+\mathbb{P}_{\bm{U},\mathcal{V}}(G_S^{\complement})
\end{equation}
For each $(i,j)$, we have $\mathcal G_{ij}
\;=\;
\Bigl\{\,|\Delta^{(0)}_{ij}+\hat\varepsilon_{ij}|
\;\ge\;
\tfrac{1}{2}\,|\Delta^{(0)}_{ij}|\,\Bigr\}.$ On $\mathcal G_{ij}$ we have $|\hat\varepsilon_{ij}|<|\Delta^{(0)}_{ij}|/2$, which guarantees
$\mathrm{sign}(\Delta^{(0)}_{ij}+\hat\varepsilon_{ij})
=\mathrm{sign}(\Delta^{(0)}_{ij})
= s_{ij},$ say. Hence, on $\mathcal G_{ij}$ the flip event simplifies to
$\{Z^{(1)}_{ij}=1\}=
\{s_{ij}\,(\Delta^{(0)}_{ij}+\delta^{*(1)}_{ij})\le 0\}.$
Because $s_{ij}\,\Delta^{(0)}_{ij}=|\Delta^{(0)}_{ij}|$, we get
$\{s_{ij}\,(\Delta^{(0)}_{ij}+\delta^{*(1)}_{ij})\le 0\}=\{|\Delta^{(0)}_{ij}|+s_{ij}\,\delta^{*(1)}_{ij}\le 0\}.$ For any real $t$ and $\lambda>0$,
\(
\mathbf 1\{t\le 0\}\le e^{-\lambda t}.
\)
Applying this with $X=|\Delta^{(0)}_{ij}|+s_{ij}\,\delta^{*(1)}_{ij}$ yields $$\mathbf 1\{Z^{(1)}_{ij}=1\}\mathbf 1\{\mathcal G_{ij}\}=\mathbf 1\{s_{ij}\delta^{*(1)}_{ij}\le -|\Delta^{(0)}_{ij}|\}\}
\;\le\;
\exp\!\big[-\lambda(|\Delta^{(0)}_{ij}|+s_{ij}\,\delta^{*(1)}_{ij})\big]
\;=\;
e^{-\lambda|\Delta^{(0)}_{ij}|}\,e^{-\lambda s_{ij}\,\delta^{*(1)}_{ij}}.$$

Multiplying over indices in any set $S$ gives
$$\prod_{(i,j)\in S}\Big(\mathbf{1}\{Z^{(1)}_{ij}=1\}\mathbf 1\{\mathcal G_{ij}\}\Big)=\prod_{(i,j)\in S}\mathbf{1}\{Z^{(1)}_{ij}=1\}\mathbf 1\{\mathcal G_S\}\le
\exp\Big(-\lambda\sum_{(i,j)\in S}|\Delta_{ij}^{(0)}|\Big)
\exp\Big(-\lambda\sum_{(i,j)\in S}s_{ij}\delta_{ij}^{*(1)}\Big).$$
Conditioning on $\bm{U},$  $\mathcal G_S$ is fixed since $\hat{\varepsilon}_{ij}$ depends only on $\bm{u}^{\mathrm{rel}}$ or the generalized variable $\bm{U}.$
\[
\mathbb{P}_{\mathcal{V}|\bm{U}}\Big(\textstyle\bigcap_{(i,j)\in S}{Z^{(1)}_{ij}=1}\cap\mathcal G_S \Big)\le
\exp\Big(-\lambda\sum_{(i,j)\in S}|\Delta_{ij}^{(0)}|\Big)
\mathbb{E}_{\mathcal{V}|\bm{U}}\Big[\exp\Big(-\lambda\sum_{(i,j)\in S}s_{ij}\delta_{ij}^{*(1)}\Big)\Big].
\]

Partition $S$ into a family of matchings 
\(
M_1,\dots,M_{w_0}\subseteq S
\)
with the properties
$S \;=\; \bigcup_{\ell=1}^{w_0} M_\ell,$
$M_\ell \cap M_{\ell'} = \varnothing $ for $\ell\neq \ell',$
and each  $M_\ell$ contains only disjoint pairs $(i,j)$(edges in a matching are vertex-disjoint). We first rewrite the exponential term using the matching decomposition 
\(S=\dot\cup_{\ell=1}^{w_0} M_\ell\),
$\exp\!\Big(-\lambda\sum_{(i,j)\in S} s_{ij}\,\delta^{*(1)}_{ij}\Big)
=
\prod_{\ell=1}^{w_0}
\exp\!\Big(-\lambda\sum_{e\in M_\ell} s_{ij}\,\delta^{*(1)}_{ij}\Big).$
Define
$X_\ell = \exp\!\Big(-\lambda\sum_{e\in M_\ell} s_{ij}\,\delta^{*(1)}_{ij}\Big),$
for $\ell=1,\dots,w_0.$
Then $\mathbb{E}_{\mathcal{V}\mid\bm{U}}
\Big[\exp\!\big(-\lambda\!\sum_{(i,j)\in S} s_{ij}\,\delta^{*(1)}_{ij}\big)\Big]
=\mathbb{E}_{\mathcal{V}\mid\bm{U}}\Big[\prod_{\ell=1}^{w_0} X_\ell\Big].$

Applying Hölder's inequality with exponents \(w_0\) (so that 
\(\sum_{\ell=1}^{w_0} 1/w_0 = 1\)) gives
\[
\mathbb{E}_{\mathcal{V}\mid\bm{U}}\Big[\prod_{\ell=1}^{w_0} X_\ell\Big]
\;\le\;
\prod_{\ell=1}^{w_0}
\Big(
\mathbb{E}_{\mathcal{V}\mid\bm{U}}[X_\ell^{\,w_0}]
\Big)^{1/w_0}
\]
As $X_\ell^{\,w_0}=\exp\!\Big(-w_0\lambda\sum_{e\in M_\ell} s_{ij}\,\delta^{*(1)}_{ij}\Big)=\prod_{e\in M_\ell}
\exp\!\big(-w_0\lambda s_{ij}\,\delta^{*(1)}_{ij}\big),$
we obtain the bound
$$\mathbb{E}_{\mathcal{V}\mid\bm{U}}
\Big[\exp\!\big(-\lambda\!\sum_{(i,j)\in S} s_{ij}\,\delta^{*(1)}_{ij}\big)\Big]
\;\le\;
\prod_{\ell=1}^{w_0}
\Bigg(
\mathbb{E}_{\mathcal{V}\mid\bm{U}}
\Big[
\prod_{e\in M_\ell}
\exp\!\big(-w_0\lambda s_{ij}\,\delta^{*(1)}_{ij}\big)
\Big]
\Bigg)^{1/w_0}.$$
By (B3), within a matching the $\delta_{ij}^{*(1)}$ are independent conditional on $\bm{U}.$
$$\mathbb{E}_{\mathcal{V}\mid\bm{U}}
\Big[
\prod_{e\in M_\ell}
\exp\!\big(-w_0\lambda s_{ij}\,\delta^{*(1)}_{ij}\big)
\Big]
= \prod_{e\in M_\ell}\mathbb{E}_{\mathcal{V}\mid\bm{U}}\Big[e^{-w_0\lambda s_{ij} \delta_{ij}^{*(1)}}\Big].$$
By (B2),
$\mathbb{E}_{\mathcal{V}\mid\bm{U}}\Big[e^{-w_0\lambda s_{ij} \delta_{ij}^{*(1)}}\Big]
\le
\exp\Big(\tfrac{(w_0\lambda\sigma_{ij}^2)^2}{2}\Big).$
Hence, for each matching,
$$\Big(\prod_{e\in M_\ell}\mathbb{E}_{\mathcal{V}\mid\bm{U}}\Big[e^{-w_0\lambda s_{ij} \delta_{ij}^{*(1)}}\Big]\Big)^{1/w_0}\le\exp\Big(\tfrac{w_0\lambda^2}{2}\sum_{e\in M_\ell}\sigma_{ij}^2\Big).$$
Multiplying over all $\ell$,
$\mathbb{E}_{\mathcal{V}\mid\bm{U}}\Big[\exp\Big(-\lambda\sum_{(i,j)\in S}s_{ij}\delta_{ij}^{*(1)}\Big)\Big]\le
\exp\Big(\tfrac{w_0\lambda^2}{2}\sum_{(i,j)\in S}\sigma_{ij}^2\Big).$

Using $\sigma_{ij}^2\le \bar \tau_n^2$ from (B2) for all $e$, we obtain
\[
\mathbb{P}_{\mathcal{V}\mid\bm{U}}\Big(\textstyle\bigcap_{(i,j)\in S}{Z^{(1)}_{ij}=1}\cap\mathcal G_S\Big)
\le
\exp \Big(-\lambda\sum_{(i,j)\in S}|\Delta_{ij}^{(0)}|+\tfrac{q_0\lambda^2}{2}|S|\bar \tau_n^2\Big).
\]
Removing the conditioning on $\bm{U}$ gives the same bound unconditionally. Lemma~\ref{lem:good-gap-ord} and (\ref{z split}) imply
\begin{align*}\mathbb{P}_{\bm{U},\mathcal{V}}\Big(\bigcap_{(i,j)\in S}{Z^{(1)}_{ij}=1}\Big)
&\le \exp\Big(-\lambda\sum_{(i,j)\in S}|\Delta_{ij}^{(0)}|+\tfrac{w_0\lambda^2}{2}|S|\bar{\tau}_n^2\Big)
+|S|2\exp{(-\Delta^{(0)^2}_{\min}/(8\bar v_n^2))}\\
&\le \exp\Big(-|S|\{\lambda|\Delta^{(0)}_{\min}|-\tfrac{w_0\lambda^2}{2}\bar{\tau}_n^2\}\Big)
+|S|2\exp{(-\Delta^{(0)^2}_{\min}/(8\bar v_n^2))}
\end{align*}
We note that the function 
$\phi(\lambda)=\lambda\Delta^{(0)}_{\min}-\frac{w_0\lambda^2}{2}\bar \tau_n^2$
is concave in $\lambda$ with a unique maximum attained at 
\(
\lambda^\ast=\frac{\Delta^{(0)}_{\min}}{w_0\bar \tau_n^2},
\)
for which
$\phi(\lambda^\ast)
=
\frac{\Delta_{\min}^{(0)2}}{2w_0\bar \tau_n^2}.$
Since the bound above holds for all $\lambda>0$, it holds in particular at $\lambda=\lambda^\ast$, giving
\[
\exp\Big(-|S|\{\lambda\Delta^{(0)}_{\min}-\tfrac{w_0\lambda^2}{2}\bar \tau_n^2\}\Big)
\;\le\;
\exp\Big(-|S|\tfrac{\Delta_{\min}^{(0)2}}{2w_0\bar \tau_n^2}\Big).
\]
 Using the optimal $\lambda $ we get the required inequality
\begin{align*}\mathbb{P}_{\bm{U},\mathcal{V}}\Big(\bigcap_{(i,j)\in S}{Z^{(1)}_{ij}=1}\Big)
&\le \exp\Big(-|S|\tfrac{\Delta^{(0)^2}_{\min} }{2w_0\bar \tau_n^2}\Big)
+|S|2\exp{(-\Delta^{(0)^2}_{\min}/(8\bar v_n^2))}
\end{align*}
\end{proof}
\begin{proof}[Proof of Lemma~3] Let $C_1=\tfrac{\Delta^{(0)^2}_{\min} }{2w_0\bar \tau_n^2}$ and $C_2=\Delta^{(0)^2}_{\min}/(8\bar v_n^2).$ 
 Let $T=\{(i,j): (i,j)\in \mathcal{J}\}$ with $T\le c/2$. For a fixed $\bm{R}\in S_K$ we have 
\begin{equation}
\label{sum}
\mathbb P_{\bm{U},\mathcal{V}}\!\Big(\bigcap_{(i,j)\in M(\bm{R})\setminus T}\{Z_{ij}^{(1)}=1\}\Big)
\ \le\ \exp\!\big\{-C_1\,(|M(\bm{R})|-|T|)\big\}\ +\ (|M(\bm{R})|-| T|)2e^{-C_2}
\end{equation}
 We now bound each term on the right-hand side using that, from Lemma~\ref{E upperbound} 
$\bm{R}$ in the candidate region, we have $g(\bm{R})<c/(4K_{\textrm{pairs}})$ or $|M(\bm{R})|<c/2.$
Now using $|M(\bm R)|-|T|\ge |M(\bm R)|- c/2$
the first term of (\ref{sum}) becomes $\exp\!\big\{-C_1\,(|M(\bm{R})|-c/2)\big\}.$ Again $|M(\bm R)|-|T|\le c/2$ so the second term can be bounded by \begin{equation}\label{second term}
(|M(\bm{R})|-| T|)2e^{-C_2}\le (c/2) \exp\!\big\{-C_2\}.1\le c.\exp\!\big\{-C_2\big\}\exp\!\big\{-C_1\,(|M(\bm{R})|-c/2)\big\}.    
\end{equation}
Defining $C_3=1+ ce^{-C_2}=1+c\exp\!\big\{-\Delta^{(0)^2}_{\min}/(8\bar v_n^2)\big\}$ and combining (\ref{sum}) and (\ref{second term}) we get \begin{equation}
\label{upper bound}
P_{\bm{U},\mathcal{V}}\!\Big(\bigcap_{(i,j)\in M(\bm{R})\setminus T}\{Z_{ij}^{(1)}=1\}\Big)
 \le C_3 e^ {-C_1\,(|M(\bm{R})|-c/2)}
 \end{equation}
We now sum this bound over all subsets $T\subseteq M(\bm R)$ with 
$|T|\le c/2$.  Writing $t=|T|$ and using that 
$\binom{|M(\bm R)|}{t}$ subsets have cardinality $t$, we obtain
\begin{equation}
   \label{eq:two-term-bound}
   \sum_{\substack{T\subseteq M(\bm R)\\ |T|\le c/2}}\ 
\mathbb{P}_{\bm U^{\ast(1)}}\!\Bigl(\bigcap_{(i,j)\in M(\bm R)\setminus T}\{Z_{ij}^{(1)}=1\}\Bigr)=\sum_{t = 0}^{c/2}
\binom{ M(\bm R)}{t}
\max_{|T| = t}
\mathbb{P}\!\Big(\bigcap_{e \in  M(\bm R) \setminus T} \{ Z^{(1)}_{ij} = 1 \}\Big)\end{equation}

Using the combinatorial bound $\binom{|M(\bm R)|}{t}\le |M(\bm R)|^t$, we have
\begin{equation}
\label{combinatorial bound}
\sum_{t=0}^{c/2} \binom{|M(\bm R)|}{t}
\;\le\;
\sum_{t=0}^{c/2} |M(\bm R)|^t
\;\le\;
(c/2+1)\,|M(\bm R)|^{c/2}\le (c/2+1)\,{(c/2)}^{c/2}
\end{equation}
Combining \eqref{upper bound} and \eqref{combinatorial bound} gives the overall bound
$$
\sum_{\substack{T\subseteq M(\bm R)\\ |T|\le c/2}}
\mathbb{P}_{\bm U^{\ast(1)}}\!\Bigl(
\bigcap_{(i,j)\in M(\bm R)\setminus T} \{Z^{(1)}_{ij}=1\}
\Bigr)\le \min \{(c/2+1)\,{c/2}^{c/2}C_3\exp\!\bigl\{-C_1(|M(\bm R)|-t)\bigr\},1\}
$$
Let $C
_4=(c/2+1)\,{(c/2)}^{c/2}(1+cC_3)=(c/2+1)\,{(c/2)}^{c/2}(1+c+c^2\exp(-\Delta^{(0)^2}_{\min}/(8\bar v_n^2)))).$
Then 
\begin{align}\label{final bound on prob}
&\mathbb{P}_{\mathcal V}\!\Bigl(
\exists\, b\le|\mathcal V|:\ 
S(\bm\theta^{\ast(b)})=\bm r,\ 
g(\bm r)<\tfrac{c}{4K_{\mathrm{pairs}}}
\Bigr)
\\[4pt]
&\qquad\le\;
\min\{|\mathcal V|\,
C_4\,
\exp\!\bigl\{-C_1\bigl(|M(\bm r)|-t\bigr)\bigr\},1\}
\\[6pt]
&\qquad \le \min\{|\mathcal V|\,
C_4\,\exp\!\bigl\{-C_1\bigl(|M(\bm r)|-c/2\bigr)\bigr\},1\}
\qquad (t\le c/2)
\\[6pt]
&\qquad=\;\min\Big\{
\exp\!\Bigl\{
\log|\mathcal V|
+\log C_4
+\tfrac{C_1 c}{2}
-2C_1K_{\mathrm{pairs}}\,g(\bm r)
\Bigr\},1\Big\}
\\[6pt]
&\qquad=\;
\min\Big\{\exp\!\Bigl\{
-2C_1K_{\mathrm{pairs}}
\Bigl(
g(\bm r)
-\tfrac{\frac{c}{2}+\log C_4+\log|\mathcal V|}{2C_1K_{\mathrm{pairs}}}
\Bigr)
\Bigr\},1\Big\}.
\end{align}
By definition,
\[
\bigl|\mathcal C_{\mathcal{V}}(\mathcal{D})\bigr|
=
\sum_{\bm R\in S_K}
\mathbf 1\Bigl\{
\exists\, b\le|\mathcal V|:\ 
S(\bm\theta^{\ast(b)})=\bm R,\ 
g(\bm R)<\tfrac{c}{4K_{\mathrm{pairs}}}
\Bigr\}.
\]
Taking expectations and using linearity,
\[
\mathbb{E}_{\bm U,\mathcal V}\bigl|\mathcal C_{\mathcal{V}}(\mathcal{D})\bigr|
=
\sum_{\bm R\in S_K}
\mathbb{P}_{\bm U,\mathcal V}
\Bigl(
\exists\, b\le|\mathcal V|:\ 
S(\bm\theta^{\ast(b)})=\bm R,\ 
g(\bm R)<\tfrac{c}{4K_{\mathrm{pairs}}}
\Bigr).
\]
Applying \eqref{final bound on prob} to each $\bm R$ gives
\[
\mathbb{E}_{\bm U,\mathcal V}\bigl|\mathcal C_{\mathcal{V}}(\mathcal{D})\bigr|
\;\le\;
\sum_{\bm R\in S_K}
\min\Bigl\{
\exp\bigl(-C_5\bigl(g(\bm R)-\tilde g\bigr)\bigr),\,1
\Bigr\}.
\]
Finally, split the sum according to whether $g(\bm R)\ge\tilde g$ or
$g(\bm R)<\tilde g$:
\[
\begin{aligned}
\mathbb{E}_{\bm U,\mathcal V}\bigl|\mathcal C_{\mathcal{V}}(\mathcal{D})\bigr|
&\le
\sum_{\bm R:\,g(\bm R)\ge\tilde g}
\exp\bigl(-C_5\bigl(g(\bm R)-\tilde g\bigr)\bigr)
\;+\;
\sum_{\bm R:\,g(\bm R)<\tilde g} 1
\\[4pt]
&=
\sum_{\bm R:\,g(\bm R)\ge\tilde g}
\exp\bigl(-C_5\bigl(g(\bm R)-\tilde g\bigr)\bigr)
\;+\;
\bigl|\{\bm R:\ g(\bm R)<\tilde g\}\bigr|.
\end{aligned}
\]
\[
\text{where }
\tilde g
=
\frac{
\frac{c}{2}
+\log C_4
+\log|\mathcal V|
}{
C_5
},
\qquad
C_5
= 
2\,\frac{\Delta_{\min}^{(0)\,2}}{2 w_0 \bar\tau_n^2}
K_{\mathrm{pairs}}
=
\frac{\Delta_{\min}^{(0)\,2}}{w_0 \bar\tau_n^2}\,
K_{\mathrm{pairs}}\]\end{proof}
\end{document}